\newtheorem{theorem}{Theorem}
\newtheorem{lemma}{Lemma}
\newtheorem{remark}{Remark}
\newtheorem{corollary}{Corollary}
\newtheorem{proposition}{Proposition}
\newtheorem{example}{Example}
\newtheorem{assumption}{Assumption}
\def\bb{\mathbb}
\def\det{\mathop{\rm det}}
\def\diag{\mathop{\rm diag}}
\def\max{\mathop{\rm max}}
\def\min{\mathop{\rm min}}
\def\BibTeX{{\rm B\kern-.05em{\sc i\kern-.025em b}\kern-.08em
    T\kern-.1667em\lower.7ex\hbox{E}\kern-.125emX}}
\begin{document}
\title{Fully Distributed State Estimation for Multi-agent Systems and its Application in Cooperative Localization}
\author{Shuaiting Huang, Haodong Jiang, Chengcheng Zhao, Peng Cheng, Junfeng Wu* 
\thanks{S. Huang, C. Zhao, and P. Cheng are with the College of Control Science and Engineering, Zhejiang University, Hangzhou 310027, China (e-mails: \{shuait\_huang, chengchengzhao, lunarheart\}@zju.edu.cn). H. Jiang and J. Wu are with the School of Data Science, the Chinese University of Hong Kong, Shenzhen, China (email: haodongjiang@link.cuhk.edu.cn, junfengwu@cuhk.edu.cn).}}

\markboth{Journal of \LaTeX\ Class Files,~Vol.~18, No.~9, September~2020}%
{How to Use the IEEEtran \LaTeX \ Templates}

\maketitle

\begin{abstract}
In this paper, we investigate the distributed state estimation problem for a continuous-time linear multi-agent system (MAS) composed of $\mathit{m}$ agents and monitored by the agents themselves. To address this problem, 
we propose a distributed observer that enables each agent to reconstruct the state of the MAS.
The main idea is to let each agent $\mathit{i}$ recover the state of agent $\mathit{j}$ by using leader-follower consensus rules to track agent $\mathit{j}$'s state estimate, which is generated by agent $\mathit{j}$ itself using a Luenberger-like estimation rule. 
Under the assumptions of node-level observability and topological ordering consistency,
we show that the estimation error dynamics are
stabilizable if and only if the communication graph is strongly connected.
Moreover, we discuss the fully distributed design of the proposed observer, assuming that the agents only know basic MAS configuration information, such as the homogeneity and the maximum number of allowable agents. 
This design ensures that the proposed observer functions correctly when agents are added or removed. 
Building on this, we consider cooperative localization as a distributed estimation problem and develop two fully distributed localization algorithms that allow agents to track their own and other agents' positions (and velocities) within the MAS.
Finally, we conduct simulations to demonstrate the effectiveness of our proposed theoretical results.
\end{abstract}

\begin{IEEEkeywords}
Cooperative localization, distributed observer, leader-follower consensus, multi-agent systems, state estimation.
\end{IEEEkeywords}

\section{Introduction}
\label{sec:introduction}
\IEEEPARstart{D}{istributed} state estimation (DSE) is a fundamental focus of multi-agent systems (MASs) due to its broad applications in domains such as robotics, autonomous vehicles, smart grid, and sensor networks~\cite{monticelli2000electric,ghabcheloo2009coordinated,soares2015simple}.
% The DSE problem for MASs refers to the design of a set of state estimate updates and information exchange rules, namely distributed estimator, to enable each agent to reconstruct the MAS state based only on its local measurement and communication. 
Over the past few decades, a significant amount of research has been conducted on DSE \cite{huang2022distributed}.
% which can be categorized into two groups depending on the underlying framework: Kalman-filter-based approaches and Luenberger-observer-based approaches \cite{huang2022distributed}.
Kalman-filter-based approaches, known as distributed Kalman filters (DKFs), aim to acquire optimal estimates by minimizing estimation error covariance in the presence of Gaussian noise. 
In particular, Rao and Durrant-Whyte proposed a distributed implementation of the centralized Kalman filter in \cite{rao1991fully}, initiating the study of distributed estimation.
Following this, consensus-based DKFs emerged as a representative class of DKFs.
Olfati-Saber established the benchmark for consensus-based DKFs in \cite{olfati2005distributed}, with their stability and optimality later analyzed in \cite{khan2011stability}.
These DKFs operate on a two-time-scale framework, where multiple consensus iterations occur between consecutive estimate updates.
Subsequent works \cite{battistelli2014consensus,marelli2021distributed,hao2022distributed} proposed single-time-scale DKFs, where data fusion and estimate updates are performed simultaneously.

On the other hand, distributed observers (DOs), or Luenberger-observer-based approaches, focus on ensuring the stability of the estimation error dynamics for the systems without precise statistical models of the process and measurement noise.
Noise is not a primary concern here, and researchers typically consider designing DOs for noiseless systems or systems with bounded noise/disturbance.
For example, Rego \emph{et al.} proposed a consensus-based distributed Luenberger observer for systems with bounded noise in \cite{rego2016design}.
Park and Martins applied decentralized control theory to develop a distributed observer \cite{park2016design}.
Romeo \emph{et al.} reformulated the DSE problem as parameter estimation and proposed a distributed observer with finite convergence time \cite{ortega2020distributed}.  
Additionally, several DOs were designed based on observability/detectability decomposition \cite{mitra2018distributed,kim2019completely,del2019distributed}.
However, most existing distributed estimation algorithms are inherently ``non-resilient'' to node changes in the MAS (e.g., agents joining or leaving), as their construction relies on global information about the MAS, including knowledge of the  communication graph topology and the output matrices of other agents.
In practice, such information is often unavailable to individual agents, and node changes in MASs are common due to component failures, attacks, communication losses, agents that temporarily sleep to save energy or task reassignments.
Few existing distributed estimation algorithms resilient to such MAS changes have certain limitations. 
For example, the plug-and-play distributed estimator in \cite{riverso2013plug} and \cite{siljak2013decentralized} enables each agent to estimate only its own state, while the plug-and-play distributed observer in \cite{yang2023plug} requires an undirected communication graph. 
To overcome these limitations, this paper proposes a fully distributed observer that enables each agent to estimate the entire state of the MAS and functions correctly when facing the changes of the MAS under relatively mild assumptions.

On the other hand, cooperative localization, as the premise for MASs to enable high-level collective tasks autonomously, has attracted much attention by developing different protocols (see, for instance,\cite{oh2012formation,oh2013formation,zhu2019multi,roumeliotis2002distributed,chang2021resilient}, and references therein).
To mention a few, Oh and Ahn developed a distributed position estimation law for agents to obtain their own positions using relative position measurements \cite{oh2012formation}.
They also proposed a localization algorithm 
with distance and angle measurement sensors under  
local reference frame alignment~\cite{oh2013formation}.
Two extended-Kalman-filter-based approaches are respectively proposed to ensure that each agent obtains the absolute positions of all agents \cite{roumeliotis2002distributed,chang2021resilient}. 
However, all these localization algorithms are distributed in operation but centralized in construction, making them non-resilient to changes in MASs. 
To address this, this paper provides a new fully distributed solution to the cooperative localization problem by formulating it as a DSE problem and applying the proposed distributed estimation framework.

Motivated by the above analysis, we study a fully distributed state estimation problem for the MASs with couplings among agents and use a widely studied class of multi-agent cooperative localization problems to highlight the effectiveness of the proposed results.
By imposing mild constraints on the MASs, this paper develops novel fully distributed observer and localization algorithms.
The main contributions are summarized as follows.
For the MASs with the state-to-state coupling and the state-to-output coupling between different agents,
i) we propose a distributed observer that enable each agent to reconstruct the entire MAS state, and establish necessary and sufficient conditions for stabilizing the estimation error dynamics under both full and partial input accessibility;
ii) to ensure resilience to node changes of the MAS, we discuss how the proposed observer can be designed in a fully distributed manner over both undirected and directed communication graphs.
For the cooperative localization problem of the MAS comprising the well-known single- and double-integrator modeled agents,
iii) we provide a rigorous observability analysis and develop a distributed directed acyclic graph construction algorithm to address the couplings among agents caused by relative position measurements; 
iv) we propose two fully distributed localization algorithms for single- and double-integrator MASs based on our distributed estimation framework. These algorithms enable each agent to reconstruct all agents' positions (and/or velocities) while admitting a fully distributed design.

The rest of the paper is organized as follows.
In Section~\ref{sec2}, we formulate the DSE problem of our interest.
In Section~\ref{sec3}, we introduce the proposed distributed observer, give a stabilizability analysis and 
discuss the fully distributed design of the observer.
In Section~\ref{sec4}, we discuss the extension of our framework to handle cooperative localization problems. 
% Two fully distributed localization strategies are proposed for MASs consisting of single- and double-integrator modeled agents, respectively.
We present simulation results in Section~\ref{sec5} and provide concluding remarks in Section~\ref{sec6}.

% \noindent\textbf{Notations}.
% Let $\bb{R}$ and $\bb{N}^{+}$ denote the set of real numbers and the set of positive integers. 
% Let $ {A}^\top$ represent the transpose of a matrix $ {A}$.
% The matrix $ {I}_m$ denotes the $m$-dimensional identity matrix.
% For matrices $X_1,\ldots,X_m$, $\textup{diag}(X_1,\ldots,X_m)$ denotes a block diagonal matrix whose $i$th diagonal block is $X_i$.
% The symbol $\mathbf{1}_m$  denotes an $m\times 1$ all-ones column vector.
% The symbol $b_i$ denotes a vector that has $1$
%  as the $i$th
% component and zeros elsewhere.
% The symbol $\mathbf{0}$ denotes a zero matrix whose dimensions are clear from the context.
% The symbol $\otimes$ denotes a Kronecker product.

\section{Preliminaries and Problem Formulation}\label{sec2}
\subsection{Preliminaries of Graph Theory}\label{sec21}
A weighted directed graph, denoted by $\mathrm G=(\mathcal V, \mathcal E, W)$, is comprised of a finite nonempty node set $\mathcal V=\{1,\ldots, m\}$, an edge set $\mathcal E\subseteq\mathcal V\times\mathcal V$, and a weighted adjacency matrix $W=[w_{ij}]\in \bb{R}^{m\times m}$.
An edge from node $j$ to node $i$ in set $\mathcal V$, denoted by $(j,i)\in \mathcal V$, implies that node $j$ can transmit information to node $i$.
The weight $w_{ij}$ is assigned to
$(j,i)\in\mathcal E$, where $w_{ij}>0$ if and only if $(j,i)\in\mathcal E$, and $w_{ij}=0$ otherwise.
The in-neighbor set and out-neighbor set of node $i$ are defined as $\mathcal N_i=\{j\in\mathcal V/\{i\}|(j,i)\in\mathcal E\}$ and $\mathcal S_i=\{j\in\mathcal V/\{i\}|(i,j)\in\mathcal E\}$, respectively.
The in-degree of node $i$ is the number of in-neighbors of $i$.
Node $i$ is called a source if its in-degree equals $0$.
A directed path from node $i_0$ to node $i_l$ is an ordered sequence of edges $(i_{k-1},i_{k})$, $k\in\{1,\ldots, l\}$. 
A directed graph $\mathrm G$ is strongly connected if there exists a directed path between an arbitrary pair of distinct nodes $i$ and $j$ in set $\mathcal V$.
The Laplacian matrix ${L}=[l_{ij}]\in \bb{R}^{m\times m}$ of graph $\mathrm G$ is defined as $ {L}= {D}-{W}$, where $ {D}$ is a diagonal matrix with the $i$th diagonal element being $d_i=\sum_{j=1}^{m}w_{ij}$.
A topological ordering of graph $\mathrm G$ is a sequence of all the nodes such that, for every directed edge, its starting node is listed earlier in the ordering than the ending node.
% Given a matrix $ {W}=[w_{ij}]\in \mathbb{R}^{m\times m}$ satisfying $w_{ii}=0$ and $w_{ij}\ge0$, $i,j=1, \ldots, m$, $i\neq j$, we can always define a directed graph $\mathrm G$, called the directed graph of $W$, whose weighted adjacency matrix is $W$.
We introduce the following lemma.
\begin{lemma}
\label{le:spectrum of L}
(Lemma 3.3 in \cite{ren2005consensus})
Given a weighted directed graph $\mathrm G$, the Laplacian matrix $L$ has at least one zero eigenvalue and all nonzero eigenvalues have strictly-positive real parts.
Furthermore, $L$ has exactly one zero eigenvalue if and only if $\mathrm G$ has a spanning tree.
\end{lemma}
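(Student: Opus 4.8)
The plan is to split the statement into a spectral-localization part and a combinatorial multiplicity part. First I would dispatch the two easy claims. Every row of $L$ sums to zero, since the $i$th row has diagonal entry $d_i=\sum_{j}w_{ij}$ and off-diagonal entries $-w_{ij}$; hence $L\mathbf 1=0$, so $0$ is an eigenvalue with eigenvector $\mathbf 1$, giving ``at least one zero eigenvalue.'' To locate the remaining eigenvalues I would invoke the Ger\v{s}gorin disc theorem: the $i$th disc is centered at $d_i$ with radius $\sum_{j\neq i}|l_{ij}|=\sum_{j\neq i}w_{ij}=d_i$, so each disc lies in the closed right half-plane and meets the imaginary axis only at the origin. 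Consequently every eigenvalue has nonnegative real part, and any eigenvalue lying on the imaginary axis must equal $0$; thus every nonzero eigenvalue has strictly positive real part.

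For the ``furthermore'' part, since all nonzero eigenvalues have positive real part, proving ``exactly one zero eigenvalue'' is equivalent to showing that the algebraic multiplicity of $0$ equals $1$, i.e. $\rank(L)=m-1$. The plan is to reduce this to the strongly connected components (SCCs) of $\mathrm G$. I would order the nodes by a topological ordering of the condensation (the acyclic graph of SCCs, sources first), so that $L$ becomes block lower-triangular with diagonal blocks $L_{bb}$ indexed by the SCCs $C_b$. Then $\det(L-\lambda I)=\prod_b\det(L_{bb}-\lambda I)$, so the spectrum of $L$ is the union, with algebraic multiplicities, of the spectra of the $L_{bb}$, and I only need to count how many diagonal blocks are singular.

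Next I would analyze each block. If $C_b$ is a source SCC, no edge enters it from another component, so for $i\in C_b$ all in-neighbors lie in $C_b$ and $L_{bb}$ is exactly the Laplacian of the strongly connected induced subgraph on $C_b$; being an irreducible singular M-matrix with zero row sums, its zero eigenvalue is algebraically simple (Perron--Frobenius), contributing exactly one zero eigenvalue. If $C_b$ is not a source, then $L_{bb}=\hat L_{bb}+E_b$, where $\hat L_{bb}$ is the induced-subgraph Laplacian and $E_b$ is a nonnegative, \emph{nonzero} diagonal matrix recording the weights of edges entering $C_b$ from earlier components; since $\hat L_{bb}$ is an irreducible singular M-matrix and $E_b$ is nonnegative and nonzero, $L_{bb}$ is a nonsingular M-matrix and contributes no zero eigenvalue. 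Hence the algebraic multiplicity of $0$ equals the number of source SCCs, which is $1$ precisely when the condensation has a unique source, i.e. when some node reaches all others, i.e. when $\mathrm G$ has a spanning tree.

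I expect the main obstacle to be the block analysis in the non-source case: rigorously arguing that adding a nonnegative nonzero diagonal perturbation to the irreducible singular Laplacian $\hat L_{bb}$ yields a nonsingular matrix, and, dually, that a source SCC contributes an algebraically (not merely geometrically) simple zero. Both rest on the irreducibility/Perron--Frobenius theory of M-matrices. A more self-contained alternative for this step would be an extremal-component argument applied to $L_{bb}x=0$: on a source block the maximum-attaining entries propagate along intra-SCC edges to force $x$ constant on $C_b$, whereas when $E_b\neq0$ the extra diagonal term forces that constant to be $0$, ruling out a nontrivial null vector.
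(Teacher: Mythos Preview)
Your argument is correct, but there is nothing in the paper to compare it against: the paper does not prove Lemma~\ref{le:spectrum of L} at all, it merely quotes it from \cite{ren2005consensus}. Your Ger\v{s}gorin step for the spectral localization and the SCC block-triangularization for the multiplicity of the zero eigenvalue are exactly the standard route (and essentially the argument in the cited reference). The one point that genuinely needs care---which you correctly flagged---is showing that a non-source diagonal block $L_{bb}=\hat L_{bb}+E_b$ is nonsingular; either the irreducible M-matrix/Perron--Frobenius argument or the maximum-modulus propagation you sketched handles this once strong connectivity of the SCC is used. One small remark: when you set up the block lower-triangular form, make sure you keep track of the paper's convention that $w_{ij}$ is the weight of the edge \emph{from $j$ to $i$}, so that $d_i=\sum_j w_{ij}$ is an in-degree; with sources of the condensation listed first, the off-diagonal blocks above the diagonal indeed vanish under this convention, so your ordering is the right one.
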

% A directed graph $\mathrm G_p=(\mathcal V_p, \mathcal E_p)$ is called a subgraph of graph $\mathrm G$ if it is satisfied that $\mathcal V_p\subseteq \mathcal V$ and $\mathcal E_p\subseteq\mathcal E$.
% A subgraph $\mathrm G_p=(\mathcal V_p, \mathcal E_p)$ of graph $\mathrm G$ is called a spanning tree of graph $\mathrm G$ if graph $\mathrm G_p$ is a directed tree and $\mathcal V_p=\mathcal V$. 
% A directed graph $\mathrm G$ has a spanning tree if a spanning tree is a subgraph of graph $\mathrm G$.
% A directed graph $\mathrm G$ has a spanning tree if and only if graph $\mathrm G$ has at least one node which can reach all other nodes.
% A directed acyclic graph is a directed graph whose nodes and edges will never form a closed loop. 

\subsection{System Model}\label{sec22}

Consider a multi-agent system consisting of $m$ agents modeled by the following continuous-time linear time-invariant systems:

\begin{small}
\begin{align}
&\dot{x}_i= {A}_{ii}x_i+\sum_{j\in\mathcal N_i(\mathrm{G}_{\rm s})}A_{ij}x_j+B_{ii}u_i \label{def:MAS_dynamics}\\
& y_i=C_{ii}x_i+\sum_{j\in\mathcal N_i(\mathrm{G}_{\rm o})}C_{ij}x_j,~i\in\{1,\ldots, m\},\label{def:MAS_observation}
\end{align}
\end{small}where $x_i\in \mathbb{R}^{n_i}$, $u_i\in \mathbb{R}^{k_i}$, ${y}_i\in \mathbb{R}^{p_i}$, $A_{ii}\in\mathbb{R}^{n_i\times n_i}$, $B_{ii}\in\mathbb{R}^{n_i\times k_i}$ and $C_{ii}\in\mathbb{R}^{p_i\times n_i}$ are the state, input, output, system matrix, input matrix and observation matrix of agent $i$, respectively.
The matrices ${A}_{ij}\in\mathbb{R}^{n_i\times n_j}$ and $ {C}_{ij}\in\mathbb{R}^{p_i\times n_j}$ are the state- and output-coupling matrices between agents $i$ and $j$, respectively, describing the ``state-to-state'' and ``state-to-output'' relationships. 
These relationships are inspired by real-world systems, including vehicle platooning and water distribution systems.
We take the vehicle platooning system as an example.
In a vehicle platooning system, the physical coupling induced by wind resistance between vehicles, as discussed in \cite{al2011suboptimal}, exemplifies a ``state-to-state'' relationship.
Likewise, the radar-measured intervehicle distance between a vehicle and its preceding counterpart in a platoon can be modeled as a ``state-to-output'' relationship \cite{huang2023plug}.

The dynamics graph $\mathrm{G}_{\rm s}=(\mathcal V, \mathcal E_{\rm s})$ and the sensing graph $\mathrm{G}_{\rm o}=(\mathcal V, \mathcal E_{\rm o})$ are introduced to characterize the ``state-to-state'' and ``state-to-output'' relationships, respectively.
The node set $\mathcal V=\{1,\ldots, m\}$ is the collection of the agent indices, with the edge sets $\mathcal E_{\rm s}, \mathcal E_{\rm o} \subseteq\mathcal V\times\mathcal V$. 
Edges $(j, i) \in \mathcal{E}_{\rm s}$ and $(j, i) \in \mathcal E_{\rm o}$ if and only if ${A}_{ij}\neq  \mathbf{0}$ and $ {C}_{ij}\neq \mathbf{0}$, respectively, indicating agent $j$'s state affects the state and output of agent $i$.
The symbols $\mathcal N_i(\mathrm{G}_{\rm s})$ and $\mathcal N_i(\mathrm{G}_{\rm o})$ denote the set of in-neighbors of agent $i$ in the graphs $\mathrm{G}_{\rm s}$ and $\mathrm{G}_{\rm o}$, respectively.

By stacking the states of $m$ agents, we obtain the overall dynamics of the MAS as follows:

\begin{small}
\begin{align}
&\dot{x}= Ax+Bu \label{eq:sys1}  \\
& y= Cx \label{eq:sys2}
\end{align}
\end{small}where $x=[{x}_1^\top, \ldots,  {x}_m^\top]^\top\in\mathbb{R}^{n}$, $u=[{u}_1^\top, \ldots,  {u}_m^\top]^\top\in\mathbb{R}^{k}$, $y=[{y}_1^\top, \ldots,  {y}_m^\top]^\top\in\mathbb{R}^{p}$, $A=[A_{ij}]\in\mathbb{R}^{n\times n}$, $B=\textup{diag}(B_{11},\ldots, B_{mm}) \in\mathbb{R}^{n\times k}$, $C=[C_{ij}]\in\mathbb{R}^{p\times n}$.
The communication graph of the MAS \eqref{eq:sys1}-\eqref{eq:sys2}, denoted by $\mathrm{G}_{\rm c}=(\mathcal V, \mathcal E_{\rm c})$, is introduced to model how the agents exchange information with their neighbors through a communication network, where the edge set $\mathcal E_{\rm c}\subseteq\mathcal V\times\mathcal V$ is defined so that $(j, i)\in \mathcal E_{\rm c}$ indicates the capability of agent $j$ to send information to agent $i$.

\subsection{Problem Formulation}
\label{sec23}
The objective of this paper is to design a fully distributed observer structure and conduct a stability analysis of the resulting estimation error dynamics under full and partial input information. 
The proposed observer must enable each agent $i\in\mathcal V$ to achieve stable estimation of the entire state $x$ of the MAS \eqref{eq:sys1}-\eqref{eq:sys2}, using only local measurement~\eqref{def:MAS_observation} and information transmitted by neighbors via the communication graph $\mathrm{G}_{\rm c}$.
Furthermore, we aim to demonstrate that the observer can be constructed and implemented in a fully distributed manner.
To achieve this, we impose the following assumptions.

\begin{assumption}[Node-level Observability]
\label{ass:local observability}
For each agent $i\in\mathcal V$, the matrix pair $({A}_{ii},  {C}_{ii})$ is observable.
\end{assumption}

\begin{assumption}[Topological Ordering Consistency]
\label{ass:Gs-Go}
The dynamics graph $\mathrm G_{\rm s}$ and the sensing graph $\mathrm G_{\rm o}$ are directed acyclic graphs (DAGs) with the same topological ordering.
\end{assumption}

\begin{remark}
Assumption~\ref{ass:local observability} does not imply that agent $i$ can estimate its own state solely from its output $y_i$, due to inter-agent coupling.
This assumption is stronger than the joint observability assumption in~\cite{wang2017distributed,kim2019completely} (To be precise, Assumptions~\ref{ass:local observability}-\ref{ass:Gs-Go} ensure that the MAS \eqref{eq:sys1}-\eqref{eq:sys2} is jointly observable, a necessary condition for solving the DSE problem).
However, Assumptions~\ref{ass:local observability}-\ref{ass:Gs-Go} support the design of a fully distributed observer for coupled MASs; see Section~\ref{sec3} for greater detail.
Furthermore, these two assumptions are satisfied by certain practical MASs, such as the vehicle platooning system \cite{al2011suboptimal,huang2023plug} and the MAS studied in the cooperative localization problem of our interest.
\end{remark}

We give the following example of a MAS satisfying Assumption~\ref{ass:Gs-Go}.

\begin{example}\label{exam:assumption3}
Consider a MAS consisting of four agents whose dynamics are presented in \eqref{eq:example1}.
The corresponding graphs $\mathrm G_{\rm s}$ and $\mathrm G_{\rm o}$ are illustrated in Fig.~\ref{fig:example1}.
It can be seen that the system \eqref{eq:example1} satisfies Assumption~\ref{ass:Gs-Go}, of which the topological ordering is $4,~3,~2,~1$.

\begin{footnotesize}
\begin{subequations}
\label{eq:example1}
\begin{align}
& {\rm Agent\,1:}\left\{
   \begin{aligned}
   & \dot{x}_1=A_{11}x_1+A_{12}x_2+A_{13}x_3\\
   & y_1=C_{11}x_1+C_{12}x_2
   \end{aligned}
\right.\\
& {\rm Agent\,2:}\left\{
   \begin{aligned}
   & \dot{x}_2=A_{22}x_2+A_{23}x_3\\
   & y_2=C_{22}x_2+C_{23}x_3
   \end{aligned}
\right.\\
& {\rm Agent\,3:}\left\{
   \begin{aligned}
   & \dot{x}_3=A_{33}x_3+A_{34}x_4\\
   & y_3=C_{33}x_3+C_{34}x_4
   \end{aligned}
\right. \\
& {\rm Agent\,4:}\left\{
   \begin{aligned}
   & \dot{x}_4=A_{44}x_4\\
   & y_4=C_{44}x_4
   \end{aligned}
\right. 
\end{align}
\end{subequations}
\end{footnotesize}
 % \vspace{-20pt} 

% \begin{footnotesize}
% \begin{align}\label{eq:example11}
% & {\rm Agent\,3:}\left\{
%    \begin{aligned}
%    & \dot{x}_3=A_{33}x_3+A_{34}x_4\\
%    & y_3=C_{33}x_3+C_{34}x_4
%    \end{aligned}
% \right. \tag{5c}\\
% & {\rm Agent\,4:}\left\{
%    \begin{aligned}
%    & \dot{x}_4=A_{44}x_4\\
%    & y_4=C_{44}x_4
%    \end{aligned}
% \right. \tag{5d}
% \end{align}
% \end{footnotesize}

\begin{figure}[!ht]
\centerline{\includegraphics[width=5.5cm]{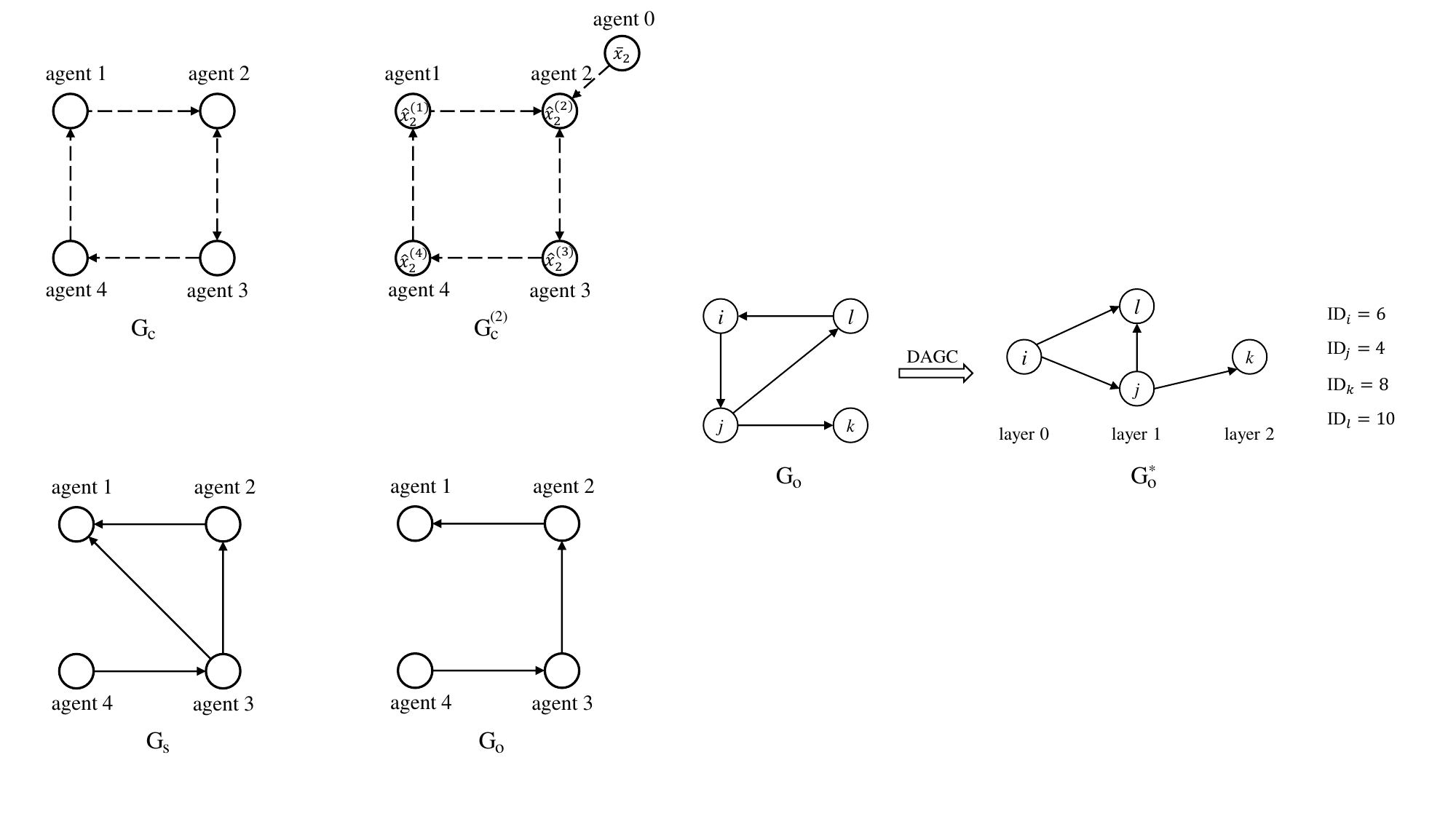}}

\caption{The graphs $\mathrm{G}_{\rm s}$ and $\mathrm{G}_{\rm o}$ of MAS \eqref{eq:example1}.}

\label{fig:example1}
\end{figure}
\end{example}

% \vspace{-10pt}
\section{Distributed State Estimation Scheme}\label{sec3}
Following the discussions in~\cite{yang2023plug} and \cite{chang2021resilient}, we consider the agents' control inputs to be bounded, which does not affect the stability of the estimation error dynamics of the distributed estimation algorithm. 
To present the core results, particularly the stability analysis of the estimation error dynamics, we first assume that the control input $u$ is known to all agents in the next three subsections.
This assumption is relaxed in Subsection~\ref{sub:discussion_to_unknown_input}, where we show that the proposed algorithm remains effective even when agents lack access to other agents’ inputs.

\subsection{Distributed Observer}\label{sec31}
The distributed observer consists of $m$ local observers, where each agent $i\in\mathcal{V}$ endows a local observer dynamics, with the following construction

\begin{footnotesize}
\begin{subequations}
\label{eq:Do}
\begin{align}
{\dot{\hat{x}}}_j^{(i)}=&{A}_{jj} {\hat{x}}_j^{(i)}+B_{jj}u_j+\sum_{l\in\mathcal N_j(\mathrm{G}_{\rm s})} {A}_{jl} {\hat{x}}_l^{(i)}\nonumber\\
     &+\mu\sum_{l\in \mathcal N_i(\mathrm{G}_{\rm c})}w_{il}^{(j)}\left({\hat{x}}_{j}^{(l)}- {\hat{x}}_j^{(i)}\right), {\rm ~for~}j\in \mathcal{V}/\{i\};
     \label{eq:O2}\\
{\dot{\hat{x}}}_i^{(i)}=&{A}_{ii} {\hat{x}}_i^{(i)}+B_{ii}u_i+\sum_{l\in\mathcal N_i(\mathrm{G}_{\rm s})} {A}_{il} {\hat{x}}_l^{(i)}+\mu\Big[w_{i0}^{(i)}\left({\bar{x}}_{i}- {\hat{x}}_i^{(i)}\right)\nonumber\\
  &+ \sum_{l\in \mathcal N_i(\mathrm{G}_{\rm c})} w_{il}^{(i)}\left({\hat{x}}_{i}^{(l)}- {\hat{x}}_i^{(i)}\right)\Big],
   \label{eq:O3}   
\end{align}
with
\begin{align}
{\dot{\bar{x}}}_i=& {A}_{ii} {\bar{x}}_i+B_{ii}u_i+\sum_{l\in\mathcal N_i(\mathrm{G}_{\rm s})} {A}_{il} {\hat{x}}_l^{(i)} \nonumber\\
 &+{F}_i \Big[{y}_i-({C}_{ii} {\bar{x}}_i+\sum_{l\in\mathcal N_i(\mathrm{G}_{\rm o})} {C}_{il} {\hat{x}}_l^{(i)})\Big].
 \label{eq:O1}
\end{align}
\end{subequations}
\end{footnotesize}In~\eqref{eq:Do},
$ {\hat{x}}_j^{(i)}\in \bb{R}^{n_j}$ denotes the estimate to $x_j$ held by node $i$ and
${\bar{x}}_i\in \bb{R}^{n_i}$ is an auxiliary variable interpreted as an estimate of $x_i$ by agent $i$ using a Luenberger-like estimation scheme; $\mathcal N_i(\mathrm{G}_{\rm c})$ denotes agent $i$'s in-neighbor set in graph $\mathrm{G}_{\rm c}$;
${F}_i\in \bb{R}^{n_i\times p_i}$ is the Luenberger gain, $\mu$ is the coupling gain; $w_{i0}^{(i)}$ and $w_{il}^{(j)}$ for $i,j\in\mathcal V, l\in\mathcal N_i(\mathrm{G}_{\rm c})$ are nonnegative consensus gains related to $\mathrm G_{\rm c}$.
Their design will be discussed in more detail in Section~\ref{sec32}.
To enable~\eqref{eq:Do}, each agent $i$ must transmit its estimate $\hat x^{(i)}$ of the entire MAS state to its out-neighbors in communication graph $\mathrm G_{\rm c}$, where $\hat x^{(i)}=[({\hat{x}}_1^{(i)})^\top, \ldots,  ({\hat{x}}_m^{(i)})^\top]^\top$ comprises agent $i$'s estimate $\hat{x}_i^{(i)}$ of its own state, obtained from \eqref{eq:O3}, and its estimate $\hat{x}_j^{(i)}$ of the states of other agents, obtained from \eqref{eq:O2} for $\forall j\in\mathcal{V}/\{i\}$.
It can be seen from the structure of the proposed distributed observer~\eqref{eq:Do} that each agent $i$ can estimate all agents' states in the MAS only based on the local measurement $y_i$ and the exchanged information $\hat x^{(j)}$ from in-neighbors $j\in\mathcal N_i(\mathrm{G}_{\rm c})$.

The structure of the proposed distributed observer~\eqref{eq:Do} is depicted in Fig.~\ref{fig:design}. 
In particular, the local observer embedded in each agent $i\in\mathcal V$ consists of two parts: the consensus-based estimation rules \eqref{eq:O2}-\eqref{eq:O3}, which involve distributed interaction and are used to compute the estimate $\hat x^{(i)}$, where $\hat x^{(i)}$ is the only information agent $i$ transmit to its communication neighbors; and the Luenberger-like estimation rule \eqref{eq:O1}, which computes $\bar{x}_i$ and interacts only with $\hat x^{(i)}$.
Consequently, $\bar{x}_i$ is invisible to the neighbors of agent $i$, although it directly affects the evolution of $\hat x^{(i)}$.
In the process of all agents estimating $x_i$ using \eqref{eq:O2}-\eqref{eq:O3}, the extended state $\bar{x}_i$ can be regarded as a leader signal tracked by all agents to obtain the estimates ${\hat{x}}_i^{(1)}, {\hat{x}}_i^{(2)},\ldots, {\hat{x}}_i^{(m)}$.
This design is motivated by the leader-follower consensus technique \cite{cao2015leader}.
However, it should be noted that the stability analysis for~\eqref{eq:Do} is more involved than the leader-follower consensus as 
\eqref{eq:O2}-\eqref{eq:O3} and \eqref{eq:O1} are dynamically coupled.

\begin{figure}[!ht]
\centerline{
\includegraphics[width=1.1in]{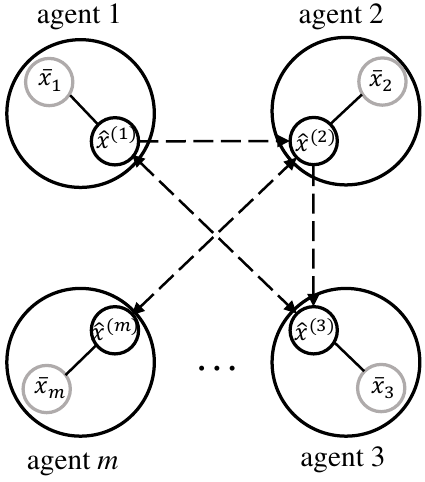}}
\caption{The structure of the distributed observer \eqref{eq:Do}.}
\label{fig:design}
\end{figure}

\subsection{Stabilizability Analysis}\label{sec32}
In this part, we explore conditions for the estimation error dynamics of the distributed observer \eqref{eq:Do} to be stabilized.
To this end, we first introduce a new augmented graph for each agent $j\in \mathcal{V}$. 
Define the augmented graph as $\mathrm{G}_{\rm c}^{(j)}=(\overline{\mathcal{V}}, \mathcal{E}_{\rm c}^{(j)})$, where
the node set is $\overline{\mathcal{V}}=\mathcal V \cup \{0\}$ and the edge set $\mathcal{E}_{\rm c}^{(j)} = \mathcal E_{\rm c}\cup \{(0,j)\} \subseteq\overline{\mathcal{V}}\times\overline{\mathcal{V}}$.
When we associate node $0$ with the dynamics of $ {\bar{x}}_j$ and associate
node $i$, where $i\in\mathcal V$,
 with the dynamics of $ {\hat{x}}_j^{(i)}$, the augmented graph $\mathrm{G}_{\rm c}^{(j)}$ reflects the way that nodes $0$ to $m$ are coupled in the dynamics of 
 ${\bar{x}}_j$ and
${\hat{x}}_j^{(1)},\ldots, {\hat{x}}_j^{(m)}$, described by~\eqref{eq:O1} and~\eqref{eq:O2}-\eqref{eq:O3}.
An example of $\mathrm{G}_{\rm c}$ and $\mathrm{G}_{\rm c}^{(j)}$ (where $j=2$) is given in Fig.~\ref{fig:G_G}.
Notice that $\mathrm{G}_{\rm c}$ is a subgraph of the graph $\mathrm{G}_{\rm c}^{(j)}$ and can be obtained by removing node $0$ from $\overline{\mathcal{V}}$ and edge $(0,j)$ from $\mathcal{E}_{\rm c}^{(j)}$.
We align a weight matrix  $W^{(j)}$ to $\mathrm{G}_{\rm c}^{(j)}$ so that it interacts with $\mathcal{E
}_{\rm c}^{(j)}$ in the following way: $(r, s) \in \mathcal{E}_{\rm c}^{(j)}$ if and only if $w_{sr}^{(j)}>0$.

%Considering the estimation of the state of agent $j\in\mathcal V$ by all agent, based on the proposed observer \eqref{eq:Do}, we can define a nonnegative matrix $ {W}^{(j)}=[w_{rs}^{(j)}]\in \bb{R}^{(m+1)\times (m+1)}$, $r,s=0,1,\ldots,m$,

\begin{figure}[!ht]
\centerline{\includegraphics[width=2.1in]{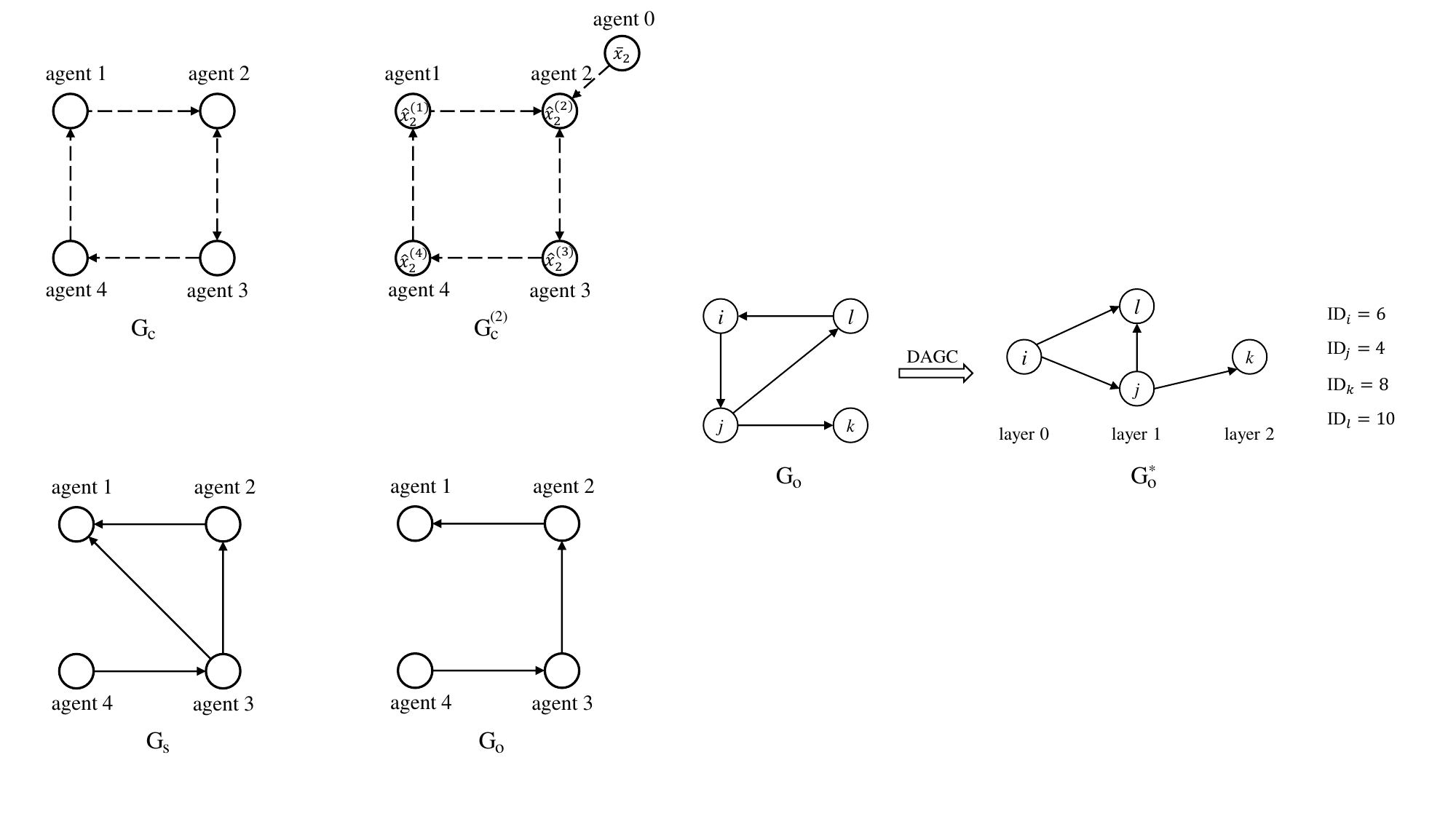}}
\caption{An example of a communication graph $\mathrm{G}_{\rm c}$ with four agents and the augmented graph $\mathrm{G}_{\rm c}^{(j)}$ ($j=2$).}
\label{fig:G_G}
\end{figure}

We construct the Laplacian matrix from $W^{(j)}$ as $ L_{\mathrm{G}_{\rm c}^{(j)}}=\diag\left(\sum\limits_{l=0}^{m}w_{0l}^{(j)},\ldots,\sum\limits_{l=0}^{m}w_{ml}^{(j)}\right)-W^{(j)}$. In accordance with the partitioning of
$\overline{\mathcal{V}}$ into
$\{0\}$ and $\mathcal V$, it can also be 
partitioned into four blocks as follows:
\begin{small}
\begin{equation}
\label{eq:definition L}
  L_{\mathrm{G}_{\rm c}^{(j)}}=\left[\begin{array}{c|c}
 0 &  \mathbf{0}\\ 
 \hline
 -{O}^{(j)}& S^{(j)}
\end{array}\right], 
\end{equation}
\end{small}where $O^{(j)}=w_{j0}^{(j)}b_j$. 
The first row of $L_{\mathrm{G}_{\rm c}^{(j)}}$ is a $1\times (m+1)$ all-zeros row vector, which is consistent with the in-degree of node 0 being zero in graph $\mathrm{G}_{\rm c}^{(j)}$.
Notice that $S^{(j)}\in \bb{R}^{m\times m}$ is aligned to the subgraph $\mathrm G_{\rm c}$.
Since $L_{\mathrm{G}_{\rm c}^{(j)}}\mathbf{1}_{m+1}= 0$, we have $S^{(j)}\mathbf{1}_m=O^{(j)}$.
We have the following findings about 
${S}^{(j)}$.
\begin{lemma}
\label{le:S(j)}
For any given $j\in\mathcal V$, 
all the eigenvalues of $ {S}^{(j)}$ have strictly-positive real parts if and only if  $\mathrm{G}_{\rm c}$ is strongly connected.
\end{lemma}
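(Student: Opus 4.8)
The plan is to exploit the block lower-triangular form of $L_{\mathrm{G}_{\rm c}^{(j)}}$ recorded in \eqref{eq:definition L} to convert the spectral question for $S^{(j)}$ into a connectivity question for the augmented graph $\mathrm{G}_{\rm c}^{(j)}$. Since the $(0,0)$-block of \eqref{eq:definition L} is the scalar $0$ and the off-diagonal block above it is zero, the characteristic polynomial factors as $\det(\lambda I_{m+1}-L_{\mathrm{G}_{\rm c}^{(j)}})=\lambda\,\det(\lambda I_m-S^{(j)})$. Hence, counted with multiplicity, the spectrum of $L_{\mathrm{G}_{\rm c}^{(j)}}$ is $\{0\}\cup\sp(S^{(j)})$, and the algebraic multiplicity of the zero eigenvalue of $L_{\mathrm{G}_{\rm c}^{(j)}}$ equals $1$ plus the multiplicity of $0$ in $\sp(S^{(j)})$. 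Applying Lemma~\ref{le:spectrum of L} to the weighted directed graph $\mathrm{G}_{\rm c}^{(j)}$, every eigenvalue of $L_{\mathrm{G}_{\rm c}^{(j)}}$ has nonnegative real part, so by the factorization the same is true of every eigenvalue of $S^{(j)}$. Therefore all eigenvalues of $S^{(j)}$ lie in the open right half-plane if and only if $0\notin\sp(S^{(j)})$, equivalently the zero eigenvalue of $L_{\mathrm{G}_{\rm c}^{(j)}}$ is simple, which by the second part of Lemma~\ref{le:spectrum of L} is equivalent to $\mathrm{G}_{\rm c}^{(j)}$ possessing a spanning tree. This reduction is the core step; it removes all further matrix computation.

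It then remains to translate the spanning-tree condition on $\mathrm{G}_{\rm c}^{(j)}$ back to $\mathrm{G}_{\rm c}$. Because node $0$ has in-degree zero in $\mathrm{G}_{\rm c}^{(j)}$ (consistent with the all-zero first row of \eqref{eq:definition L}), no node other than $0$ can reach $0$, so the root of any spanning tree must be $0$ itself; thus $\mathrm{G}_{\rm c}^{(j)}$ has a spanning tree if and only if $0$ reaches every node. As the only out-edge of $0$ is $(0,j)$ and $0$ has no in-edges, the set reachable from $0$ is $\{0\}$ together with the nodes reachable from $j$ inside $\mathrm{G}_{\rm c}$, so $\mathrm{G}_{\rm c}^{(j)}$ has a spanning tree if and only if node $j$ reaches every node of $\mathrm{G}_{\rm c}$. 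The sufficiency direction now closes at once: if $\mathrm{G}_{\rm c}$ is strongly connected then $j$ reaches every node, so $\mathrm{G}_{\rm c}^{(j)}$ has a spanning tree and, by the reduction, $S^{(j)}$ has all its eigenvalues with strictly positive real parts.

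The main obstacle is the necessity direction, and it is precisely the point raised against the earlier draft. The reduction shows that positive stability of the single matrix $S^{(j)}$ certifies only that node $j$ is a root of $\mathrm{G}_{\rm c}$, i.e.\ that $j$ reaches every node, which is weaker than strong connectivity. The clean way to recover the stated equivalence is to read the quantifier in the lemma as ranging over all agents, as it is used in the sequel: each agent $j$ must be able to track its own leader signal $\bar x_j$, so the operative hypothesis is that $S^{(j)}$ be positive stable for \emph{every} $j\in\mathcal V$. Under that reading the per-index characterization just established yields ``$j$ reaches every node'' for all $j$, i.e.\ every node reaches every other node, which is exactly strong connectivity; the contrapositive is that if $\mathrm{G}_{\rm c}$ fails to be strongly connected then some root condition fails and the corresponding $S^{(a)}$ is singular. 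I would therefore carry out necessity through the graph characterization, being explicit that a single fixed index delivers only root reachability, and that strong connectivity is the joint content of the condition over all agents; this is the subtlety to flag so that the lemma is invoked correctly when the distributed observer's error dynamics are assembled.
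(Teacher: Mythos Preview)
Your argument is correct and follows essentially the same route as the paper: both exploit the block-triangular form \eqref{eq:definition L} to identify $\sp(L_{\mathrm{G}_{\rm c}^{(j)}})=\{0\}\cup\sp(S^{(j)})$, invoke Lemma~\ref{le:spectrum of L} to reduce positive stability of $S^{(j)}$ to nonsingularity, and then to the existence of a spanning tree in $\mathrm{G}_{\rm c}^{(j)}$, and finally read the quantifier over all $j$ to obtain strong connectivity of $\mathrm{G}_{\rm c}$. Your explicit discussion of the quantifier subtlety---that a single fixed $j$ yields only ``$j$ is a root of $\mathrm{G}_{\rm c}$''---is a welcome clarification of a point the paper's proof handles in one sentence (``Since $S^{(j)}$ should be nonsingular for any given $j\in\mathcal V$\ldots''), and your per-index characterization via the reachability of node $0$ is slightly more detailed than the paper's direct assertion of the final equivalence.
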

\begin{proof}
According to~\eqref{eq:O1}, we have 
$w_{0i}^{(j)}$'s are all zeros for all $i\in\mathcal V$. Therefore, $ \sum_{i=1}^{m} w_{0 i}^{(j)}=0$ and the nonzero eigenvalues of $ {S}^{(j)}$ coincide with those of $ {L}_{\mathrm{G}_{\rm c}^{(j)}}$.
By Lemma~\ref{le:spectrum of L}, all the nonzero eigenvalues of $ {L}_{\mathrm{G}_{\rm c}^{(j)}}$ have positive real parts. So do the nonzero eigenvalues of $ {S}^{(j)}$.
 By Lemma~\ref{le:spectrum of L} again, $ {S}^{(j)}$ is nonsingular if and only if $\mathrm{G}_{\rm c}^{(j)}$ has a spanning tree. Since $ {S}^{(j)}$ should be nonsingular for any given $j\in\mathcal V$, it is equivalent to that $\mathrm{G}_{\rm c}$ is strongly connected, which completes the proof.
\end{proof}

Now we give the result on stabilizability of the estimation error dynamics of our proposed observer \eqref{eq:Do} and the 
connectivity of the underlying communication graph. 
\begin{theorem}
\label{theorem:main1}
Consider a multi-agent system \eqref{eq:sys1}-\eqref{eq:sys2} satisfying Assumptions \ref{ass:local observability}-\ref{ass:Gs-Go} and suppose that $A_{ii}$ is not Hurwitz for all $i\in\mathcal{V}$.
Let $ {F}_i$ be such that $ {A}_{ii}- {F}_i {C}_{ii}$ is Hurwitz for all $i\in\mathcal{V}$.
Then the distributed observer \eqref{eq:Do} with sufficiently large positive number $\mu$ is stable in the sense that $\lim_{t\to \infty} \|\hat{x}^{(i)}-x\|=0$ for all $i\in \mathcal{V}$, if and only if the communication graph $\mathrm{G}_{\rm c}$ is strongly connected.
\end{theorem}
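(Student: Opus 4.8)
The plan is to pass to error coordinates and exploit Assumption~\ref{ass:Gs-Go} to put the global error dynamics in block-triangular form, after which stability reduces to a Kronecker-structured eigenvalue problem governed by $\mu$ and the spectrum of each $S^{(j)}$. First I would introduce the estimation errors $e_j^{(i)} := \hat{x}_j^{(i)} - x_j$ and the auxiliary errors $\bar{e}_i := \bar{x}_i - x_i$. Subtracting \eqref{eq:sys1}-\eqref{eq:sys2} from \eqref{eq:Do} and using $y_i = C_{ii}x_i + \sum_l C_{il}x_l$ to cancel the inputs $u_i$ and the measurements, the identities $\hat{x}_j^{(l)} - \hat{x}_j^{(i)} = e_j^{(l)} - e_j^{(i)}$ and $\bar{x}_i - \hat{x}_i^{(i)} = \bar{e}_i - e_i^{(i)}$ turn each line of \eqref{eq:Do} into an input-free linear recursion. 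In particular $\dot{\bar{e}}_i = (A_{ii}-F_iC_{ii})\bar{e}_i + \sum_{l\in\mathcal N_i(\mathrm G_{\rm s})}A_{il}e_l^{(i)} - F_i\sum_{l\in\mathcal N_i(\mathrm G_{\rm o})}C_{il}e_l^{(i)}$, while each $e_j^{(i)}$ accumulates an $A_{jj}$ term, the couplings $A_{jl}e_l^{(i)}$, and the $\mu$-weighted diffusive consensus terms.

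The key structural step is to group the errors by the estimated index $j$, stacking $\mathbf e_j := \col(e_j^{(1)},\ldots,e_j^{(m)})$ together with $\bar e_j$, and to order these blocks according to the common topological ordering granted by Assumption~\ref{ass:Gs-Go}. Because $l\in\mathcal N_j(\mathrm G_{\rm s})\cup\mathcal N_j(\mathrm G_{\rm o})$ forces $l$ to precede $j$ in that ordering, every cross-block coupling $A_{jl}$ or $C_{jl}$ points from an earlier block to a later one; hence the global error matrix is block lower triangular in this ordering, and its spectrum is the union of the spectra of the diagonal blocks. This is precisely where Assumption~\ref{ass:Gs-Go} does the essential work, turning a fully coupled $m(m+1)$-block system into a cascade whose stability is decided block by block.

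For each diagonal block I would read off, using the partition \eqref{eq:definition L}, that $\bar e_j$ evolves autonomously through $A_{jj}-F_jC_{jj}$ (Hurwitz by the choice of $F_j$), while $\dot{\mathbf e}_j = \big(I_m\otimes A_{jj} - \mu\, S^{(j)}\otimes I_{n_j}\big)\mathbf e_j + \mu\,(O^{(j)}\otimes I_{n_j})\bar e_j$. The diagonal block is therefore itself triangular, so its stability hinges on $M_j := I_m\otimes A_{jj} - \mu\,S^{(j)}\otimes I_{n_j}$. By the Kronecker-sum eigenvalue rule, $\sigma(M_j) = \{\alpha - \mu\lambda : \alpha\in\sigma(A_{jj}),\ \lambda\in\sigma(S^{(j)})\}$, so $M_j$ is Hurwitz exactly when $\operatorname{Re}(\alpha) < \mu\operatorname{Re}(\lambda)$ for every such pair.

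Finally I would invoke Lemma~\ref{le:S(j)}. For sufficiency, strong connectivity gives $\operatorname{Re}(\lambda)>0$ for every eigenvalue of every $S^{(j)}$, so choosing $\mu$ larger than $\max_{j,\alpha}\operatorname{Re}(\alpha)$ divided by the smallest such $\operatorname{Re}(\lambda)$ makes every $M_j$, hence every diagonal block and the entire cascade, Hurwitz, yielding $\hat x^{(i)}\to x$. For necessity I argue by contrapositive: if $\mathrm G_{\rm c}$ is not strongly connected, then Lemma~\ref{le:S(j)} (through Lemma~\ref{le:spectrum of L}) produces some $j^\star$ with $S^{(j^\star)}$ singular, so $0\in\sigma(S^{(j^\star)})$ injects the $\mu$-independent eigenvalues $\sigma(A_{j^\star j^\star})$ into $M_{j^\star}$; since $A_{j^\star j^\star}$ is not Hurwitz, $M_{j^\star}$ carries an eigenvalue with nonnegative real part for every $\mu$, and the error persists along the corresponding mode. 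The main obstacle I anticipate is justifying the block-triangular reduction rigorously — matching the $A_{jl},C_{jl}$ support to the shared topological order so that no ``upward'' coupling survives — together with handling the Kronecker spectrum when $S^{(j)}$ fails to be diagonalizable, for which passing to a Schur form of $S^{(j)}$ keeps the eigenvalue count intact.
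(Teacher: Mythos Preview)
Your proposal is correct and follows essentially the same route as the paper: pass to error coordinates, group the errors by the estimated index $j$, use Assumption~\ref{ass:Gs-Go} to obtain a block lower-triangular global error matrix along the common topological ordering, reduce each diagonal block to the pair $A_{jj}-F_jC_{jj}$ and $I_m\otimes A_{jj}-\mu S^{(j)}\otimes I_{n_j}$, and decide both directions via the Kronecker-sum spectrum together with Lemma~\ref{le:S(j)}. The only cosmetic difference is that the paper packages $\bar e_j$ and $\mathbf e_j$ into a single vector $E_j$ with the explicit block matrix $T^{(j)}$, whereas you describe the same triangular structure in words; the substance of the argument is identical.
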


\begin{proof}
Let $ {\bar{e}}_{j} \triangleq  {\bar{x}}_{j}- {x}_j$ and $ {e}_j^{(i)} \triangleq  {\hat{x}}_j^{(i)}- {x}_j$, where $i,j\in \mathcal{V}$.
By \eqref{def:MAS_dynamics}, \eqref{def:MAS_observation} and \eqref{eq:Do}, the error dynamics of $ {\bar{e}}_{j}$ and $ {e}_j^{(i)}$ can be written as
\begin{footnotesize}
\begin{subequations}
\label{eq:error dynamic}
\begin{align}
{\dot{e}}_{j}^{(i)}=&{A}_{jj} {e}_{j}^{(i)}+\sum_{l\in\mathcal N_j({\mathrm{G}_{\rm s}})} {A}_{jl} {e}_l^{(i)}\nonumber\\
&+\mu\sum_{l\in \mathcal N_i({\mathrm{G}_{\rm c}})}w_{il}^{(j)}\left( {e}_{j}^{(l)}- {e}_{j}^{(i)}\right),\hbox{~for~} i \neq j\in \mathcal{V};
\label{eq:e_i^j}\\
{\dot{e}}_{j}^{(j)}=&{A}_{jj} {e}_{j}^{(j)}+\sum_{l\in\mathcal N_j({\mathrm{G}_{\rm s}})} {A}_{jl} {e}_l^{(j)}\nonumber\\
&+\mu\Big[w_{j0}^{(j)}\left( {\bar{e}}_j- {e}_j^{(j)}\right )+\sum_{l\in \mathcal N_j({\mathrm{G}_{\rm c}})}w_{jl}^{(j)}\left( {e}_{j}^{(l)}- {e}_{j}^{(j)}\right)\Big],\label{eq:e_j^j}
\end{align}
with
\begin{align}
{\dot{\bar{e}}}_{j}=&\left({A}_{jj}-{F}_j {C}_{jj}\right){\bar{e}}_{j}+\sum_{l\in \mathcal N_j({\mathrm{G}_{\rm s}})} {A}_{jl} {e}_l^{(j)}-\sum_{l\in \mathcal N_j({\mathrm{G}_{\rm o}})} {F}_j {C}_{jl} {e}_l^{(j)}.
\label{eq:e_0^j}
\end{align}
\end{subequations}
\end{footnotesize}

Let $E_j\triangleq[(\bar{e}_j)^\top, (e_{j}^{(1)})^\top \ldots, (e_{j}^{(m)})^\top]^\top$ be the overall estimation error of all agents estimating the state of agent $j$.
Based on \eqref{eq:definition L} and \eqref{eq:error dynamic}, we obtain the following dynamics of $E_j$.
\begin{small}
\begin{equation}
\label{eq:e^j1}
\begin{aligned}
 {\dot{E}}_j= {T}^{(j)} E_j+\sum_{l\in \mathcal N_j({\mathrm{G}_{\rm s}})\cup \mathcal N_j({\mathrm{G}_{\rm o}})} {Q}^{(jl)}E_l
\end{aligned}
\end{equation}
\end{small}where 
{\footnotesize{${T}^{(j)}=\left[\begin{array}{c|c}
 {A}_{jj}- {F}_j {C}_{jj} &  \mathbf{0}\\
\hline
- {O}^{(j)} \otimes {I}_{n_j} &  {I}_m\otimes {A}_{jj}-\mu {S}^{(j)}\otimes {I}_{n_j}\end{array}\right]$}} is associated with the communication graph $\mathrm{G}_{\rm c}$ due to the presence of matrices $- {O}^{(j)}$ and ${S}^{(j)}$, while
\begin{footnotesize}
\begin{equation}
\label{eq:qjl}
\begin{aligned}\nonumber
 {Q}^{(jl)} &=\left\{\begin{array}{lllllll}
\left[\begin{array}{c|c}
 \mathbf{0}_{n_j\times n_l} &  {b}_l^\top \otimes ( {A}_{jl}- {F}_j {C}_{jl})\\
\hline
 \mathbf{0}_{mn_j\times n_l} & {I}_m\otimes  {A}_{jl}
\end{array}\right],  \\
\qquad\qquad\qquad\qquad\hbox{~for ~} l\in \mathcal N_j({\mathrm{G}_{\rm s}})\cap\mathcal N_j({\mathrm{G}_{\rm o}}),\\
\\
\left[\begin{array}{c|c}
\mathbf{0}_{n_j\times n_l} &  \mathbf{0}_{n_j\times mn_l}\\
\hline
\mathbf{0}_{mn_j\times n_l} & {I}_m\otimes  {A}_{jl}
\end{array}\right],\\
\qquad\qquad\qquad\qquad\hbox{~for~} l\in \mathcal N_j({\mathrm{G}_{\rm s}})/\mathcal N_j({\mathrm{G}_{\rm o}}),\\
\\
\left[\begin{array}{c|c}
 \mathbf{0}_{n_j\times n_l}  &  {b}_l^\top\otimes(- {F}_j {C}_{jl})\\
\hline
\mathbf{0}_{mn_j\times n_l}  &  \mathbf{0}_{mn_j\times mn_l} 
\end{array}\right], \\
\qquad\qquad\qquad\qquad\hbox{~for~} l\in \mathcal N_j({\mathrm{G}_{\rm o}})/\mathcal N_j({\mathrm{G}_{\rm s}}),
\end{array}
\right.
\end{aligned}
\end{equation}
\end{footnotesize}
and the symbol $\otimes$ denotes a Kronecker product.

Then, we define the whole estimation error as ${E}\triangleq[({E}_{1})^\top, \ldots, ({E}_{m})^\top]^\top$. 
By Assumption \ref{ass:Gs-Go}, we derive a topological ordering of $\mathrm{G}_{\rm s}$, denoted as $\{s_1, s_2, \ldots, s_m\}$, which is also the topological ordering of $\mathrm{G}_{\rm o}$. 
According to this ordering, we permute the rows of the whole estimation error $E$ and obtain the following dynamics

\begin{footnotesize}
\begin{equation}
\label{eq:entire error}
\begin{aligned}
 {\dot{\tilde{E}}}=\underbrace{
\left[\begin{array}{cccccc}
 {T}^{(s_1)} & & & \\
 {Q}^{(s_2s_1)}&  {T}^{(s_2)} && \\
\vdots&\ddots & \ddots&\\
 {Q}^{(s_ms_1)}&\cdots& {Q}^{(s_ms_{m-1})}&  {T}^{(s_m)}
\end{array}\right]}_{ {R}}
 {\tilde{E}},
\end{aligned}
\end{equation}
\end{footnotesize}where $\tilde{E}=[(E_{s_1})^\top, \ldots, (E_{s_m})^\top]^\top$.

{\bf If.}~Considering the lower triangular structure of the system matrices $R$ in \eqref{eq:entire error} and $T^{(j)}$ in \eqref{eq:e^j1}, we can make the whole estimation error dynamics \eqref{eq:entire error} stable by stabilizing the matrices $ {A}_{jj}- {F}_j {C}_{jj}$ and $ {I}_m\otimes {A}_{jj}-\mu {S}^{(j)}\otimes {I}_{n_j}$ for all $j\in \mathcal V$.
Under Assumption \ref{ass:local observability}, there always exists an appropriate gain $ {F}_j$ such that $ {A}_{jj}- {F}_j {C}_{jj}$ is Hurwitz for any $j\in\mathcal V$. 
On the other hand, for $\forall j\in\mathcal V$, the eigenvalues of $ {I}_m\otimes {A}_{jj}-\mu {S}^{(j)}\otimes {I}_{n_j}$ are 
$\{\lambda_p( {A}_{jj})-\mu\lambda_q( {S}^{(j)})\}_{p\in\{1, \ldots, n_j\}, q\in\mathcal V}$
where $\lambda_p( {A}_{jj})$ and $\lambda_q( {S}^{(j)})$ are the eigenvalues of $ {A}_{jj}$ and $ {S}^{(j)}$, respectively. 
By Lemma \ref{le:S(j)}, the real parts of the eigenvalues of ${S}^{(j)}$ are all positive when the communication graph $\mathrm G_{\rm c}$ is strongly connected.
Thus we can choose $\mu$ sufficiently large so that $Re(\lambda_p( {A}_{jj})-\mu\lambda_q( {S}^{(j)}))<0, \forall p,q$ makes $ {T}^{(j)}$ Hurwitz for all $j\in\mathcal V$, and hence, the matrix $ {R}$ is Hurwitz, i.e., $\lim _{t \rightarrow \infty}E(t)=0$.

{\bf Only if.}~If the communication graph $\mathrm{G}_{\rm c}$ is not strongly connected, then there exist agents $j, k \in \mathcal{V}$ such that no directed path exists from $j$ to $k$.
This implies that the graph $\mathrm{G}_{\rm c}^{(j)}$ contains no spanning tree rooted at agent $j$. 
By Lemma~\ref{le:spectrum of L}, the Laplacian matrix $L_{\mathrm{G}_{\rm c}^{(j)}}$ has at least two zero eigenvalues, implying that the matrix ${S}^{(j)}$ has at least one zero eigenvalue.
Since $A_{jj}$ is not Hurwitz by assumption (the Hurwitz case is trivial), the matrix ${I}_m\otimes {A}_{jj}-\mu {S}^{(j)}\otimes {I}_{n_j}$ has at least one eigenvalue with a non-negative real part. 
Thus, ${T}^{(j)}$ is not Hurwitz, which implies that the matrix $R$ is also not Hurwitz, thereby completing the proof.\end{proof}

It is worth noting that each agent $i$ can determine the Luenberger gain $F_i$ and the consensus gains $w_{i0}^{(i)}$ and $w_{il}^{(j)}$ for $l\in \mathcal N_i(\mathrm{G}_{\rm c})$ in \eqref{eq:Do} by itself.
Specifically, each agent $i$ can choose $F_i$ freely as long as $ {A}_{ii}- {F}_i {C}_{ii}$ is Hurwitz, and set the consensus gains $w_{i0}^{(i)}$ and $w_{il}^{(j)}$ according to
the $i$th row of weight matrices $W^{(j)}$'s aligned to ${\rm G}_{\rm c}^{(j)}$. 
The weight matrix $W^{(j)}$ can be constructed 
in different ways. For example, it can be constructed as a binary 
weight matrix as constructed in \cite{su2011cooperative} with 
$w_{sr}^{(j)}=1$ if and only if $(r, s) \in \mathcal{E}_{\rm c}^{(j)}$;
or it can be constructed as a normalized weight matrix 
as constructed in~\cite{pu2020push}:

\begin{footnotesize}
\begin{equation}
\label{eq:consensusgain1}
w_{sr}^{(j)}=\left\{\begin{array}{cll}
\frac{1}{|
{\mathcal N}_s({\mathrm G}_{\rm c}^{(j)})|}, &\hbox{~for~} (r,s)\in{\mathcal E}_{\rm c}^{(j)},\\
0, &\hbox{~for~} 
(r,s)\not\in{\mathcal E}_{\rm c}^{(j)} \hbox{~or~} r=s.
\end{array}\right.
\end{equation}
or 
\begin{equation}
\label{eq:consensusgain2}
w_{sr}^{(j)}=\left\{\begin{array}{cll}
\frac{1}{|
{\mathcal S}_r({\mathrm G}_{\rm c}^{(j)})|}, &\hbox{~for~} (r,s)\in{\mathcal E}_{\rm c}^{(j)},\\
0, &\hbox{~for~} 
(r,s)\not\in{\mathcal E}_{\rm c}^{(j)} \hbox{~or~} r=s.
\end{array}\right.
\end{equation}
\end{footnotesize}The selection of the coupling gain $\mu$ depends on the eigenvalues of the matrices $S^{(j)}$'s, which is global information related to the topology of ${\mathrm G}_{\rm c}$. 
This hinders decentralized construction of our distributed observer \eqref{eq:Do} as well as plug-and-plug operation resilient to changes of MASs.
In the next part, we will give thought to possible ways to cope with this issue.

\subsection{Fully Distributed Design}\label{sec33}
We discuss how to choose the coupling gain $\mu$ locally for each agent and present a fully distributed design of the proposed observer \eqref{eq:Do} in some cases.
From the proof of Theorem~\ref{theorem:main1}, it can be seen that to stabilize the estimation error of \eqref{eq:Do}, the coupling gain $\mu$ should satisfy 

\begin{footnotesize}
\begin{equation}
\label{eq:mu}
\mu>\frac{\max\limits_{j\in\mathcal V}\rho(A_{jj})}{\min \limits_{q,j\in \mathcal V} |\lambda_q( {S}^{(j)})|},
\end{equation}
\end{footnotesize}where $\rho(A_{jj})$ is the spectral radius of $A_{jj}$. 

Note that $\max\limits_{j\in\mathcal V}\rho(A_{jj})$ can be known to each agent from their knowledge of the system matrix $A$ of the MAS (for instance, in a
homogeneous MAS we always have $\max_{j\in\mathcal V}\rho(A_{jj})=\rho(A_*)$, where 
$A_*$ is the homogeneous individual system matrix), however, how to estimate the minimal modulus of the eigenvalues of ${S}^{(j)}$ locally needs more effort.
We next provide solutions to this problem in the case of undirected communication graphs and directed communication graphs, and on this basis, we present local methods for selecting coupling gain $\mu$ by each agent in the two cases.

\subsubsection{The case of undirected communication graph} 
The following lemma provides a lower bound for the minimal modulus of the eigenvalues of ${S}^{(j)}$ in case of an undirected $\mathrm{G}_{\rm c}$ and binary $W^{(j)}$.

\begin{lemma}
\label{le:mu}
Suppose that the communication graph $\mathrm{G}_{\rm c}$ is an undirected connected graph 
and $W^{(j)}$ is the binary weight matrix
associated with the augmented graph ${\rm G}_{\rm c}^{(j)}$.
Denote the  eigenvalues of Laplacian matrix $L_{\mathrm{G}_{\rm c}}$ as $\lambda_1, \ldots, \lambda_m$ sorted in ascending order.
Then, 

\begin{footnotesize}
\begin{equation}
\label{eq:eigv of Sj}
\min \limits_{q,j\in \mathcal V} \lambda_q( {S}^{(j)})\geq\frac{1}{m}(\frac{\lambda_2}{\lambda_2+1})^{m-1}.
\end{equation}
\end{footnotesize}
\end{lemma}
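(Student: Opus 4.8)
The plan is to first make the matrix $S^{(j)}$ explicit. With binary weights over an undirected $\mathrm{G}_{\rm c}$, the augmented graph $\mathrm{G}_{\rm c}^{(j)}$ simply attaches a single pendant edge $(0,j)$ of weight one; comparing the diagonal degrees and off-diagonal weights of the $\mathcal V\times\mathcal V$ block against those of $\mathrm{G}_{\rm c}$ shows that $S^{(j)}=L_{\mathrm{G}_{\rm c}}+b_jb_j^\top$, where $b_j$ is the $j$th standard basis vector used in $O^{(j)}=w_{j0}^{(j)}b_j$. Since $L_{\mathrm{G}_{\rm c}}$ is symmetric positive semidefinite for a connected undirected graph and $b_jb_j^\top\succeq 0$ is rank one, $S^{(j)}$ is symmetric, so all its eigenvalues $\lambda_1(S^{(j)})\le\cdots\le\lambda_m(S^{(j)})$ are real; moreover $S^{(j)}$ is nonsingular by Lemma~\ref{le:S(j)}, hence positive definite, so $\min_q\lambda_q(S^{(j)})=\lambda_1(S^{(j)})>0$.

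The heart of the argument is to write $\lambda_1(S^{(j)})=\det(S^{(j)})\big/\prod_{q=2}^m\lambda_q(S^{(j)})$ and bound numerator and denominator separately. For the numerator, I would diagonalize $L_{\mathrm{G}_{\rm c}}=U\Lambda U^\top$ with $\Lambda=\diag(0,\lambda_2,\ldots,\lambda_m)$ and first eigenvector $\mathbf 1/\sqrt m$, reducing $\det(S^{(j)})$ to $\det(\Lambda+vv^\top)$ with $v=U^\top b_j$. Applying the matrix determinant lemma and using $\det\Lambda=0$, only the first diagonal entry of $\mathrm{adj}(\Lambda)$ survives, namely $\prod_{k=2}^m\lambda_k$, and since $v_1=1/\sqrt m$ this yields the clean, \emph{$j$-independent} identity $\det(S^{(j)})=\tfrac1m\prod_{k=2}^m\lambda_k$. (Equivalently, by the Matrix--Tree theorem $\det(S^{(j)})$ counts the spanning trees of $\mathrm{G}_{\rm c}^{(j)}$, all of which must use the pendant edge $(0,j)$, hence it equals the spanning-tree count of $\mathrm{G}_{\rm c}$.) For the denominator, I would invoke Weyl's inequality for the rank-one positive-semidefinite perturbation, $\lambda_q(S^{(j)})\le\lambda_q(L_{\mathrm{G}_{\rm c}})+\lambda_{\max}(b_jb_j^\top)=\lambda_q+1$, giving $\prod_{q=2}^m\lambda_q(S^{(j)})\le\prod_{q=2}^m(\lambda_q+1)$.

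Combining the two bounds produces $\lambda_1(S^{(j)})\ge\tfrac1m\prod_{q=2}^m\frac{\lambda_q}{\lambda_q+1}$. Since $t\mapsto t/(t+1)$ is increasing and $\lambda_q\ge\lambda_2$ for every $q\ge 2$, each of the $m-1$ factors is at least $\lambda_2/(\lambda_2+1)$, so $\lambda_1(S^{(j)})\ge\tfrac1m\big(\lambda_2/(\lambda_2+1)\big)^{m-1}$. As the right-hand side is independent of $j$, taking the minimum over $j\in\mathcal V$ establishes \eqref{eq:eigv of Sj}.

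I expect the main obstacle to be the determinant identity: recognizing that $\det(S^{(j)})$ collapses to the single $j$-independent quantity $\tfrac1m\prod_{k\ge2}\lambda_k$ is exactly what makes the ratio bound uniform in $j$, and it requires care with the singular Laplacian (through its adjugate, or equivalently the spanning-tree interpretation of the pendant-augmented graph). The remaining steps---the spectral reduction, Weyl's inequality, and the monotonicity estimate---are routine once this identity is in hand.
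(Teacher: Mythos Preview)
Your proposal is correct and follows essentially the same route as the paper: both arguments write $\lambda_1(S^{(j)})=\det(S^{(j)})/\prod_{q\ge 2}\lambda_q(S^{(j)})$, identify $\det(S^{(j)})=\tfrac1m\prod_{q\ge2}\lambda_q$ (the paper via Kirchhoff's Matrix--Tree theorem, you via the matrix determinant lemma---and you note the spanning-tree reading as well), bound $\lambda_q(S^{(j)})\le\lambda_q+1$ by Weyl, and finish with the monotonicity of $t/(t+1)$. Your explicit identity $S^{(j)}=L_{\mathrm{G}_{\rm c}}+b_jb_j^\top$ is a nice clarification that the paper leaves implicit.
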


\begin{proof}
Let $\overline{\mathrm{G}}_{\rm c}^{(j)}$ be the undirected graph obtained by dropping the direction of edge $(0,j)$ in $\mathrm{G}_{\rm c}^{(j)}$, as $\mathrm{G}_{\rm c}$ is assumed to be undirected.
With the decomposition of
the Laplacian $L_{{\rm G}_{\rm c}^{(j)}}$ in~\eqref{eq:definition L},
we can write the Laplacian for $\overline{\mathrm{G}}_{\rm c}^{(j)}$ 
as 
$L_{\overline{\mathrm{G}}_{\rm c}^{(j)}}=\left[\begin{array}{c|c}
 1 &  -b_j^\top\\
\hline
-b_j & S^{(j)}
\end{array}\right].$ 
It is clear that the numbers of the spanning trees in graphs $\mathrm{G}_{\rm c}$ and $\overline{\mathrm{G}}_{\rm c}^{(j)}$ are the same for any $j\in\mathcal V$.
Without loss of generality, 
we let the eigenvalues of $S^{(j)}$ be sorted in ascending order.
By Kirchhoff's matrix tree theorem \cite{biggs1993algebraic}, 
the former number equals to
$\frac{1}{m}\prod_{q=2}^{m}\lambda_q$, while the latter one 
$\det(S^{(j)})=\prod_{q=1}^{m} \lambda_q( {S}^{(j)})$ as 
the determinant of the submatrix constructed by deleting the first row and the first column of $L_{\overline{\mathrm{G}}_{\rm c}^{(j)}}$. Then,
% \begin{small}
% \begin{equation}
% \nonumber
$\lambda_1( {S}^{(j)})=\frac{\prod_{q=2}^{m}\lambda_q}{m\prod_{q=2}^{m} \lambda_q( {S}^{(j)})}.$
% \end{equation}
% \end{small}
On the other hand, by Weyl's inequality \cite{horn2012matrix}, we have 
% \begin{align}
$\lambda_q\le\lambda_q( {S}^{(j)})\le\lambda_q+1,~~\forall j,q\in\mathcal V,$
% \end{align}
which combining the above equality further yields that 
% \begin{small}
% \begin{equation}
% \nonumber
$\lambda_1( {S}^{(j)})
\ge\frac{1}{m}\prod_{q=2}^{m}\frac{\lambda_q}{\lambda_q+1}
\ge\frac{1}{m}(\frac{\lambda_2}{\lambda_2+1})^{m-1}.$
% \end{equation}
% \end{small}
Therefore, 
$
\min \limits_{q,j\in \mathcal V} \lambda_q( {S}^{(j)})\ge\frac{1}{m}(\frac{\lambda_2}{\lambda_2+1})^{m-1}$, which completes the proof.
\end{proof}

In virtue of Lemma~\ref{le:mu}, for a MAS with a maximum number $\overline{m}$ of allowable agents, if each agent is aware of $\overline{m}$,
beforehand,
they can uniformly take $\mu$ as the minimum integer satisfying 
\begin{align}
\label{eq:minmu}
\mu>\max\limits_{j\in\mathcal V}\rho(A_{jj})
\left(
\frac{\overline{m}^2-\overline m+4}{4}
\right)^{\overline{m}-1}\overline{m}.
\end{align}
This result is based on \eqref{eq:eigv of Sj} and the fact $\lambda_2\ge4/(m^2-m)$ from~\cite{mohar1991eigenvalues}.
Using the above parameter selection method, each agent can automatically reconfigure to stabilize the estimation error dynamics when facing the node changes of the MAS.
Since the estimation error dynamics can be stabilized both before and after system changes and these changes occur sporadically, employing the stability results from \cite{hespanha1999stability,bai2009instability} for switched systems, it follows that the error dynamics of~\eqref{eq:Do} are always stable as long as the proposed assumptions are valid during the operation, though some agents may sporadically join or leave the MAS.

\subsubsection{The case of directed communication graph} 
For the case where $\mathrm{G}_{\rm c}$ is a directed graph with a normalized $W^{(j)}$, we derive a lower bound for $\mu$.
\begin{lemma}
\label{le:ccc2012huang}
Suppose that the communication graph $\mathrm G_{\rm c}$ is a directed and strongly connected graph with a normalized weight matrix.
Then, 
\begin{equation}
\label{eq:eigv of Sj1}
\min \limits_{q,j\in \mathcal V} |\lambda_q( {S}^{(j)})|\geq 1-(1-\frac{1}{(m+1)!})^{\frac{1}{m}}.
\end{equation}
\end{lemma}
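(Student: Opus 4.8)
The plan is to reduce the claim to a spectral-radius bound for a substochastic matrix and then bound that spectral radius by an absorption-time estimate for an associated random walk. First I would exploit the normalized choice \eqref{eq:consensusgain1}: since every node in $\{1,\dots,m\}$ then has unit weighted in-degree, the block $S^{(j)}$ in \eqref{eq:definition L} becomes $S^{(j)}=I_m-P^{(j)}$, where $P^{(j)}=[w_{sr}^{(j)}]_{s,r\in\mathcal V}$ is nonnegative with all row sums equal to $1$ except row $j$, whose sum is $1-w_{j0}^{(j)}<1$. Thus $P^{(j)}$ is substochastic, and by strong connectivity the leak at node $j$ is reachable from every state (consistent with Lemma~\ref{le:S(j)}). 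Writing the eigenvalues of $P^{(j)}$ as $\mu_q$, those of $S^{(j)}$ are $1-\mu_q$, and since $|\mu_q|\le\rho(P^{(j)})\le\|P^{(j)}\|_\infty=1$ I obtain $\min_q|\lambda_q(S^{(j)})|=\min_q|1-\mu_q|\ge 1-\rho(P^{(j)})$. Hence it suffices to prove $\rho(P^{(j)})\le(1-\tfrac{1}{(m+1)!})^{1/m}$.

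Next I would estimate $\rho(P^{(j)})$ through its $m$-th power. Because $\|\cdot\|_\infty$ is submultiplicative and $P^{(j)}$ is nonnegative, $\rho(P^{(j)})^m\le\|(P^{(j)})^m\|_\infty=\max_{s\in\mathcal V}\sum_{r\in\mathcal V}\big((P^{(j)})^m\big)_{sr}$, and the right-hand side is exactly the largest probability, over starting states $s$, of the absorbing walk defined by $P^{(j)}$ surviving (i.e.\ not reaching node $0$) for $m$ steps. So the whole lemma reduces to showing that from every state the walk is absorbed within $m$ steps with probability at least $1/(m+1)!$, which would give $\rho(P^{(j)})^m\le 1-1/(m+1)!$.

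The core estimate is this absorption bound, and it is where the work lies. By strong connectivity there is a shortest walk $s=u_0\to u_1\to\cdots\to u_\ell=j$ with $\ell\le m-1$ visiting distinct states; following it and then taking the leak at $j$ yields one absorbing trajectory whose probability equals $\big(\prod_{i=0}^{\ell-1}|\mathcal N_{u_i}(\mathrm{G}_{\rm c})|^{-1}\big)\,(|\mathcal N_j(\mathrm{G}_{\rm c})|+1)^{-1}$. The key is to bound this product of normalization constants. Ordering the states by their distance $d$ to $j$ in the walk's transition graph, I would show that a state at distance $d$ has at most $m-d$ predecessors, since the $d-1$ nonempty strictly-closer layers, together with the state itself, are all excluded from being predecessors. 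Along the shortest trajectory the departing states sit at the distinct distances $\ell,\ell-1,\dots,1$, so the product of their in-neighbor counts is at most $(m-\ell)(m-\ell+1)\cdots(m-1)$; multiplying by the leak factor $|\mathcal N_j(\mathrm{G}_{\rm c})|+1\le m$ bounds the whole denominator by $m!\le(m+1)!$. Hence the trajectory probability, and a fortiori the true absorption probability, is at least $1/m!\ge 1/(m+1)!$, closing the argument.

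I expect this distance-layering bound to be the main obstacle. The naive estimate, bounding each of the $\le m-1$ normalization constants crudely by $m$, only gives $m^{-m}$, which is \emph{weaker} than $1/(m+1)!$ for $m\ge 3$; the essential point is the trade-off that large in-neighbor counts force small distances to $j$, so high-degree states can appear only in the late (near-$j$) layers of the trajectory and the product telescopes into a factorial. I would therefore be careful to justify that each distance layer below a given state is nonempty (which is what makes the per-layer bound $m-d$ valid) and to treat the boundary cases $s=j$ (empty path, leak factor only) separately.
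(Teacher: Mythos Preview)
Your approach is quite different from the paper's: the paper proves Lemma~\ref{le:ccc2012huang} in one line by citing Corollary~1 of~\cite{chao2012lower}, whereas you give a self-contained argument via substochastic matrices and an absorbing random walk. Your reduction is sound. Under the row-normalization~\eqref{eq:consensusgain1} every row of $L_{\mathrm G_{\rm c}^{(j)}}$ indexed by $\mathcal V$ has unit diagonal, so $S^{(j)}=I_m-P^{(j)}$ with $P^{(j)}$ nonnegative and row sums $\le 1$; since each eigenvalue $\mu_q$ of $P^{(j)}$ lies in the closed disk of radius $\rho(P^{(j)})\le 1$, one has $|1-\mu_q|\ge 1-\rho(P^{(j)})$. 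The layering estimate also checks out: an in-neighbor $r$ of a node $u$ at walk-distance $d$ satisfies $\mathrm{dist}_{\mathrm G_{\rm c}}(j,r)\ge d-1$ because the edge $r\to u$ gives $\mathrm{dist}_{\mathrm G_{\rm c}}(j,u)\le \mathrm{dist}_{\mathrm G_{\rm c}}(j,r)+1$, and layers $0,\dots,d-2$ are nonempty since any shortest $\mathrm G_{\rm c}$-path from $j$ to $u$ places one node in each of them; together with $u$ itself this excludes $d$ nodes, giving $|\mathcal N_u(\mathrm G_{\rm c})|\le m-d$. One terminological slip: the quantities you bound are the in-neighbors in $\mathrm G_{\rm c}$, which are the \emph{successors} in the walk's transition graph, not its ``predecessors''; the count is right but the label is inverted. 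Two remarks on scope: your argument is written for~\eqref{eq:consensusgain1} specifically (for the column-normalized choice~\eqref{eq:consensusgain2} one would run the same proof on $(P^{(j)})^\top$ using $\|\cdot\|_1$), and it actually yields the sharper constant $\rho(P^{(j)})^m\le 1-1/m!$, so the $(m{+}1)!$ in the lemma is slack here; the paper presumably inherits the looser constant from the cited reference, which treats a more general class of weight matrices.
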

\begin{proof}
The result follows directly from Corollary 1 of~\cite{chao2012lower}.
\end{proof}
For a MAS with a maximum of $\overline{m}$ allowable agents, the agents can choose a uniform $\mu$ as the smallest integer satisfying

\begin{small}
\begin{equation}
\label{eq:minmu1}
\mu>
\max\limits_{j\in\mathcal V}\rho(A_{jj})
\left(1-(1-\frac{1}{(\overline{m}+1)!})^{\frac{1}{\overline{m}}}\right)^{-1},
\end{equation}
\end{small}which guarantees a fully distributed design of the proposed observer \eqref{eq:Do}.

% {\color{blue}
% \begin{remark}
% \label{re:other pnp algorithms}
% The problem formulation in this paper differs from those in \cite{kim2019completely}, \cite{yang2023plug} and \cite{siljak2013decentralized}.
% Unlike in \cite{kim2019completely} and \cite{yang2023plug}, where the dynamics of the observed LTI system are independent of the monitoring agents, the observed MAS in this paper consists of all monitoring agents, with ``state-to-state'' and ``state-to-output'' couplings among them.
% Moreover, unlike in \cite{siljak2013decentralized}, where the decentralized observer only estimates the state of its respective subsystem, our approach estimates the entire state of the MAS.
% \end{remark}
% }

\subsection{Distributed Estimation with Inaccessible Inputs}\label{sub:discussion_to_unknown_input}
In general, the inputs of other agents, $u_j$ for $j\neq i$, are not available to agent $i$.
Nevertheless, the proposed distributed estimation framework remains effective. 
Specifically, in the observer \eqref{eq:Do}, equations \eqref{eq:O3} and \eqref{eq:O1} remain unchanged, since agent $i$ has access to its own input $u_i$.
However, equation \eqref{eq:O2}, which governs agent $i$'s estimation of other agents' states, needs to be modified as follows:

\begin{footnotesize}
\begin{align}
{\dot{\hat{x}}}_j^{(i)}=&{A}_{jj} {\hat{x}}_j^{(i)}+\sum_{l\in\mathcal N_j(\mathrm{G}_{\rm s})} {A}_{jl} {\hat{x}}_l^{(i)}+\mu\sum_{l\in \mathcal N_i(\mathrm{G}_{\rm c})}w_{il}^{(j)}\left({\hat{x}}_{j}^{(l)}- {\hat{x}}_j^{(i)}\right).
\end{align}
\end{footnotesize}Given that the inputs $u_j$ are bounded, i.e., $\left\| u_j\right\|\leq\bar{u}$ for all $j\in\mathcal{V}$, a stability analysis based on the input-to-state stability (ISS) framework--similar to that in Subsection \ref{sec32}--shows that the estimation error are bounded.
In particular, the estimation error satisfies 
\begin{align}\label{eq:bounded_estimation_error}
\left\| E(t)\right\| \leq \kappa \exp{(-\eta t)}\left\| E(0)\right\|+\frac{\kappa\left\| B\right\|}{\eta}\bar{u},
\end{align}
where $\kappa\geq 1$ and $\eta>0$ are constants determined by the Hurwitz property of the system matrix $R$ in \eqref{eq:entire error}, such that $\left\|\exp(Rt)\right\|\leq\kappa\exp(-\eta t)$ for all $t\geq 0$.
Therefore, under the same conditions as Theorem~\ref{theorem:main1}, the estimation error is bounded in the case of partial input information.
Similar results hold when the proposed distributed estimation algorithm is applied to a MAS with bounded process and measurement noise.
For brevity, we omit the detailed analysis, but these findings are validated through simulations (see Figs.~\ref{fig:heteroMASnoise} and \ref{fig:singleunknown}).

\section{Application in Cooperative Localization Problems}\label{sec4}

In this section, we discuss the application of our distributed estimator to cooperative localization problems. In a cooperative localization problem, only a small proportion of the agents are able to know their absolute positions, while the rest have access to relative positions with respect to
ambient nodes; the whole group of nodes will exchange information locally to find their positions jointly . 
In our problem setup, the nodes in question are mobile. We model an agent's kinematics using the dynamics of a single integrator and the whole systems using a MAS~\eqref{def:MAS_dynamics}-\eqref{def:MAS_observation}.
We first analyze the observability of the MAS and then propose a fully distributed localization algorithm.
We also establish necessary and sufficient conditions to make the localization process stable as the nodes move. 
To enable the algorithm, we need to distributedly construct an acyclic sensing graph with the aid of the local communication network. 
We finally extend our algorithm to double-integrator modeled agents as we may also be interested in the mobile nodes' velocities.

\subsection{Localization Problem for Single-integrator modeled Agents}\label{sec41}
Consider a group of $m$ single-integrator modeled agents moving in a 2D or 3D space, where the communication graph is described as $\mathrm G_{\rm c}$ and agent $i$ has the following dynamics:
\begin{equation}
\label{eq:single integrator}
 {\dot{p}}_i= {u}_i,\quad i\in\mathcal V
\end{equation}
where $p_i\in\bb{R}^{h}$ and $u_i\in\bb{R}^{h}$ denote the position and the control input of agent $i$ with respect to the global reference frame denoted by $^g\sum$, and $h=2$ or $3$.

Each agent $i\in\mathcal V$ has its local reference frame with the origin at $ {p}_i$, which is denoted by $^i\sum$.
In our problem setup, agents can sense the relative positions with respect to their neighbors in $\mathrm G_{\rm o}$ in reference frame $^i\sum$, but only a proportion of them know the absolute positions in $^g\sum$.
We consider that the orientation of $^i\sum$ is aligned to that of $^g\sum$ for all $i\in\mathcal V$, which can be realized using magnetic sensors in practice~\cite{oh2015survey}.
The relative positions with respect to local reference frames can be converted into those with respect to the global reference frame due to the alignment of the reference frames.
The sensing topology is described as $\mathrm G_{\rm o}$.
Hence, we model the relative position between agents $i$ and $j$ measured by agent $i$ as $ {p}_{ij}= {p}_j- {p}_i$ for $i\in \mathcal V$, $j\in \mathcal N_i(\mathrm G_{\rm o})$, see
Fig.~\ref{fig:setup} as illustration. In compact, the measurement of agent $i$ can be written as
\begin{equation}
\label{eq:single-yi}
 {y}_i= {C}_{ii} {p}_i+\sum_{j\in\mathcal N_i(\mathrm G_{\rm o})} {C}_{ij} {p}_j,
\end{equation}
where ${y}_i\in\bb{R}^{q_i}$, ${C}_{ii}\in\bb{R}^{q_i\times h}$ and $ {C}_{ij}\in\bb{R}^{q_i\times h}$ are the output, observation matrix and output coupling matrix of agent $i$, respectively. 
Note that the matrices $C_{ii}$ and $C_{ij}$ in \eqref{eq:single-yi} are only composed of the matrices $I_h$, $-I_h$ and $\mathbf 0_{h\times h}$.
The dynamics of the single-integrator multi-agent system can be written as
\begin{align}
&\dot{p}=u \label{eq:m-agent}\\
&y=Cp \label{eq:y-m-agent}
\end{align}
where $p=[p_1^\top,\ldots, p_m^\top]^\top$, $u=[u_1^\top,\ldots, u_m^\top]^\top$, $y=[y_1^\top,\ldots, y_m^\top]^\top$ and $C=\left[ C_{ij}\right]_{i,j\in\mathcal V}$ are the state, input, output and observation matrix of the single-integrator MAS, respectively.
The cooperative localization problem discussed here is to design a fully distributed localization algorithm such that each agent $i$ in set $\mathcal V$ is able to localize all agents in a MAS (i.e., obtain an estimate of the state $p$ of the MAS). 
Recall that a fully distributed algorithm is an algorithm that can be designed and updated by each agent locally and is resilient to node changes of the MAS. 
Given that the solutions to the localization problem show no significant difference between $h=2$ and $h=3$ in \eqref{eq:single integrator}, we will restrict our subsequent analysis to the case of $h=2$ for brevity.

%\Junfeng{To shuaiting: remove this part?
%Given that there is no significant difference in solutions to the localization problem are 
%similar for $h=2$ and $h=3$ in \eqref{eq:single integrator},  we will restrict our discussion to the case of $h=2$ in the subsequent analysis for brevity.}

\begin{figure}[!t]
\centerline{\includegraphics[width=3.5in]{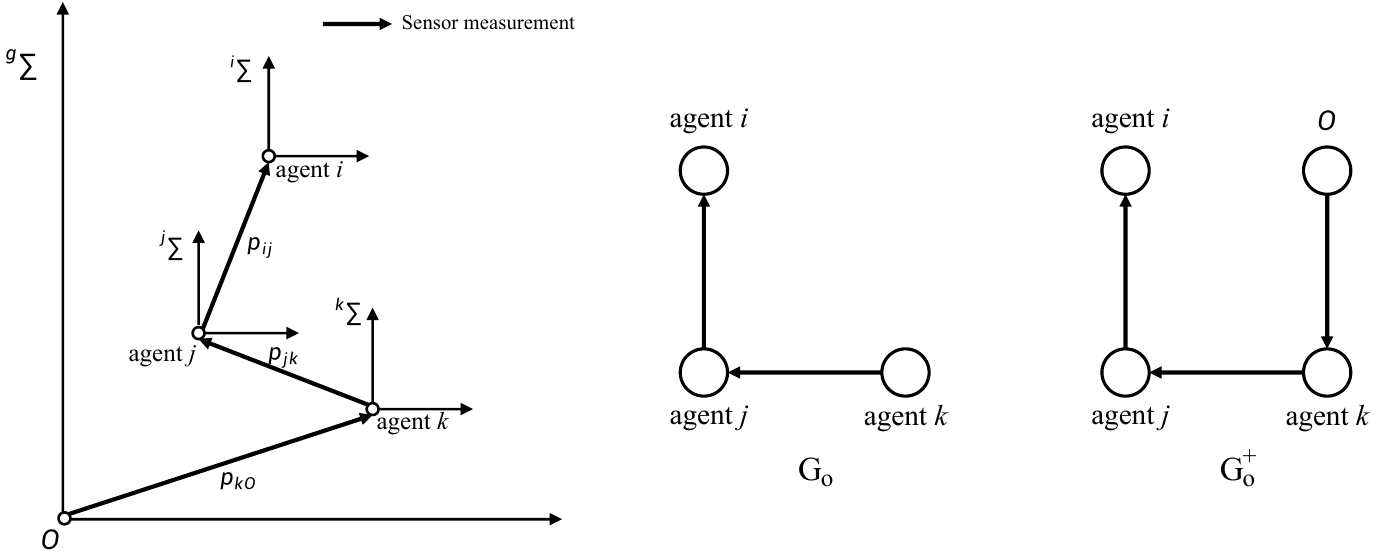}}
\caption{Sensing among agents in the planar case (agents $i$ and $j$ sense the relative position (i.e., $p_{ij}$ and $p_{jk}$), and agent $k$ senses its own absolute position (i.e., $p_{kO}$)) and its corresponding graphs $\mathrm{G}_{\rm o}$ and $\mathrm{G}_{\rm o}^+$.}
\label{fig:setup}
\end{figure}

\subsection{Observability Analysis}\label{sec42}
We define a new directed graph $\mathrm{G}_{\rm o}^+=(\mathcal{V}\cup\{O\},\mathcal{E}_{\rm o}\cup\mathcal{E}_{\rm a})$, where node $O$ denotes the origin of the global reference frame $^g\sum$ and the symbol $\mathcal{E}_{\rm a}$ denotes the set of edges between node $O$ and nodes with absolute position measurements, as shown in Fig.~\ref{fig:setup} where the edge $p_{kO}$ is in set $\mathcal{E}_{\rm a}$.
Assume $|\mathcal{E}_{\rm o}|+|\mathcal{E}_{\rm a}|:=q_{\rm o}+q_{\rm a}=q$, where $q_{\rm o}$ is the number of relative position observations, and $q_{\rm a}$ is the number of absolute position observations.
Then, we permute the observation matrix in \eqref{eq:y-m-agent} and obtain 
$ {C}=\left[\begin{array}{c}
 {C}_{\rm o}\\
 {C}_{\rm a}
\end{array}\right]\in\bb{R}^{2q\times2m},$
where $ {C}_{\rm o}\in\bb{R}^{2q_{\rm o}\times2m}$ corresponds to relative position observations and $ {C}_{\rm a}\in\bb{R}^{2q_{\rm a}\times2m}$ corresponds to absolute position observations.
Note that any edge $(i,j)\in\mathcal{E}_{\rm o}$ corresponds to two rows in $ {C}_{\rm o}$.
Without loss of generality we suppose $i<j$, the two rows in $ {C}_{\rm o}$ associated with the edge $(i,j)$ are of the following form

\begin{footnotesize}
\begin{equation}
\label{eq:Cr}
 {c}_{(i,j)}=\left[
\begin{array}{c|cc|c|cc|c}
 \mathbf{0} & \underset{(2i-1)^{\text{th}}}
{1} & 0 &  \mathbf{0} & \underset{(2j-1)^{\text{th}}}{-1} & 0 &  \mathbf{0}\\
 \mathbf{0} & 0 & \underset{(2i)^{\text{th}}}
{1} &  \mathbf{0} & 0 & \underset{(2j)^{\text{th}}}{-1} &  \mathbf{0}
\end{array}
\right],
\end{equation}
\end{footnotesize}
where $\underset{(2i-1)^{\text{th}}}
{1}$ means that the element $1$ is in column $2i-1$.
Any edge $(O, k)\in\mathcal{E}_{\rm a}$ characterizes the knowledge of the absolute position of agent $k$ and corresponds to two rows in $ {C}_{\rm a}$ of the form

\begin{footnotesize}
\begin{equation}
\label{eq:Ca}
 {c}_{(O, k)}=\left[
\begin{array}{c|cc|c}
 \mathbf{0} & \underset{(2k-1)^{\text{th}}}
{-1} & 0 &  \mathbf{0} \\
 \mathbf{0} & 0 & \underset{(2k)^{\text{th}}}
{-1} &  \mathbf{0}
\end{array}
\right].
\end{equation}
\end{footnotesize}

 % \vspace{-10pt}
A lemma about the rank of the matrix $ {C}_{\rm o}$ is presented as follows.
\begin{lemma}
\label{le:Cr}
The relative measurement matrix $ {C}_{\rm o}$ is not full column rank for any directed graph $\mathrm{G}_{\rm o}^+$.
\end{lemma}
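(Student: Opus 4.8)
The plan is to exhibit an explicit nontrivial element of the kernel of $C_{\rm o}$ that persists irrespective of the edge set $\mathcal{E}_{\rm o}$, thereby ruling out full column rank once and for all. The guiding intuition is that relative position measurements are invariant under a global translation of the whole formation: shifting every $p_i$ by a common vector leaves each difference $p_j-p_i$ unchanged, so every translation direction must annihilate $C_{\rm o}$. Since $C_{\rm o}$ only collects rows coming from edges in $\mathcal{E}_{\rm o}$, the absolute-measurement part $\mathcal{E}_{\rm a}$ of $\mathrm{G}_{\rm o}^+$ plays no role, and the argument will apply verbatim to every admissible graph.

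First I would fix the coordinate bookkeeping from \eqref{eq:Cr}. For each edge $(i,j)\in\mathcal{E}_{\rm o}$ with $i<j$, the two associated rows of $C_{\rm o}$ place $+1$ in column $2i-1$ and $-1$ in column $2j-1$ (first row), and $+1$ in column $2i$ and $-1$ in column $2j$ (second row), with zeros elsewhere. Thus every row of $C_{\rm o}$ is the difference of two standard basis covectors attached to the same physical coordinate axis of two distinct agents, which is exactly the algebraic encoding of the translation invariance above.

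Next I would introduce the two candidate null vectors $v_1=\mathbf{1}_m\otimes e_1$ and $v_2=\mathbf{1}_m\otimes e_2$, where $e_1=[1,0]^\top$ and $e_2=[0,1]^\top$; concretely $v_1=[1,0,1,0,\ldots,1,0]^\top$ and $v_2=[0,1,0,1,\ldots,0,1]^\top$ in $\bb{R}^{2m}$. A direct check gives $C_{\rm o}v_1=0$: in the first row of the block for $(i,j)$ the $+1$ at the odd column $2i-1$ and the $-1$ at the odd column $2j-1$ both meet entries of $v_1$ equal to $1$ and cancel, while in the second row the two nonzero entries sit at even columns where $v_1$ vanishes. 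Swapping the roles of odd and even columns yields $C_{\rm o}v_2=0$. Since $v_1,v_2$ are linearly independent, $\dim\ker C_{\rm o}\ge 2$.

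Finally I would conclude that $\rank C_{\rm o}\le 2m-2<2m$, so $C_{\rm o}$ is never of full column rank, for every $\mathrm{G}_{\rm o}^+$, because the derivation used nothing beyond the difference structure of the rows. There is no genuine obstacle here: the only point requiring care is the index-parity bookkeeping in verifying the cancellations. If one wanted the sharper statement that $\ker C_{\rm o}$ equals precisely the translation subspace $\col(\mathbf{1}_m\otimes I_2)$, one would additionally invoke connectivity of $\mathrm{G}_{\rm o}$ to show the kernel is no larger than two-dimensional, but this refinement is not needed for the claimed rank deficiency.
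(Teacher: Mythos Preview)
Your proof is correct and follows essentially the same approach as the paper: the paper simply observes that every row of $C_{\rm o}$ has entries summing to zero, so $C_{\rm o}\mathbf{1}_{2m}=\mathbf{0}$, which already suffices to preclude full column rank. Your version refines this by exhibiting the two independent null vectors $v_1,v_2$ (whose sum is the paper's $\mathbf{1}_{2m}$), yielding the slightly sharper bound $\dim\ker C_{\rm o}\ge 2$, but the underlying idea---translation invariance of relative measurements encoded in the difference structure of the rows---is identical.
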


\begin{proof}
By \eqref{eq:Cr}, for any directed graph $\mathrm{G}_{\rm o}^+$, the row sum of $ {C}_{\rm o}$ is equal to $0$, i.e., $ {C}_{\rm o} {1}_{2m}= \mathbf{0}_{2q_{\rm o}\times1}$.
\end{proof}

The observability condition of the single-integrator MAS \eqref{eq:m-agent}-\eqref{eq:y-m-agent} is given as follows.
\begin{proposition}
\label{pro:global obserbeability}
The single-integrator MAS \eqref{eq:m-agent}-\eqref{eq:y-m-agent} is observable if and only if the graph $\overline{\mathrm{G}}_{\rm o}^+$ is connected, where $\overline{\mathrm{G}}_{\rm o}^+$ is the undirected graph obtained by dropping the directions of all edges in $\mathrm{G}_{\rm o}^+$.
%, known as the mirror of $\mathrm{G}_{\rm o}^+$~\cite{you2013consensus},
\end{proposition}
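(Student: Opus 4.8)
The plan is to reduce observability to a rank condition on $C$ and then identify that rank with the connectivity of $\overline{\mathrm{G}}_{\rm o}^+$ through incidence-matrix theory. Because the single-integrator dynamics \eqref{eq:m-agent} have zero state matrix, the observability matrix of the pair $(0,C)$ collapses to $C$ itself (every higher block $C\cdot 0^k$ vanishes). Hence the MAS is observable if and only if $C\in\bb{R}^{2q\times 2m}$ has full column rank $2m$, and the remaining task is purely algebraic.

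First I would recognize $C$ as a Kronecker-structured incidence matrix. Let $\bar B\in\bb{R}^{q\times(m+1)}$ be the oriented incidence matrix of $\overline{\mathrm{G}}_{\rm o}^+$, with columns indexed by $1,\ldots,m,O$, orienting each relative edge $(i,j)\in\mathcal{E}_{\rm o}$ from $i$ to $j$ and each absolute edge $(O,k)\in\mathcal{E}_{\rm a}$ from $O$ to $k$. Reading off \eqref{eq:Cr} and \eqref{eq:Ca}, a relative edge contributes the incidence row with $\pm1$ in its two agent columns (and $0$ in the $O$-column), while an absolute edge contributes a row whose only surviving entry is the $-1$ in column $k$, because node $O$ has fixed position $\mathbf 0$ and its column is therefore absent from $C$. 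With the coordinate-interleaved ordering $[p_1^\top,\ldots,p_m^\top]^\top$, the two rows generated by each edge are exactly $b\otimes I_2$ (with $b$ the agent-indexed incidence row of that edge), so $C=\bar B_{-O}\otimes I_2$, where $\bar B_{-O}\in\bb{R}^{q\times m}$ is $\bar B$ with the $O$-column deleted. Therefore $\rank(C)=2\,\rank(\bar B_{-O})$, and $C$ is full column rank if and only if $\bar B_{-O}$ has rank $m$.

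The crux is then to tie $\rank(\bar B_{-O})$ to connectivity. I would use the standard fact that for a graph on $m+1$ nodes the incidence matrix satisfies $\rank(\bar B)=(m+1)-c$, where $c$ is the number of connected components, and that the right null space of $\bar B$ is spanned by the indicator vectors of those components. For sufficiency, if $\overline{\mathrm{G}}_{\rm o}^+$ is connected then $c=1$ and the null space is $\sp\{\mathbf 1_{m+1}\}$; any $v$ with $\bar B_{-O}v=0$ lifts to $\tilde v=[v^\top,0]^\top$ with $\bar B\tilde v=0$, so $\tilde v\in\sp\{\mathbf 1_{m+1}\}$, but its $O$-entry is $0$, forcing $\tilde v=0$ and hence $v=0$; thus $\bar B_{-O}$ is full column rank. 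For necessity, if $\overline{\mathrm{G}}_{\rm o}^+$ is disconnected then some component $\mathcal C$ omits $O$, and the indicator vector of $\mathcal C$ restricted to the agent columns is a nonzero null vector of $\bar B_{-O}$, so it is rank-deficient. Combining the two directions yields the claimed equivalence.

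The main obstacle is the clean identification $C=\bar B_{-O}\otimes I_2$ with the $O$-column removed: this single deletion is precisely what breaks the translation ambiguity recorded in Lemma~\ref{le:Cr}, where $C_{\rm o}\mathbf 1_{2m}=\mathbf 0$ reflects the all-nodes null vector that survives when no absolute anchor is present. Care is needed to verify the sign and orientation conventions in \eqref{eq:Cr}--\eqref{eq:Ca} so that the surviving entries genuinely match incidence-matrix entries, after which the connectivity argument is routine.
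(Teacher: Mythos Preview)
Your proposal is correct and takes a genuinely different route from the paper. Both proofs begin identically, noting that since the system matrix vanishes the observability matrix reduces to $C$, so the question becomes whether $C$ has full column rank. From there the arguments diverge. The paper argues sufficiency by extracting a spanning tree of $\overline{\mathrm{G}}_{\rm o}^+$ and proving by induction on $m$ that the submatrix of $C$ corresponding to the tree edges is full column rank; necessity is handled by decomposing $C$ into a block-diagonal form according to the component containing $O$ and the rest, and then invoking Lemma~\ref{le:Cr} to show the latter block is column-rank-deficient. You instead recognize $C$ as $\bar B_{-O}\otimes I_2$, the Kronecker product of a reduced incidence matrix with $I_2$, and appeal directly to the classical fact that an incidence matrix on $m+1$ nodes has rank $(m+1)-c$ with null space spanned by component indicators; the deletion of the $O$-column then kills the all-ones null vector precisely when $O$ lies in the unique component. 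Your argument is shorter, avoids the case-by-case induction, subsumes Lemma~\ref{le:Cr} as the observation that without the $O$-column the all-ones vector survives, and extends verbatim to $h=3$ by replacing $I_2$ with $I_h$. The paper's approach, while more laborious, is self-contained and does not presuppose the incidence-matrix rank theorem. One minor point worth tightening in your write-up: $\mathrm{G}_{\rm o}$ may contain both $(i,j)$ and $(j,i)$, so $\bar B$ is really the incidence matrix of the directed multigraph $\mathrm{G}_{\rm o}^+$ rather than of the simple undirected graph $\overline{\mathrm{G}}_{\rm o}^+$; this does not affect the rank, but the identification $C=\bar B_{-O}\otimes I_2$ is cleanest when $\bar B$ is taken over the directed edge set.
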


\begin{proof}
The system matrix of \eqref{eq:m-agent} is $ {A}= \mathbf{0}_{2m\times2m}$; the observability matrix can be written as

\begin{footnotesize}
\begin{equation}
\nonumber
{W}_{\rm o}=\left[\begin{array}{c}
 {C}\\
 {CA}\\
\vdots\\
 {CA}^{2m-1}
\end{array}\right]=
\left[\begin{array}{c}
 {C}\\
 \mathbf{0}_{2q\times2m}\\
\vdots\\
 \mathbf{0}_{2q\times2m}
\end{array}\right].
\end{equation}
\end{footnotesize}Thus, to analyze the observability of the MAS \eqref{eq:m-agent}-\eqref{eq:y-m-agent}, we need to discuss the rank of $ {C}$. 

{\bf If.}~Suppose $\overline{\mathrm{G}}_{\rm o}^+$ is connected, we can find a spanning tree $\bar{\mathcal{T}}'_{\rm o}$ from $\overline{\mathrm{G}}_{\rm o}^+$.
We denote the observation matrix corresponding to a spanning tree with $m=|\mathcal{V}|+1$ nodes as $ {C}_m\in\bb{R}^{2(m-1)\times 2(m-1)}$ (there are $2(m-1)$ observations because there are $m-1$ edges in the tree). 
Since $\bar{\mathcal{T}}'_{\rm o}$ consists of a minimal set of edges that connect all nodes of $\overline{\mathrm{G}}_{\rm o}^+$, the column rank of $C$ is no less than the column rank of $C_m$. 
Thus, to prove the MAS is observable, it suffices to prove that $C_m$ is full column rank.
We proceed with the proof by induction.

\underline{Step 1}: Consider the simple case with only two nodes in $\mathrm{G}_{\rm o}^+$, one of them being the global reference frame's origin $O$, i.e., $|\mathcal V|=1$. 
By \eqref{eq:Ca}, it is obvious that $ {C}_2\in\bb{R}^{2\times2}$ is full rank when $\overline{\mathrm{G}}_{\rm o}^+$ is connected.

\underline{Step 2}:  Consider the case with $m$ nodes in $\mathrm{G}_{\rm o}^+$ and one of them being the global reference frame's origin $O$, i.e., $|\mathcal V|=m-1$. 
Assume that the observation matrix $ {C}_m\in\mathbb{R}^{2(m-1)\times 2(m-1)}$ is full rank when $\overline{\mathrm{G}}_{\rm o}^+$ is connected. 
On this basis, we need to prove that the sufficiency extends to the case of $|\mathcal V|=m$.
Two situations need to be discussed.
In the first case, suppose all nodes in $\mathrm{G}_{\rm o}^+$ are directly connected to the origin, i.e., all agents have the absolute position. 
Based on \eqref{eq:Ca}, the observation matrix $ {C}_{m+1}\in\bb{R}^{2m\times 2m}$ corresponding to a spanning tree of $\overline{\mathrm{G}}_{\rm o}^+$ is a negative identity matrix after permutation, thus full column rank. 
In the second case, suppose some nodes are not directly connected to the origin in $\mathrm{G}_{\rm o}^+$, i.e., only a part of the agents can access the absolute position.
Note that a spanning tree of $\overline{\mathrm{G}}_{\rm o}^+$ with $m+1$ nodes becomes a spanning tree of $\overline{\mathrm{G}}_{\rm o}^+$ with $m$ nodes by dropping any leaf node.
Thus, by permutation, the observation matrix $ {C}_{m+1}\in\bb{R}^{2m\times 2m}$ corresponding to a spanning tree of $\mathrm{G}_{\rm o}^+$ can be written as  
     ${C_{m+1}}=   
    \left[
    \begin{array}{ccc|cc}
    \multicolumn{3}{c|}{{C}_{m}}& 0 &  0\\
    \hline
     \mathbf{0} & 1    & 0 & -1   & 0\\
     \mathbf{0} & 0    & 1 & 0    & -1  \\
    \end{array}
    \right].$
Recall the assumption that $ {C}_{m}$ is full column rank at the beginning of step 2, and it follows that $ {C}_{m+1}$ is full column rank.

{\bf Only if.}~ 
Suppose that the undirected graph $\overline{\mathrm{G}}_{\rm o}^+$ is not connected.
We can find a maximal connected subgraph containing the origin $O$, denoted by $\overline{\mathrm{G}}'_{\rm o1}$.
Then we separate $\overline{\mathrm{G}}_{\rm o}^+$ into two subgraphs, i.e.,  $\overline{\mathrm{G}}'_{\rm o1}$ and another subgraph denoted by $\overline{\mathrm{G}}'_{\rm o2}$.
Note that $\overline{\mathrm{G}}'_{\rm o1}$ is connected while $\overline{\mathrm{G}}'_{\rm o2}$ is not necessarily connected.
According to the graphs $\overline{\mathrm{G}}'_{\rm o1}$ and $\overline{\mathrm{G}}'_{\rm o2}$, by permutation, the observation matrix $ {C}$ can be written as
    $C=\left[\begin{array}{c|c}
    C_{\rm o1} &  \mathbf{0}\\
    \hline
     \mathbf{0} &  C_{\rm o2}
    \end{array}\right].
    $
The submatrix $ {C}_{\rm o2}$ can be regarded as the corresponding relative measurement matrix of the directed graph, which induces $\overline{\mathrm{G}}'_{\rm o2}$.
By Lemma~\ref{le:Cr}, we obtain $ {C}_{\rm o2}$ is not full column rank. 
Hence, $ {C}$ is not full column rank, i.e., the system \eqref{eq:m-agent}-\eqref{eq:y-m-agent} is unobservable.
\end{proof}

We further discuss the observability condition of the pair $(A_{ii}, C_{ii})$ of the single-integrator modeled agent \eqref{eq:single integrator}-\eqref{eq:single-yi}.
\begin{proposition}
\label{pro:local obserbeability}
In a single-integrator MAS \eqref{eq:m-agent}-\eqref{eq:y-m-agent} consisting of $m$ agents with dynamics \eqref{eq:single integrator}-\eqref{eq:single-yi},  for $\forall i\in\mathcal V$, the pair $( {A}_{ii}, {C}_{ii})$ is observable if and only if node $i$ is not a source in graph $\mathrm{G}_{\rm o}^+$, i.e., agent $i$ has at least one relative measurement or absolute measurement.
\end{proposition}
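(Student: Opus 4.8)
The plan is to exploit the single-integrator structure to collapse the observability question into a pure rank condition. Since \eqref{eq:single integrator} gives $A_{ii}=\mathbf{0}_{h\times h}$, the observability matrix of $(A_{ii},C_{ii})$ is $\col\!\big(C_{ii},\,C_{ii}A_{ii},\,\ldots,\,C_{ii}A_{ii}^{h-1}\big)=\col\!\big(C_{ii},\,\mathbf 0,\,\ldots,\,\mathbf 0\big)$, whose rank is exactly $\rank(C_{ii})$. Hence $(A_{ii},C_{ii})$ is observable if and only if $C_{ii}$ has full column rank $h$, and the whole proposition reduces to showing that $C_{ii}$ is full column rank precisely when node $i$ is not a source of $\mathrm{G}_{\rm o}^+$.

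The key step is to read off the block structure of $C_{ii}$ directly from the measurement model \eqref{eq:single-yi} and the row forms \eqref{eq:Cr}--\eqref{eq:Ca}. I would first note that $p_i$ enters agent $i$'s output $y_i$ only through the term $C_{ii}p_i$, and that each measurement taken by agent $i$ that contains its own position contributes exactly one $h\times h$ block $-I_h$ to $C_{ii}$: a relative measurement $p_{ij}=p_j-p_i$ contributes the coefficient $-I_h$ of $p_i$ (cf.\ \eqref{eq:Cr}), and an absolute measurement of agent $i$ likewise contributes $-I_h$ (cf.\ \eqref{eq:Ca}). Consequently $C_{ii}$ is a vertical stack of $-I_h$ blocks, and the number of such blocks equals the number of measurements of agent $i$ in which $p_i$ appears.

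It then remains to identify this count with the in-degree of $i$ in $\mathrm{G}_{\rm o}^+$. Under the edge convention of Section~\ref{sec21}, a relative measurement by agent $i$ of a neighbor $j$ corresponds to an edge $(j,i)\in\mathcal{E}_{\rm o}$ directed into $i$, while an absolute measurement of agent $i$ corresponds to the edge $(O,i)\in\mathcal{E}_{\rm a}$, also directed into $i$; thus the in-degree of $i$ in $\mathrm{G}_{\rm o}^+$ equals the number of $-I_h$ blocks in $C_{ii}$. Putting the pieces together concludes both directions: if $i$ is not a source its in-degree is at least one, so $C_{ii}$ contains at least one $-I_h$ block and is full column rank, yielding observability; conversely, if $i$ is a source its in-degree is zero, agent $i$ makes no measurement involving $p_i$, so $C_{ii}$ is a zero (or empty) matrix that cannot be full column rank, and $(A_{ii},C_{ii})$ is unobservable. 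I expect the only real obstacle to be the bookkeeping in this last paragraph---carefully matching the sign and edge-direction conventions so that ``measurement involving $p_i$'' coincides exactly with ``in-edge at $i$ in $\mathrm{G}_{\rm o}^+$''---since the rank argument itself is immediate once $C_{ii}$ is recognized as a stack of identity blocks.
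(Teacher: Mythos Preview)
Your proposal is correct and follows essentially the same route as the paper: the paper's proof simply states that for a source $i$ the matrix $C_{ii}$ is all-zeros, and otherwise $C_{ii}$ is a stack of $-I_2$ blocks, which is exactly the structural observation you spell out (together with the reduction $A_{ii}=\mathbf 0$ $\Rightarrow$ observability $\Leftrightarrow$ full column rank of $C_{ii}$). Your version is more detailed in tracking the edge-direction bookkeeping, but the argument is the same.
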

\begin{proof}
The proof follows from the fact that for a source $i$, $C_{ii}$ is an all-zeros matrix, otherwise ${C}_{ii}$ is a vector whose elements are the matrix $-I_2$.
\end{proof}

\subsection{Fully Distributed Localization Scheme}\label{sec43}
The sensing graph $\mathrm G_{\rm o}$ is frequently not a directed acyclic graph (DAG) in the cooperative localization problem.
To overcome this challenge, we devise a distributed algorithm, outlined in Algorithm 1, to construct a directed acyclic sensing graph $\mathrm G_{\rm o}^*$ from a general graph $\mathrm G_{\rm o}$.

In Algorithm~\ref{alg:case2}, each agent first randomly chooses a positive integer as its unique ID. 
Then, we introduce an \textit{agent hierarchy rule}, whereby all agents with absolute positions are at layer $0$, and agents with relative positions to agents at layer $0$ are at layer $1$. 
We continue this procedure until each agent is assigned a layer. 
The function $\rm layerjudge()$ is defined to characterize the hierarchy rule.
With the above ID selection and hierarchical operation, we stipulate that the relative position measurement between two agents at different layers can only be used by the agent at the larger layer,
while the relative position measurement between two agents at the same layer can only be used by the agent with the larger ID.
This process is achieved by agents transmitting relative positions through $\mathrm G_{\rm c}$, given that the communication range of an agent is usually wider than its sensing range, i.e., $\mathrm G_{\rm o}\subset \mathrm G_{\rm c}$.

The proposed DAGC algorithm is fully distributed, as each agent can determine its ID and identify its layer independently through local measurement.
Furthermore, the proposed DAGC algorithm does not affect the observability condition of the MAS \eqref{eq:m-agent}-\eqref{eq:y-m-agent}.
Notably, since the construction of $\mathrm G_{\rm o}^*$ only involves changing the direction of the edges of $\mathrm G_{\rm o}$ and not the number of edges, if the origin sensing graph $\mathrm G_{\rm o}$ satisfies Propositions~\ref{pro:global obserbeability} and~\ref{pro:local obserbeability}, so does the constructed $\mathrm G_{\rm o}^*$. 
We use Example~\ref{exam:DAG} to illustrate the outcome of the DAGC algorithm.
\begin{example}
\label{exam:DAG}
Consider a sensing graph $\mathrm G_{\rm o}$ shown in Fig.~\ref{fig:DAGC}.
Agent $i$ is the only one that has its absolute position.
The graph $\mathrm G_{\rm o}$ is not a DAG.
By Algorithm~\ref{alg:case2}, $\mathrm G_{\rm o}$ can be converted to a directed acyclic graph $\mathrm G_{\rm o}^*$ with the IDs and layers shown in Fig.~\ref{fig:DAGC}.
\begin{figure}[!ht]
\centerline{\includegraphics[width=7cm]{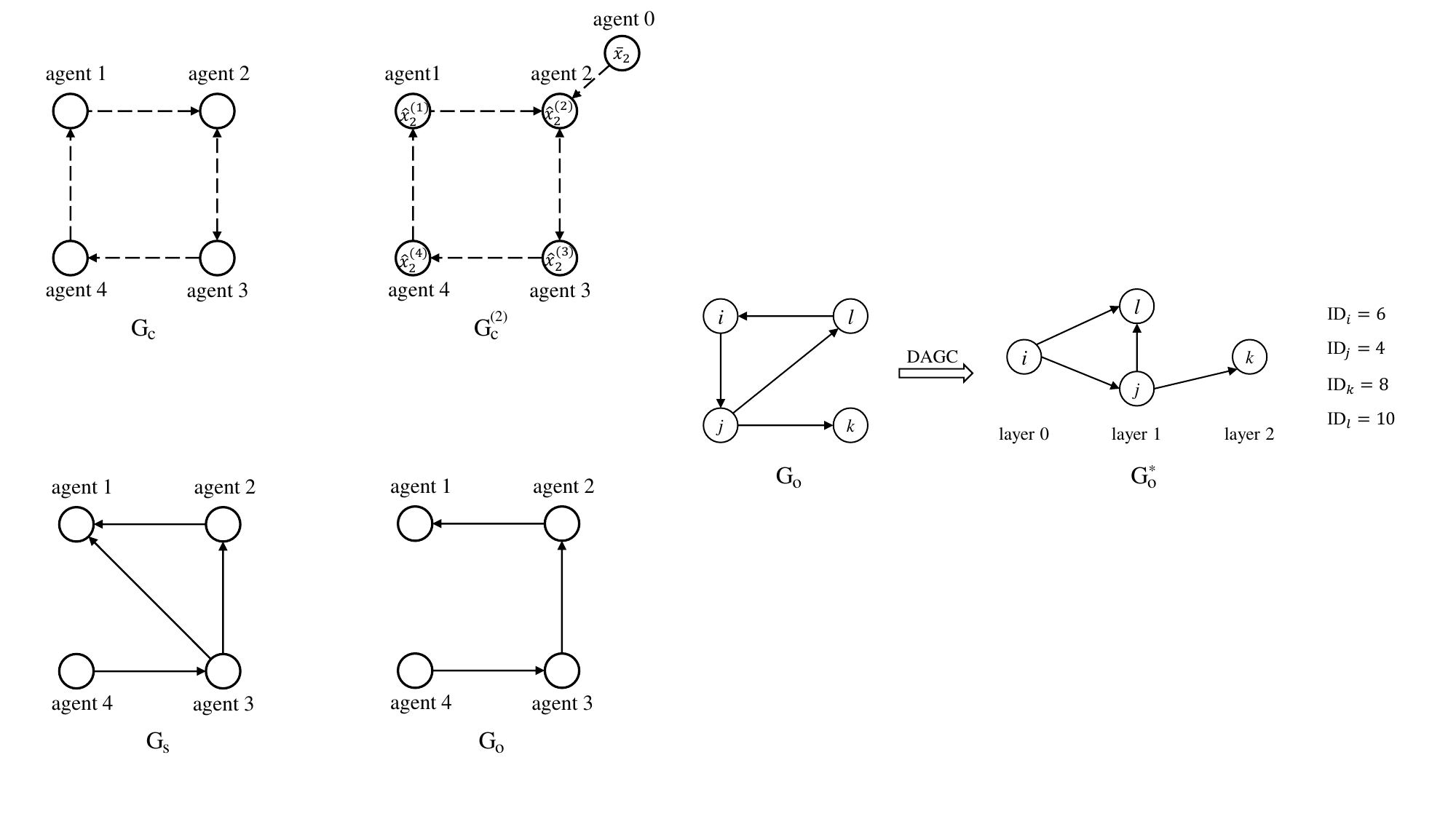}}
\caption{An example of converting a sensing graph into an acyclic graph via Algorithm~\ref{alg:case2}.}
\label{fig:DAGC}
\end{figure}
\end{example}

We now present the form of the fully distributed cooperative localization algorithm for the single-integrator multi-agent system \eqref{eq:m-agent}-\eqref{eq:y-m-agent} as follows:

\begin{footnotesize}
\begin{subequations}
\label{eq:localization}
\begin{align}
& {\dot{\hat{p}}}_j^{(i)} = {u}_{j}+ \sum_{l\in \mathcal N_i(\mathrm{G}_{\rm c})} w_{il}^{(j)}({\hat{p}}_{j}^{(l)}- {\hat{p}}_j^{(i)}), {\rm ~for~}j\in \mathcal{V}/\{i\};
     \label{eq:L3}\\
& {\dot{\hat{p}}}_i^{(i)}= {u}_{i}+w_{i0}^{(i)}  ({\bar{p}}_{i}- {\hat{p}}_i^{(i)})+ \sum_{l\in \mathcal N_i(\mathrm{G}_{\rm c})} w_{il}^{(i)}({\hat{p}}_{i}^{(l)}- {\hat{p}}_i^{(i)}),
   \label{eq:L2}
\end{align}
with
\begin{align}
& {\dot{\bar{p}}}_i= {u}_{i}+ {F}_i({y}_i^*-({C}_{ii} {\bar{p}}_i+\sum_{l\in\mathcal N_i(\mathrm{G}_{\rm o}^*)} {C}_{il} {\hat{p}}_l^{(i)})),
 \label{eq:L1}
\end{align}
\end{subequations}
\end{footnotesize}where ${y}_i^*$ denotes the measurement of agent $i$ after adjustment by the DAGC algorithm, and all other notations are analogous to those in \eqref{eq:Do}. 
The Luenberger gain $F_i$ and the consensus gains $w_{ij}$ are chosen by utilizing the same approach in 
Section~\ref{sec32}.

\begin{algorithm}
 \caption{Distributed DAG construction (DAGC) from \textit{Agent Hierarchy Rule}.}
 \label{alg:case2}
 \footnotesize
 \begin{algorithmic}[1]
  \State {\bf for} agent $i\in \mathcal V$ {\bf do}
  \State ~~~~${\rm ID}_i={\rm random}()\in \bb{N}^{+}$ (i.e., the set of positive integers)
  \State ~~~~{\bf for} agent $j\in\mathcal N_i(\mathrm G_{\rm o})$ {\bf do}
  \State ~~~~~~~~{\bf if} ${\rm layerjudge}(i)<{\rm layerjudge}(j)$ {\bf then} $ {p}_{ji}\gets -{p}_{ij}$, $ {p}_{ij}\gets  \mathbf{0}$
  \State ~~~~~~~~{\bf elseif} ${\rm layerjudge}(i)>{\rm layerjudge}(j)$ {\bf then} 
  \State ~~~~~~~~~~~~~~~$ {p}_{ij}\gets {p}_{ij}$, $ {p}_{ji}\gets  \mathbf{0}$
  \State ~~~~~~~~{\bf else}
  \State ~~~~~~~~~~~~~{\bf if} ${\rm ID}_i={\rm ID}_j$ {\bf then} ${\rm ID}_j\gets {\rm ID}_j+1$
  \State ~~~~~~~~~~~~~{\bf elseif} ${\rm ID}_i<{\rm ID}_j$ {\bf then} $ {p}_{ji}\gets -{p}_{ij}$, $ {p}_{ij}\gets  \mathbf{0}$
  \State ~~~~~~~~~~~~~{\bf else} $ {p}_{ij}\gets {p}_{ij}$, $ {p}_{ji}\gets  \mathbf{0}$
  \State ~~~~~~~~~~~~~{\bf end if}
  \State ~~~~~~~~{\bf end if}
  \State ~~~~{\bf end for}
  \State ~~~~{\bf for} agent $j\in\mathcal S_i(\mathrm G_{\rm o})$ {\bf do}
  \State ~~~~~~~~{\bf if} ${\rm layerjudge}(j)<{\rm layerjudge}(i)$ {\bf then} $ {p}_{ij}\gets -{p}_{ji}$, $ {p}_{ji}\gets  \mathbf{0}$
  \State ~~~~~~~~{\bf elseif} ${\rm layerjudge}(j)>{\rm layerjudge}(i)$ {\bf then} 
  \State ~~~~~~~~~~~~~~~$ {p}_{ji}\gets -{p}_{ji}$, $ {p}_{ij}\gets  \mathbf{0}$
  \State ~~~~~~~~{\bf else}
  \State ~~~~~~~~~~~~~{\bf if} ${\rm ID}_i={\rm ID}_j$ {\bf then} ${\rm ID}_j\gets {\rm ID}_j+1$
  \State ~~~~~~~~~~~~~{\bf elseif} ${\rm ID}_j<{\rm ID}_i$ {\bf then} $ {p}_{ij}\gets -{p}_{ji}$, $ {p}_{ji}\gets  \mathbf{0}$
  \State ~~~~~~~~~~~~~{\bf else} $ {p}_{ji}\gets {p}_{ji}$, $ {p}_{ij}\gets  \mathbf{0}$
  \State ~~~~~~~~~~~~~{\bf end if}
  \State ~~~~~~~~{\bf end if}
  \State ~~~~{\bf end for}
  \State {\bf end for}
 \end{algorithmic}
\end{algorithm}

The following theorem provides conditions for the convergence of $ {\hat{p}}^{(i)}$ to ${p}$ when ${u}$ is known to all agents, where $ {\hat{p}}^{(i)}=[(\hat{p}_1^{(i)})^\top, \ldots,  (\hat{p}_m^{(i)})^\top]^\top$ denotes the estimate of $p$ generated by agent $i$. 
For real-world applications where an agent cannot access the inputs of other agents and where the inputs are assumed to be bounded, we can demonstrate that the estimation errors of our proposed localization algorithm remain bounded according to input-to-state stability analysis.
\begin{theorem}
\label{pro:main}
Consider a single-integrator multi-agent system \eqref{eq:m-agent}-\eqref{eq:y-m-agent}, provided that ${u}$ is known to all agents, for $\forall i\in\mathcal V$, the estimate $ {\hat{p}}^{(i)}$ obtained by agent $i$ using \eqref{eq:localization} asymptotically converges to $ {p}$ if and only if all the following conditions are satisfied:
\begin{enumerate}
\item[(1)]$\overline{\mathrm{G}}_{\rm o}^+$ is connected;
\item[(2)]$\mathrm{G}_{\rm o}^+$ has no sources except the origin $O$;
\item[(3)]$\mathrm{G}_{\rm c}$ is strongly connected.
\end{enumerate}
\end{theorem}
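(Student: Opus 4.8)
The plan is to reduce Theorem~\ref{pro:main} to the already-established results by recognizing the single-integrator MAS \eqref{eq:m-agent}-\eqref{eq:y-m-agent} as a special instance of the general MAS \eqref{eq:sys1}-\eqref{eq:sys2} and the localization algorithm \eqref{eq:localization} as a special instance of the distributed observer \eqref{eq:Do}. Concretely, here we have $A_{ii}=\mathbf{0}_{h\times h}$, $A_{il}=\mathbf{0}$ (no state-to-state coupling, so $\mathrm{G}_{\rm s}$ is empty), $B_{ii}=I_h$, and $C_{ii}, C_{il}$ as in \eqref{eq:single-yi}, with the sensing graph replaced by the DAG $\mathrm{G}_{\rm o}^*$ produced by Algorithm~\ref{alg:case2}. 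Since $\mathrm{G}_{\rm s}$ is empty and $\mathrm{G}_{\rm o}^*$ is acyclic by construction, Assumption~\ref{ass:Gs-Go} holds trivially. The strategy is therefore to show that each of the three listed conditions corresponds exactly to one of the hypotheses needed to invoke Theorem~\ref{theorem:main1}, and then to argue that joint observability of the MAS is both necessary and recoverable from conditions (1)-(2).

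First I would handle the \textbf{if} direction. Condition~(2), that $\mathrm{G}_{\rm o}^+$ has no sources except the origin $O$, is precisely the statement of Proposition~\ref{pro:local obserbeability} that each pair $(A_{ii},C_{ii})$ is observable, i.e.\ Assumption~\ref{ass:local observability} holds; because $A_{ii}=\mathbf{0}$ is not Hurwitz, the nontrivial hypothesis of Theorem~\ref{theorem:main1} is met and a gain $F_i$ rendering $A_{ii}-F_iC_{ii}$ Hurwitz exists. Condition~(3), strong connectivity of $\mathrm{G}_{\rm c}$, is exactly the graph hypothesis of Theorem~\ref{theorem:main1}. I would note that the DAGC algorithm only reorients edges of $\mathrm{G}_{\rm o}$ without changing their number, so by the remark following Algorithm~\ref{alg:case2} the constructed $\mathrm{G}_{\rm o}^*$ inherits Propositions~\ref{pro:global obserbeability} and~\ref{pro:local obserbeability}; hence conditions (1)-(2) carry over to $\mathrm{G}_{\rm o}^*$. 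With Assumptions~\ref{ass:local observability}-\ref{ass:Gs-Go} verified and $\mathrm{G}_{\rm c}$ strongly connected, Theorem~\ref{theorem:main1} applies verbatim (choosing $\mu$ sufficiently large, e.g.\ via \eqref{eq:minmu} or \eqref{eq:minmu1}) and yields $\lim_{t\to\infty}\|\hat p^{(i)}-p\|=0$ for all $i$.

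For the \textbf{only if} direction I would argue contrapositively that the failure of any single condition breaks convergence. If (3) fails, the \textbf{only if} part of Theorem~\ref{theorem:main1} already shows some $S^{(j)}$ is singular and, since $A_{jj}=\mathbf{0}$ is not Hurwitz, $T^{(j)}$ and hence $R$ are not Hurwitz, so the error cannot vanish for every initial condition. If (1) fails, then by Proposition~\ref{pro:global obserbeability} the MAS is unobservable, meaning the state $p$ cannot be reconstructed from the collective outputs even by a centralized observer, so no distributed scheme can drive every $\hat p^{(i)}$ to $p$ for all initial states. If (2) fails, some agent $i$ is a source in $\mathrm{G}_{\rm o}^+$ and $C_{ii}=\mathbf{0}$, so $(A_{ii},C_{ii})$ is unobservable by Proposition~\ref{pro:local obserbeability}; I would show this obstructs the Luenberger-like rule \eqref{eq:L1}, since $A_{ii}-F_iC_{ii}=\mathbf{0}$ can never be made Hurwitz, leaving $\bar p_i$ unstabilizable and thus corrupting the leader signal tracked by all agents in \eqref{eq:L3}-\eqref{eq:L2}.

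The main obstacle I anticipate is the \textbf{only if} argument for condition~(2) in isolation: conditions (1) and (2) are not independent, because Proposition~\ref{pro:global obserbeability} concerns connectivity of $\overline{\mathrm{G}}_{\rm o}^+$ while a source that still lies in a connected component could in principle have its state inferred through couplings from downstream agents in a centralized setting. The delicate point is to show that the \emph{specific structure} of the observer \eqref{eq:localization}—which relies on each agent's own observable pair $(A_{ii},C_{ii})$ to anchor its leader estimate $\bar p_i$—genuinely requires node-level observability and cannot compensate for a source via the state-coupling terms (here absent, since $A_{il}=\mathbf 0$). I would make this rigorous by examining the diagonal block $A_{ii}-F_iC_{ii}=\mathbf 0$ in $T^{(i)}$ and showing it forces a zero eigenvalue in $R$, independent of $\mu$, thereby establishing that node-level observability is not merely sufficient but necessary for this particular estimator architecture.
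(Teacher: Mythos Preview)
Your proposal is correct and takes essentially the same approach as the paper: the paper's own proof is the single sentence ``The result follows from Theorem~\ref{theorem:main1} and Propositions~\ref{pro:global obserbeability} and~\ref{pro:local obserbeability},'' and your reduction to Theorem~\ref{theorem:main1} via the identification $A_{ii}=\mathbf{0}$, $\mathrm{G}_{\rm s}$ empty, $\mathrm{G}_{\rm o}^*$ acyclic, together with the correspondence (condition~(2) $\leftrightarrow$ Proposition~\ref{pro:local obserbeability}, condition~(1) $\leftrightarrow$ Proposition~\ref{pro:global obserbeability}, condition~(3) $\leftrightarrow$ the graph hypothesis of Theorem~\ref{theorem:main1}) is exactly that reduction spelled out. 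Your additional care about the only-if direction for condition~(2)---tracing the zero eigenvalue of $A_{ii}-F_iC_{ii}=\mathbf{0}$ through the block-triangular structure of $R$---goes beyond what the paper writes but is consistent with its framework.
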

\begin{proof}
The result follows from Theorem~\ref{theorem:main1} and Propositions~\ref{pro:global obserbeability} and~\ref{pro:local obserbeability}.
% The detail is omitted for brevity.
\end{proof}

% \begin{remark}
% \label{re:sectionrelationship}
% The cooperative localization algorithm \eqref{eq:localization} can be seen as a special case of the proposed distributed observer \eqref{eq:Do}, where the system matrices and state-coupling matrices of each agent are $A_{ii}=A_{ij}=\mathbf{0}$ (from which we know that $\max_{i\in\mathcal V} \rho(A_{ii})=0$)
% for all $i,j\in\mathcal V$, and the coupling gain can be set to be $\mu>0$
% according to~\eqref{eq:mu}.
% We set $\mu=1$ in our localization algorithm \eqref{eq:localization} for brevity. Any positive coupling gain applies equally to the cooperative localization algorithm~\eqref{eq:double localization} for
% double-integrator MASs.
% \end{remark}

\subsection{Extension to Double-integrator Multi-agent Systems}\label{sec44}
In cooperative localization problems, we may also be interested in estimating mobile nodes' velocities, which requires modeling each agent's kinematics as a double-integrator system.
The above results can be extended to double-integrator multi-agent systems under a similar setting, where all local reference frames are aligned to the global reference frame, and all agents can sense the relative position and relative velocity, but not all agents can sense the absolute position and absolute velocity.
The notations in this part are analogous to those in the last subsection.  
In particular, consider a double-integrator multi-agent system consisting of $m$ agents modeled by  
\begin{equation}
\label{eq:double integrator}
\left\{\begin{array}{l}
{\dot{p}}_i= {v}_i,\\
{\dot{v}}_i= {u}_i,
\end{array}
\right. i\in\mathcal V
\end{equation}
where $p_i\in\bb{R}^{h}$, $v_i\in\bb{R}^{h}$ and $u_i\in\bb{R}^{h}$ are the position, velocity and control input of agent $i$, respectively.
The sensor measurement of agent $i$ can be written as 
\begin{equation}
\label{eq:double-yi}
 {y}_i= {C}_{ii} {s}_i+\sum_{j\in\mathcal N_i(\mathrm G_{\rm o})} {C}_{ij} {s}_j
\end{equation}
where $s_i=[p_i^\top,v_i^\top]^\top\in\bb{R}^{2h}$, ${y}_i\in\bb{R}^{q_i}$, ${C}_{ii}\in\bb{R}^{q_i\times 2h}$ and $ {C}_{ij}\in\bb{R}^{q_i\times 2h}$ are the state, output, observation matrix and output coupling matrix of agent $i$, respectively, and the matrices $C_{ii}$ and $C_{ij}$ are only composed of the matrices $I_{2h}$, $-I_{2h}$ and $\mathbf 0_{2h\times 2h}$. 
Thus, the entire state of the double-integrator multi-agent system can be written as $s=[s_1^\top,\dots,s_m^\top]^\top$, and its dynamics can be written as 
\begin{align}
&\dot{s}=As+Bu \label{eq:double-agent}\\
&y=Cs \label{eq:y-double-agent}
\end{align}
where, for $i\in \mathcal V$, the system matrix is 
$A=\textup{diag}(J_1,\ldots, J_m)$ with $J_i=J=\left[\mathbf{0}_4, \mathbf{0}_4,b_1,b_2\right]\in\bb{R}^{4\times 4}$
and the input matrix is 
$B=\textup{diag}(K_1,\ldots, K_m) $ with $K_i=K=\left[b_2,b_4\right]\in\bb{R}^{4\times 2}$.
Note that the system matrix $A$ satisfies ${A}^2=\mathbf{0}$.
To analyze the observability of the double-integrator multi-agent system \eqref{eq:double-agent}-\eqref{eq:y-double-agent}, we need to discuss the rank of the following observability matrix 
$ {W}_{\rm o}=
\left[\begin{array}{c}
 {C}\\
 {CA}\\
 \mathbf{0}
\end{array}\right].$
We can get the observability results similar to Propositions~\ref{pro:global obserbeability} and \ref{pro:local obserbeability}.

% By the similar , We obtain the following results by an analysis similar to Section~\ref{sec42}.
% \begin{corollary}
% \label{coro:2-global observability}
% The double-integrator multi-agent system \eqref{eq:double-agent}-\eqref{eq:y-double-agent} is observable if and only if the graph $\overline{\mathrm{G}}_{\rm o}^+$ is connected, where $\overline{\mathrm{G}}_{\rm o}^+$ is the undirected graph obtained by dropping the direction of all edges in $\mathrm{G}_{\rm o}^+$.
% \end{corollary}
% % \noindent \textbf{Proof.~}
% % The proof is similar to that of Proposition~\ref{pro:global obserbeability}.
% % The detail is omitted for brevity.
% % \hfill$\blacksquare$

% \begin{corollary}
% \label{coro:2-local observability}
% In a double integrator multi-agent system \eqref{eq:double-agent}-\eqref{eq:y-double-agent} consisting of $m$ agents with dynamics \eqref{eq:double integrator}-\eqref{eq:double-yi}, for $\forall i\in\mathcal V$, the pair $( {A}_{ii}, {C}_{ii})$ is observable if and only if node $i$ is not a source in graph $\mathrm{G}_{\rm o}^+$, i.e, agent $i$ has at least one observation.
% \end{corollary}
% \noindent \textbf{Proof.~}
% The proof is the same as that of Proposition~\ref{pro:local obserbeability}.
% The detail is omitted for brevity.
% \hfill$\blacksquare$

The form of fully distributed cooperative localization algorithm for the double-integrator MAS \eqref{eq:double-agent}-\eqref{eq:y-double-agent} is given by

\begin{footnotesize}
\begin{subequations}
\label{eq:double localization}
\begin{align}
{\dot{\hat{s}}}_j^{(i)} =& J\hat{s}_j^{(i)}+K{u}_{j}\nonumber\\
     &+ \sum_{l\in \mathcal N_i(\mathrm{G}_{\rm c})} w_{il}^{(j)}({\hat{s}}_{j}^{(l)}- {\hat{s}}_j^{(i)}), {\rm ~for~}j\in \mathcal{V}/\{i\};
     \label{eq:D3}\\
{\dot{\hat{s}}}_i^{(i)}=& J\hat{s}_i^{(i)}+K{u}_{i}\nonumber\\
  &+w_{i0}^{(i)}({\bar{s}}_{i}- {\hat{s}}_i^{(i)})+ \sum_{l\in \mathcal N_i(\mathrm{G}_{\rm c})} w_{il}^{(i)}({\hat{s}}_{i}^{(l)}- {\hat{s}}_i^{(i)}), 
   \label{eq:D2}
\end{align}
with
\begin{align}
 {\dot{\bar{s}}}_i=& J\bar{s}+K{u}_{i}+ {F}_i({y}_i^*-(C_{ii}\bar{s}_i+\sum_{l\in\mathcal N_i(\mathrm{G}_{\rm o}^*)}C_{il} {\hat{s}}_i^{(l)})). 
 \label{eq:D1}
\end{align}
\end{subequations}
\end{footnotesize}

% \vspace{-10pt}
The following corollary provides conditions for the convergence of $\hat{s}^{(i)}$ to $s$, where $ \hat{s}^{(i)}=[(\hat{s}_1^{(i)})^\top, \ldots, (\hat{s}_m^{(i)})^\top]^\top$ is the estimate of $s$ generated by agent $i$.
\begin{corollary}
\label{coro:main}
Consider a double-integrator MAS \eqref{eq:double-agent}-\eqref{eq:y-double-agent}, provided that ${u}$ is known to all agents, for $\forall i\in\mathcal V$, the estimate $ {\hat{s}}^{(i)}$ obtained by agent $i$ using \eqref{eq:double localization} asymptotically converges to $s$ if and only if all the following conditions are satisfied:
\begin{itemize}
\item[(1)]$\overline{\mathrm{G}}_{\rm o}^+$ is connected;
\item[(2)]$\mathrm{G}_{\rm o}^+$ has no sources except the origin $O$;
\item[(3)]$\mathrm{G}_{\rm c}$ is strongly connected. 
\end{itemize}
\end{corollary}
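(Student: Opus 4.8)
The plan is to mirror the proof of Theorem~\ref{pro:main}: recognize the double-integrator system \eqref{eq:double-agent}-\eqref{eq:y-double-agent} together with the observer \eqref{eq:double localization} as a special instance of the general MAS \eqref{eq:sys1}-\eqref{eq:sys2} and distributed observer \eqref{eq:Do}, with $A_{jj}=J$, $B_{jj}=K$, no state-to-state coupling (so $\mathcal N_j(\mathrm G_{\rm s})=\varnothing$ for all $j$, and $A=\diag(J,\ldots,J)$ is block diagonal), and output coupling governed by the DAG $\mathrm G_{\rm o}^*$ produced by Algorithm~\ref{alg:case2}. The convergence claim then reduces to (i) the double-integrator analogues of Propositions~\ref{pro:global obserbeability} and \ref{pro:local obserbeability}, and (ii) an application of Theorem~\ref{theorem:main1}.

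First I would establish the observability analogues. Since $A^2=\mathbf 0$, the observability matrix collapses to $W_{\rm o}=[\,C^\top,\,(CA)^\top\,]^\top$ (every block $CA^k$ with $k\ge 2$ vanishes), so observability is equivalent to $W_{\rm o}$ having full column rank $4m$. For the global condition I would argue, by the same spanning-tree induction used in Proposition~\ref{pro:global obserbeability}, that $W_{\rm o}$ is full column rank if and only if $\overline{\mathrm G}_{\rm o}^+$ is connected. The new ingredient relative to the single-integrator case ($A=\mathbf 0$, $W_{\rm o}=C$) is the block $CA$, which injects the velocity coordinates through $J$ into the position slots; the induction step must therefore track that position and velocity information \emph{jointly} provide the required rank exactly under connectivity. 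For the node-level condition I would show $(J,C_{ii})$ is observable if and only if node $i$ is not a source in $\mathrm G_{\rm o}^+$: a non-source $i$ has a $C_{ii}$ containing a full-rank $\pm I$ block, which together with $A^2=\mathbf 0$ renders $(J,C_{ii})$ observable, whereas a source has $C_{ii}=\mathbf 0$.

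Next I would verify the hypotheses of Theorem~\ref{theorem:main1}. Assumption~\ref{ass:local observability} is exactly condition (2), by the node-level analogue just established. Assumption~\ref{ass:Gs-Go} holds automatically: $\mathrm G_{\rm s}$ is edgeless because $A$ is block diagonal, hence a DAG admitting any topological ordering, while $\mathrm G_{\rm o}^*$ is a DAG by construction, so the two share the topological ordering of $\mathrm G_{\rm o}^*$; moreover, since Algorithm~\ref{alg:case2} only reorients edges, the observability properties carried by $\mathrm G_{\rm o}^+$ pass to $\mathrm G_{\rm o}^*$. The matrix $A_{jj}=J$ is nilpotent, hence not Hurwitz, placing us in the nontrivial regime of Theorem~\ref{theorem:main1}, and a gain $F_j$ with $J-F_jC_{jj}$ Hurwitz exists by node-level observability. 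Theorem~\ref{theorem:main1} then yields $\hat s^{(i)}\to s$ for all $i$ if and only if $\mathrm G_{\rm c}$ is strongly connected, i.e., condition (3). For the \emph{only if} direction, global observability (condition (1)) is necessary for \emph{any} scheme to reconstruct the full state $s$, node-level observability (condition (2)) is necessary for the Luenberger rule \eqref{eq:D1} to be stabilizable (if $i$ is a source then $C_{ii}=\mathbf 0$ and $J-F_iC_{ii}=J$ is never Hurwitz), and strong connectivity (condition (3)) is necessary by the \emph{only if} part of Theorem~\ref{theorem:main1}.

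The main obstacle I anticipate is the global observability analogue (condition (1)): unlike the single-integrator case where $W_{\rm o}=C$ and the rank is governed purely by the incidence structure of $\overline{\mathrm G}_{\rm o}^+$, here the velocity dynamics enter only through $CA$, so I must redo the spanning-tree rank computation for the stacked matrix $[\,C^\top,(CA)^\top\,]^\top$ and confirm that the measured relative/absolute positions, once propagated through $J$ to recover the velocity coordinates, yield full column rank precisely under connectivity. A secondary technical point is verifying that Algorithm~\ref{alg:case2} preserves \emph{both} observability conditions under the double-integrator measurement model, so that the DAG $\mathrm G_{\rm o}^*$ supplied to the observer still satisfies conditions (1)-(2).
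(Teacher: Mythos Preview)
Your proposal is correct and follows essentially the same route the paper takes: the paper states Corollary~\ref{coro:main} without proof, having already noted that the observability matrix reduces to $[\,C^\top,(CA)^\top,\mathbf 0\,]^\top$ and that ``we can get the observability results similar to Propositions~\ref{pro:global obserbeability} and~\ref{pro:local obserbeability},'' so the result is intended to follow exactly as you outline---by recognizing \eqref{eq:double-agent}--\eqref{eq:y-double-agent} and \eqref{eq:double localization} as an instance of the general framework (edgeless $\mathrm G_{\rm s}$, DAG $\mathrm G_{\rm o}^*$, nilpotent $A_{jj}=J$) and invoking Theorem~\ref{theorem:main1} together with the double-integrator analogues of Propositions~\ref{pro:global obserbeability}--\ref{pro:local obserbeability}. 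Your identification of the stacked-matrix rank argument for $[\,C^\top,(CA)^\top\,]^\top$ as the only genuinely new step is exactly what the paper leaves implicit.
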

% \noindent \textbf{Proof.~}
% The proof is the same as that of Theorem~\ref{pro:main}.
% The detail is omitted for brevity.
% \hfill$\blacksquare$

\begin{remark}
\label{re:u+fully}
The parameters in distributed localization algorithms \eqref{eq:localization} and \eqref{eq:double localization} are determined locally by each single agent, and their updates depend solely on local measurement and information exchanged with neighbors in $\mathrm G_{\rm c}$ (i.e., via local communications). 
Therefore, the proposed distributed localization algorithms \eqref{eq:localization} and \eqref{eq:double localization} can be operated in a fully distributed manner. 
\end{remark}

\begin{remark}
\label{re:n-order intergrator dynamics}
The above observability analysis results and localization algorithms for single- and double-integrator MASs can be extended to high-order integrator MASs. The specific analysis process and results are similar to those of single-integrator MAS. 
\end{remark}

\section{Numerical Examples}\label{sec5}
In this section, we present simulation results for a general multi-agent system consisting of three agents and a single-integrator multi-agent system consisting of six agents. 

\subsection{Estimation for a General MAS}\label{sec52}
Consider a MAS described by \eqref{eq:sys1}-\eqref{eq:sys2} with three agents.
The system matrices, observation matrices and coupling matrices are given as follows: 
${A}_{11}=\left[\begin{array}{cc}
1.2 & 1 \\
0 & 0.8 
\end{array}\right],\ \  {A}_{22}=1.03,\ \  {A}_{33}=0.3, 
{C}_{11}=\left[\begin{array}{cc}
1 & 0 \\
0 & 1 
\end{array}\right],\ \  {C}_{22}= {C}_{33}=1,
{A}_{21}=\left[\begin{array}{cc}
0.8 & 1 
\end{array}\right],\ \  {C}_{21}=\left[\begin{array}{cc}
0.8 & 1.2 
\end{array}\right].$
The communication graph is shown in Fig.~\ref{fig:communicationGcinout}(a).
It can be seen that the pairs $({A},{C})$ and $( {A}_{ii}, {C}_{ii})$ for each $i\in\{1,2,3\}$ are observable, and the graphs $\mathrm{G}_{\rm s}$ and $\mathrm{G}_{\rm o}$ are identical directed acyclic graphs.
Hence, the proposed observer \eqref{eq:Do} is applicable to this setting. 
Provided that the input $u$ is known to all agents, by Theorem~\ref{theorem:main1}, the local Luenberger gains are chosen as 
${F}_1=\left[\begin{array}{cc}
4.2 & 0\\
0 & 4.8
\end{array}\right],~~ {F}_2=3,~~ {F}_3=2,$
$\mu$ is set to $10$, and the consensus gains $w_{i0}^{(i)}, w_{il}^{(j)}$ are all set to $1$ for all $i,j\in \mathcal{V}, l\in \mathcal N_i(\mathrm G_{\rm c})$.
The initial state is set to $ {x}(0)=[0.5, -0.5, 0.5, 0.5]^\top$, and the initial estimates for all agents are $ \mathbf{0}$.
Simulation results, shown in Fig.~\ref{fig:heteroMAS}, indicate that the estimation error dynamics asymptotically converge to zero, as expected from Theorem~\ref{theorem:main1}, where $^f {e}_j^{(i)}$ denotes the estimation error of the first state of agent $j$ obtained by agent $i$.

\begin{figure}[!ht]
\begin{minipage}{0.25\linewidth}
\centerline{\includegraphics[width=2.3cm]{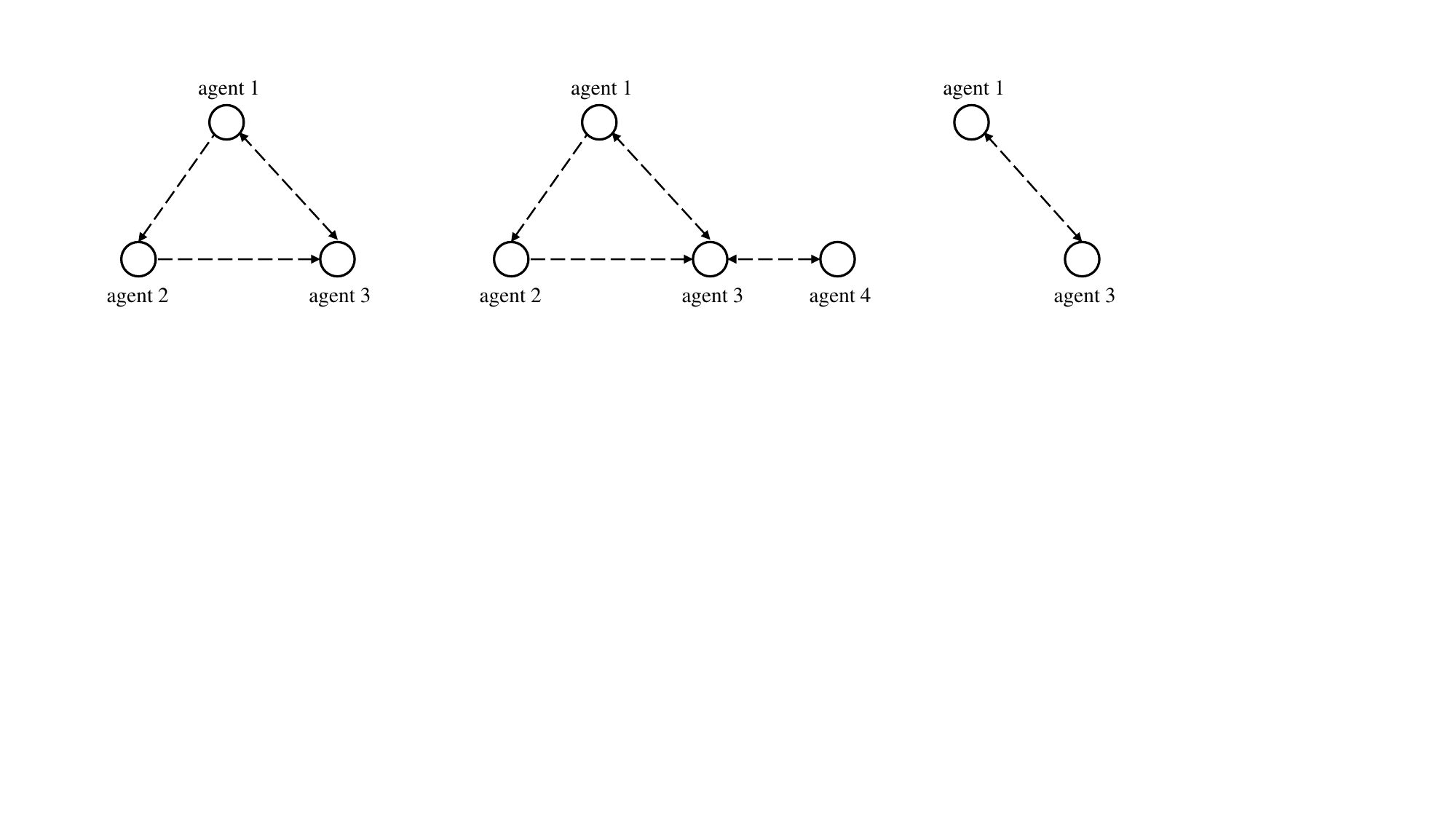}}
  \centerline{\scriptsize(a)}
\end{minipage}
\hfill
\begin{minipage}{0.33\linewidth}
  \centerline{\includegraphics[width=3.2cm]{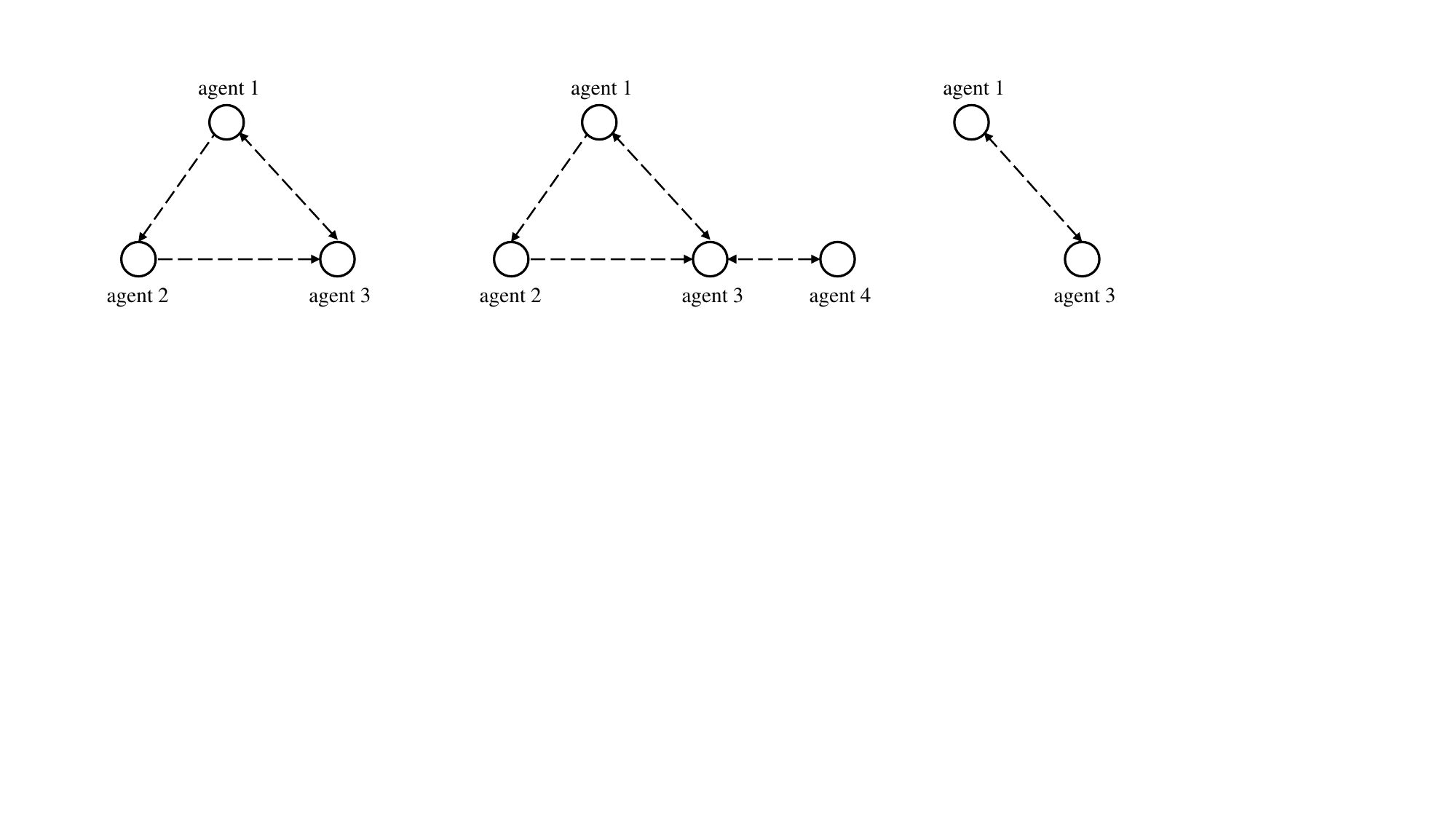}}
  \centerline{\scriptsize(b)}
\end{minipage}
\hfill
\begin{minipage}{0.25\linewidth}
  \centerline{\includegraphics[width=1.3cm]{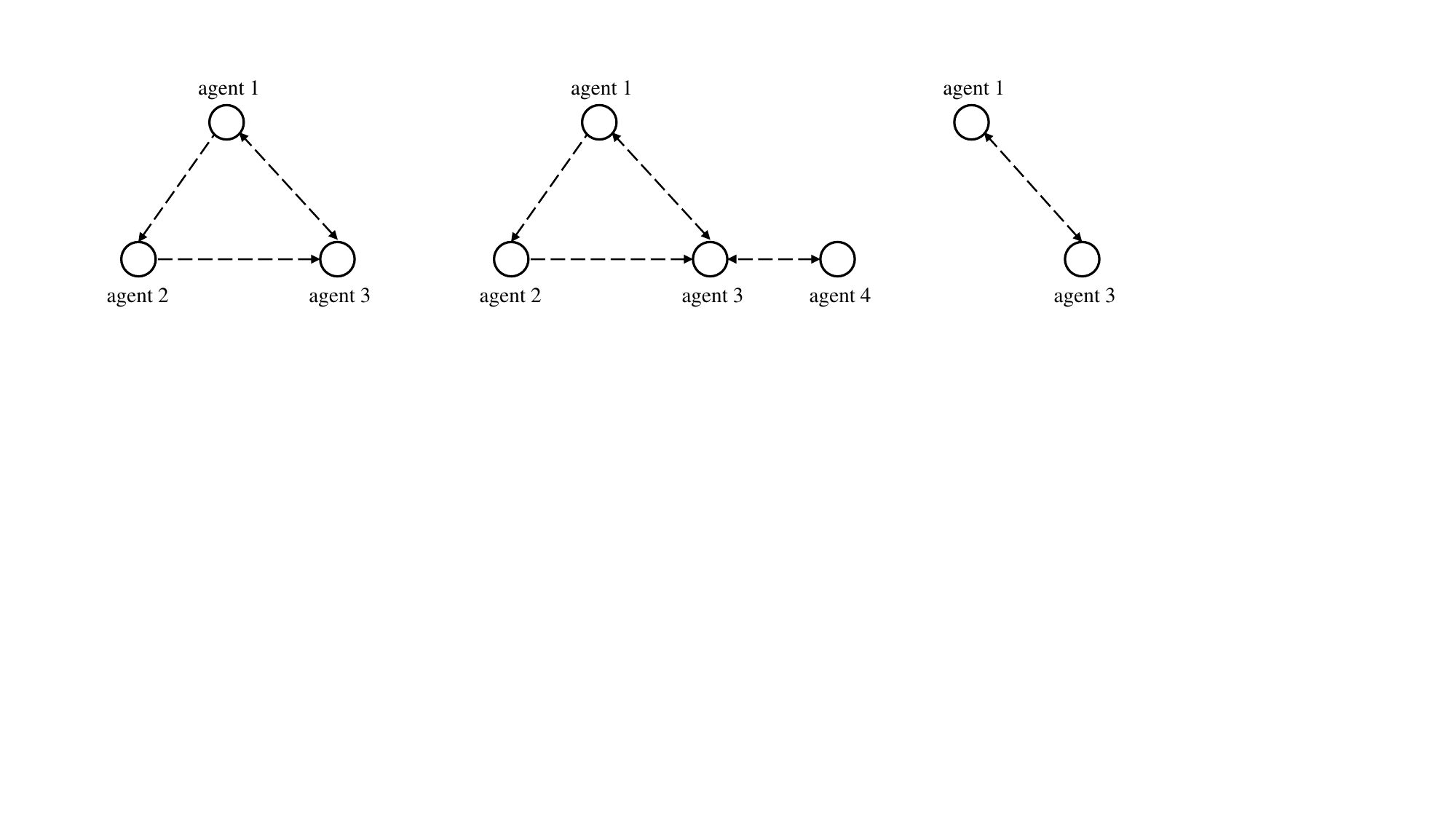}}
  \centerline{\scriptsize(c)}
\end{minipage}
\caption{Communication graphs in Subsection~\ref{sec52}.}
\label{fig:communicationGcinout}
\end{figure}

\begin{figure}[!ht]
\begin{minipage}{0.48\linewidth}
\centerline{\includegraphics[width=4cm]{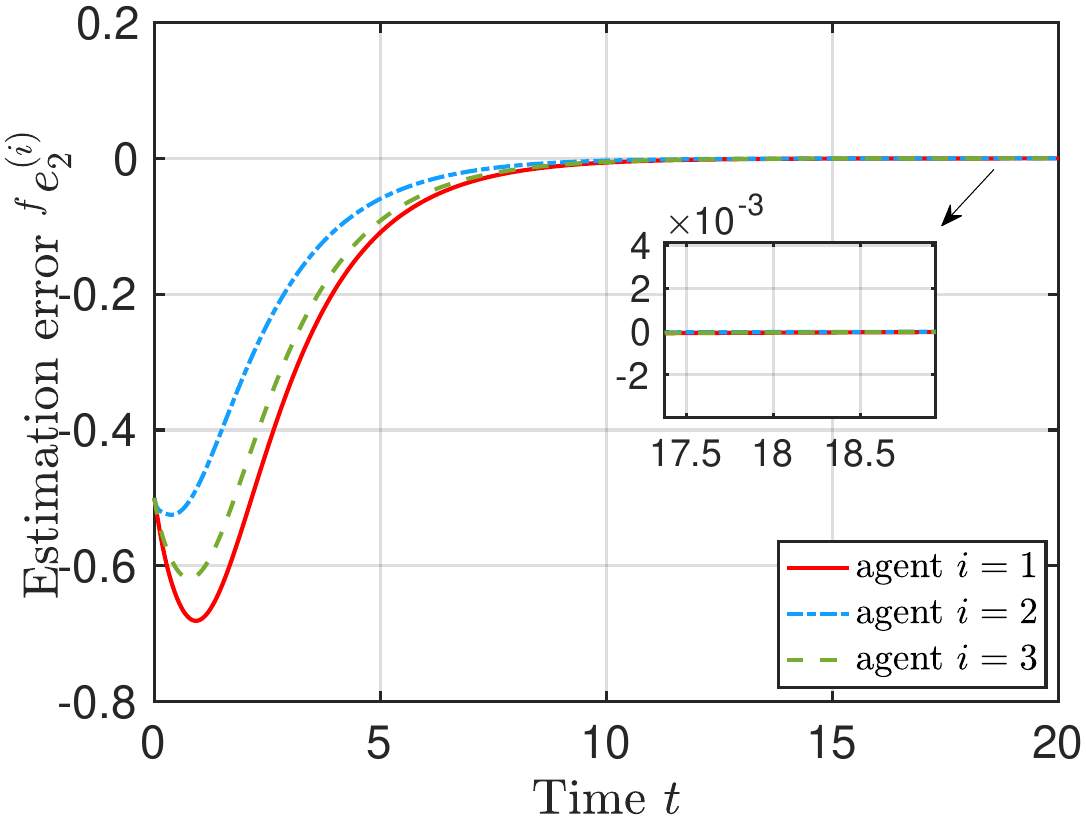}}
  \centerline{\scriptsize(a)}
\end{minipage}
\hfill
\begin{minipage}{0.48\linewidth}
  \centerline{\includegraphics[width=4cm]{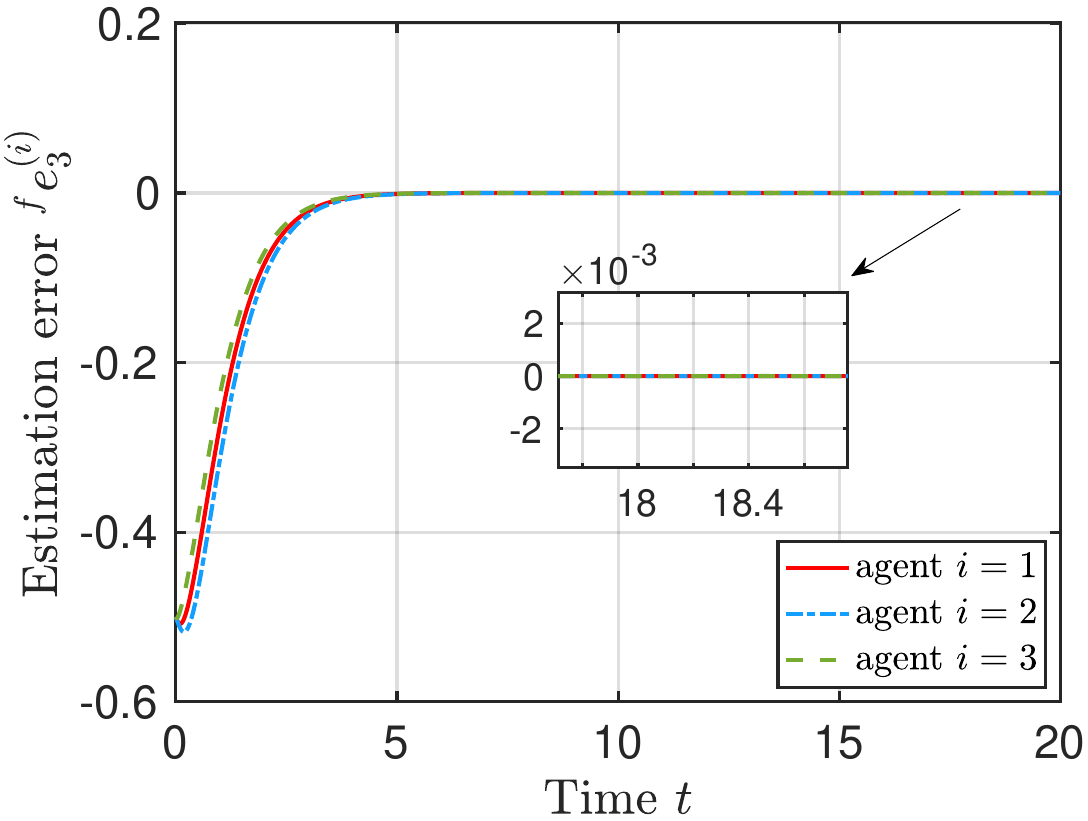}}
  \centerline{\scriptsize(b)}
\end{minipage}
\caption{Estimation error dynamics: (a) the dynamics of $^f {e}_2^{(i)}$; (b) the dynamics of $^f {e}_3^{(i)}$.}
\label{fig:heteroMAS}
\end{figure}

We proceed by presenting a simulation to illustrate the performance of the proposed distributed observer in the presence of noise.
The MAS consists of three agents, each with process and measurement noise bounded by $0.05$ in their state-space models.
The only difference from the previous simulation is the introduction of bounded noise.
The simulation results in Fig.~\ref{fig:heteroMASnoise} demonstrate that the estimation error dynamics are bounded, confirming the effectiveness of the proposed algorithm against noise.

\begin{figure}
\begin{minipage}{0.48\linewidth}
\centerline{\includegraphics[width=4cm]{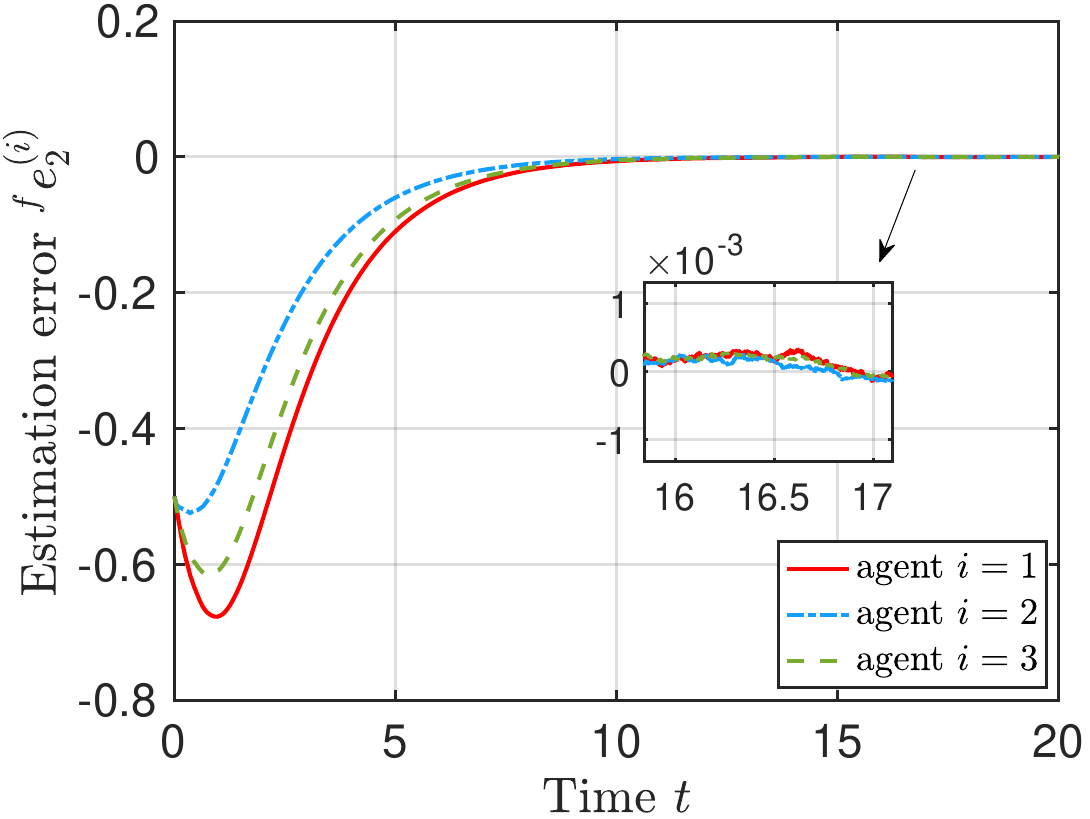}}
  \centerline{\scriptsize(a)}
\end{minipage}
\hfill
\begin{minipage}{0.48\linewidth}
  \centerline{\includegraphics[width=4cm]{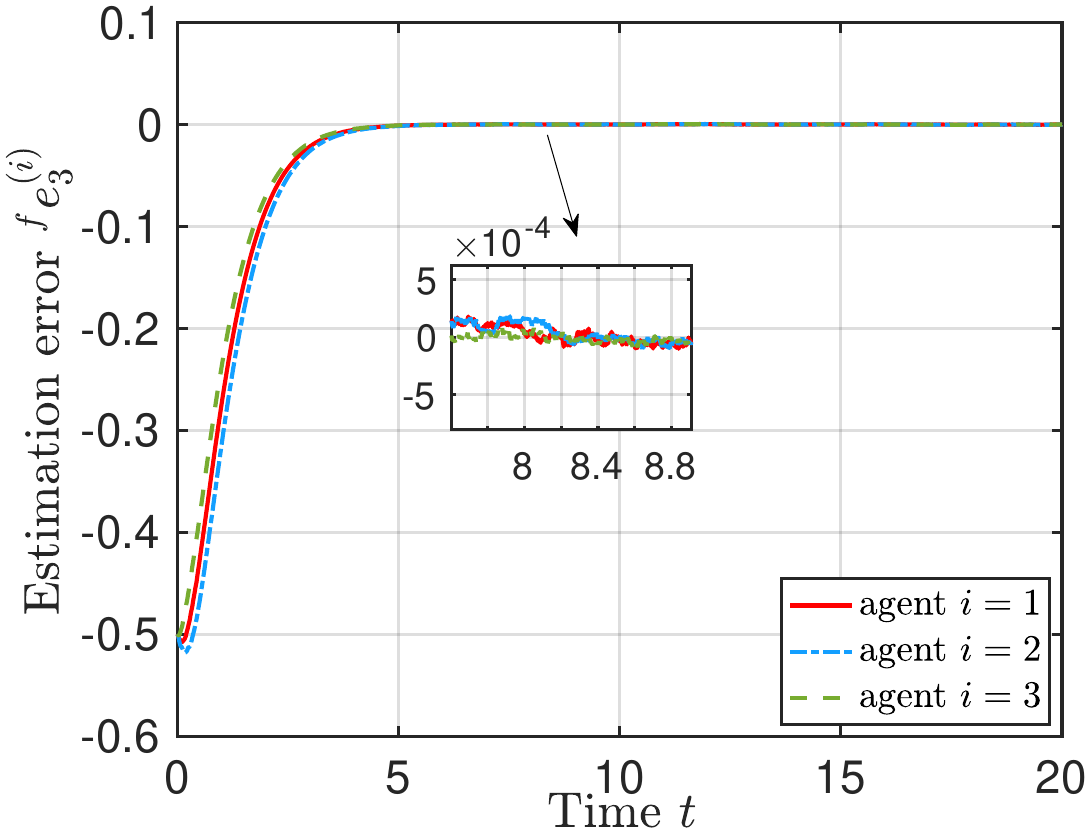}}
  \centerline{\scriptsize(b)}
\end{minipage}
\caption{Performance in the presence of noise: (a) the dynamics of $^f {e}_2^{(i)}$; (b) the dynamics of $^f {e}_3^{(i)}$.}
\label{fig:heteroMASnoise}
\end{figure}

\begin{figure}
\begin{minipage}{0.48\linewidth}
  \centerline{\includegraphics[width=4cm]{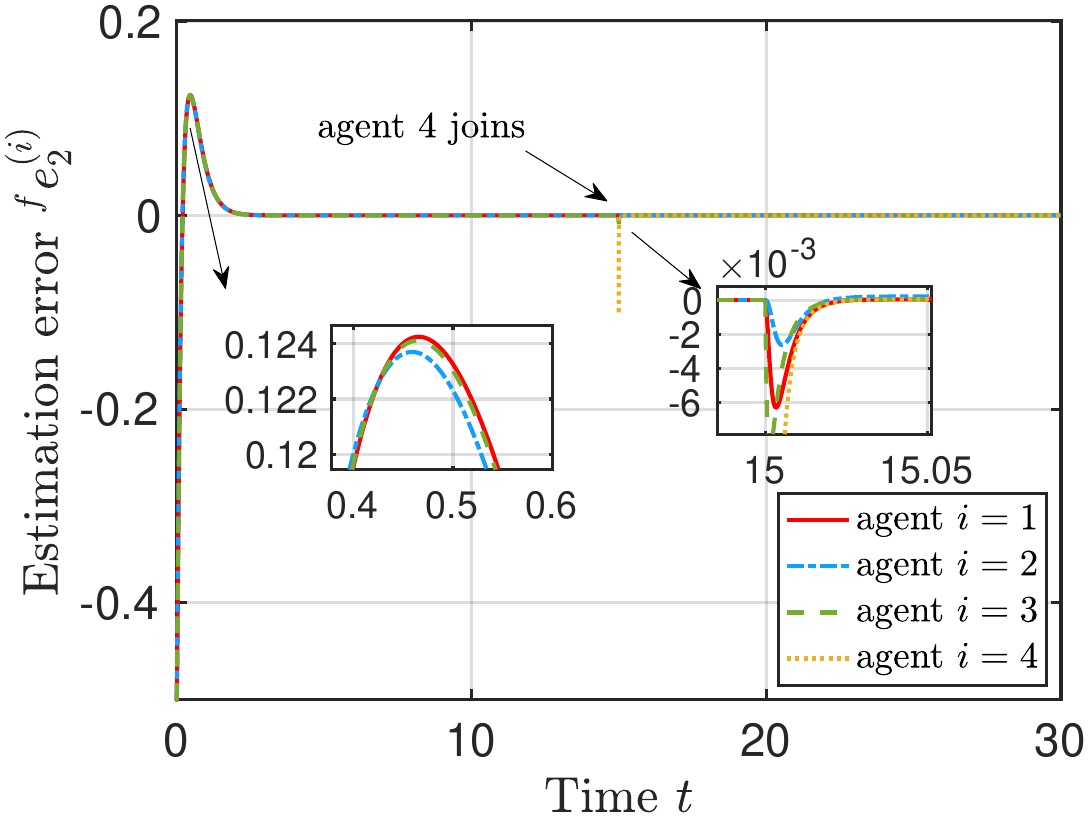}}
  \centerline{\scriptsize(a)}\label{fig:merge2}
\end{minipage}
\hfill
\begin{minipage}{0.48\linewidth}
\centerline{\includegraphics[width=4cm]{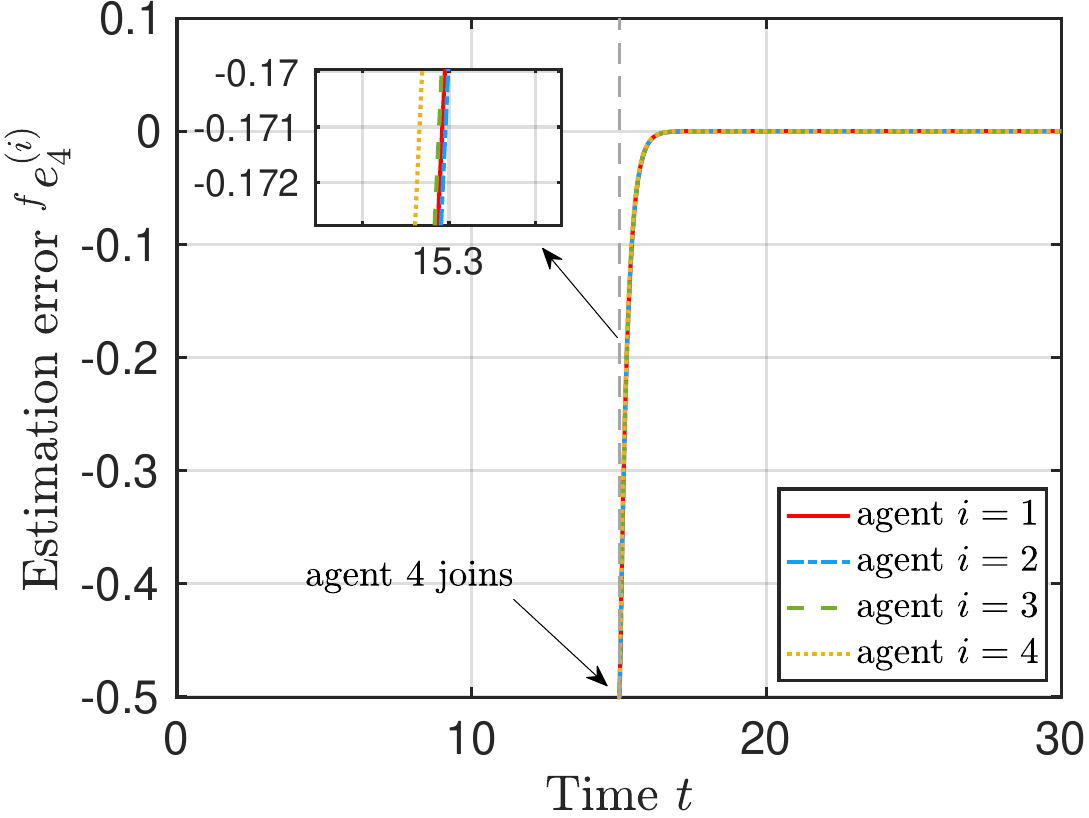}}
  \centerline{\scriptsize(b)}\label{fig:merge4}
\end{minipage}
\caption{Performance in agent joining scenario where agent $4$ joins at $t=15$: (a) the dynamics of $^f {e}_2^{(i)}$; (b) the dynamics of $^f {e}_4^{(i)}$.}
\label{fig:plugin}
\end{figure}

\begin{figure}
\begin{minipage}{0.48\linewidth}
\centerline{\includegraphics[width=4cm]{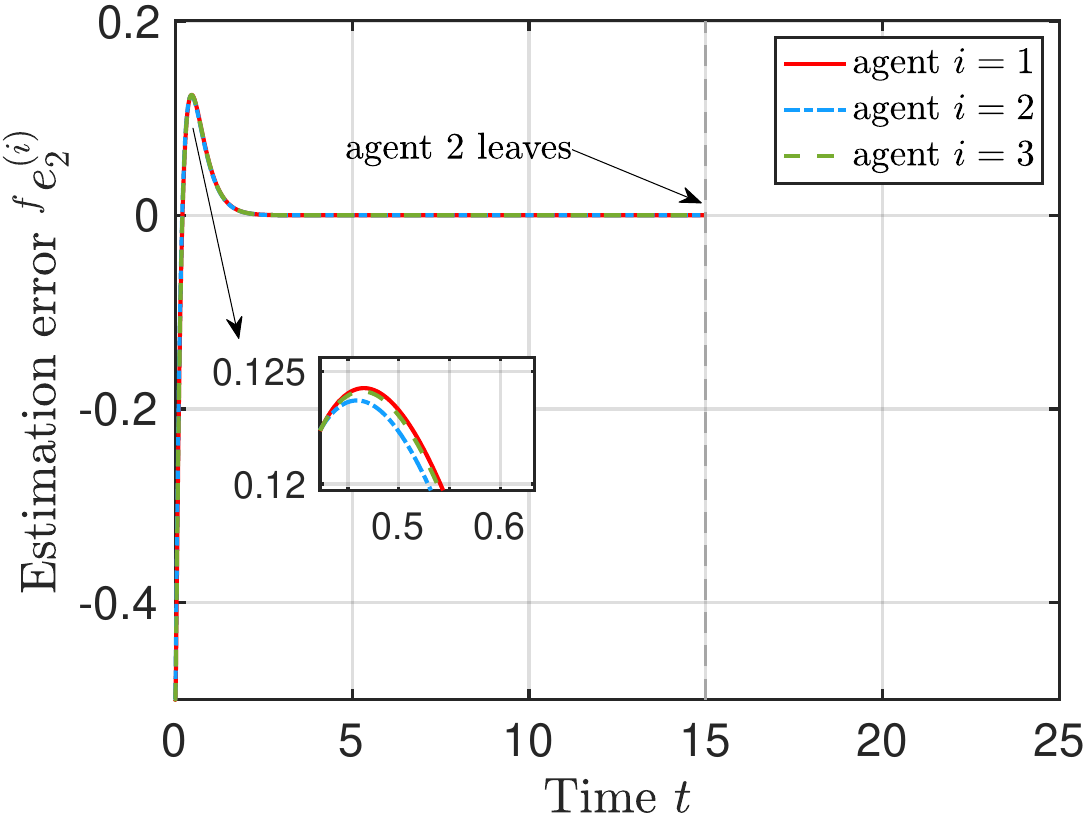}}
  \centerline{\scriptsize(a)}\label{fig:split2}
\end{minipage}
\hfill
\begin{minipage}{0.48\linewidth}
\centerline{\includegraphics[width=4cm]{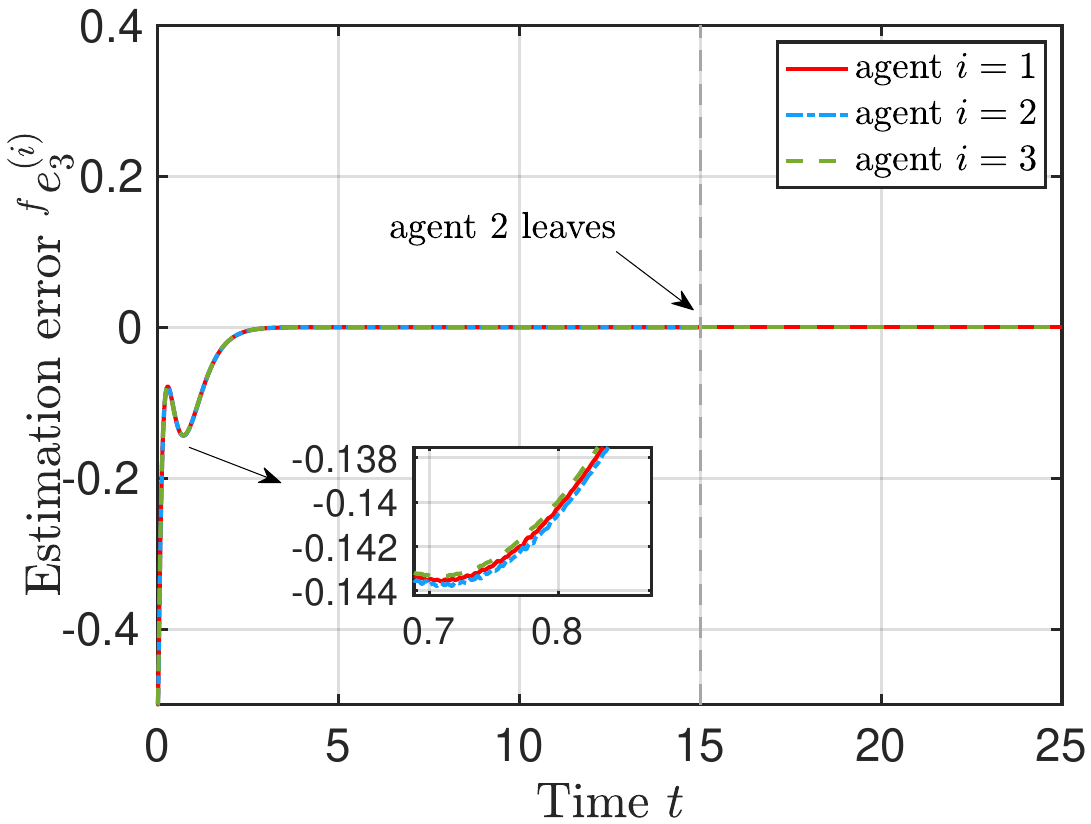}}
  \centerline{\scriptsize(b)}\label{fig:split3}
\end{minipage}
\caption{Performance in agent leaving scenario where agent $2$ leaves at $t=15$: (a) the dynamics of $^f {e}_2^{(i)}$; (b) the dynamics of $^f {e}_3^{(i)}$.}
\label{fig:plugout}
\end{figure}

We further evaluate the proposed observer in scenarios involving agents joining or leaving.
Consider a homogeneous MAS with three agents, where the system matrices and observation matrices are identical to $A_{11}$ and $C_{11}$, and the coupling matrices are given as follows:
${C}_{21}={C}_{32}=\left[\begin{array}{cc}
1.2 & 0 \\
0 & 0.8 
\end{array}\right].$
The maximum allowable number of agents is $\overline{m}=4$.
The communication graph is also shown in Fig.~\ref{fig:communicationGcinout}(a).
In the agent-joining scenario where a new homogeneous agent (i.e., agent 4) joins the MAS at $t=15$, the corresponding communication graph is shown in Fig.~\ref{fig:communicationGcinout}(b). 
The measurements of agent $4$ are identical to those of agent $3$, i.e., $C_{44}=C_{33}$ and $C_{43}=C_{32}$.
In the agent-leaving scenario where agent $2$ leaves the MAS at $t=15$, the new communication graph is shown in Fig.~\ref{fig:communicationGcinout}(c).
In both scenarios, the Luenberger gains for all relevant agents are equal to $F_1$.
From the analysis in Subsection~\ref{sec33} and condition \eqref{eq:minmu1}, we obtain $\mu>571.43$.
We stipulate that each agent chooses the integer closest to this lower bound, setting $\mu=572$.
The consensus gains for each relevant agent are determined using \eqref{eq:consensusgain1} based on the updated communication graph when an agent joins or leaves.
Simulation results in Fig.~\ref{fig:plugin}-\ref{fig:plugout} demonstrate that the estimation error dynamics continue to converge to zero despite agents joining or leaving, illustrating the resilience of our design.
It is worth noting that all parameters of the proposed observer \eqref{eq:Do} can be independently determined by each agent throughout the process.

\subsection{Application to Cooperative Localization}\label{sec51}
Consider a multi-agent system consisting of six single-integrator agents.
The communication graph $\mathrm{G}_{\rm c}$ and sensing graph $\mathrm{G}_{\rm o}$ are shown in Fig.~\ref{fig:multirobot}, where agent $2$ can obtain the absolute position.
Based on $\mathrm{G}_{\rm o}$ and the definitions of the graphs $\mathrm{G}_{\rm o}^+$ and $\overline{\mathrm{G}}_{\rm o}^+$, it follows that each agent $i\in\mathcal V$ in $\mathrm{G}_{\rm o}^+$ is not a source, and $\overline{\mathrm{G}}_{\rm o}^+$ is connected.
Additionally, the communication graph $\mathrm{G}_{\rm c}$ in Fig.~\ref{fig:multirobot} is strongly connected.
Thus, all conditions of Theorem~\ref{pro:main} are satisfied.
Using the distributed DAG construction algorithm from Algorithm~\ref{alg:case2}, we obtain the directed acyclic sensing graph $\mathrm{G}_{\rm o}^*$, shown in Fig.~\ref{fig:go*}.
\begin{figure}[!ht]
\centerline{\includegraphics[width=5cm]{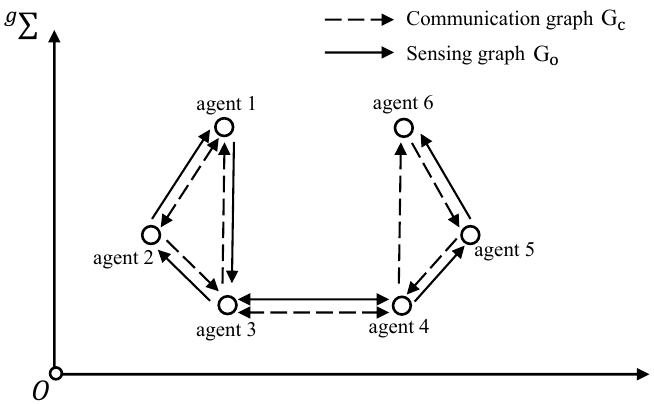}}
% \vspace{-10pt}
\caption{The graphs $\mathrm{G}_{\rm c}$ and $\mathrm{G}_{\rm o}$ in Subsection~\ref{sec51}.}
\label{fig:multirobot}
\end{figure}
% \vspace{-20pt}
\begin{figure}[!ht]
\centerline{\includegraphics[width=5cm]{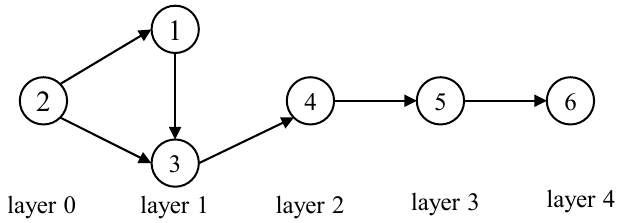}}
% \vspace{-7pt}
\caption{The directed acyclic sensing graph $\mathrm{G}_{\rm o}^*$ by using distributed DAG construction algorithm.}
% \vspace{-7pt}
\label{fig:go*}
\end{figure}

The control inputs for all agents are the same bounded signal: $u_i=[-0.1{\rm sin}(0.01t), 0.1{\rm cos}(0.01t)]^\top$ for all $i\in \{1, \ldots, 6\}$.
The initial positions are set as follows: $p_1(0)=[5,7]^\top, p_2(0)=[3,4]^\top, p_3(0)=[5,2]^\top, p_4(0)=[10,2]^\top, p_5(0)=[12,4]^\top, p_6(0)=[10,7]^\top$, with initial position estimates of all agents set to $\mathbf{0}$.
By Theorem~\ref{pro:main}, the Luenberger gains of \eqref{eq:localization} are chosen as
$F_1=F_4=F_5=F_6=\textup{diag}(-1,-0.5),
F_2=\left[\begin{array}{cccc}
    -1 & 0 & -1 & 0\\
    0 & -0.5 & 0 & -0.5
    \end{array}\right],~F_{3}=\textup{diag}(1,0.5),$
and all consensus gains $w_{i0}^{(i)}, w_{il}^{(j)}$ are set to $1$ for all $i,j\in \mathcal{V}, l\in \mathcal N_i(\mathrm G_{\rm c})$.
Fig.~\ref{fig:singleknown} shows the position estimation error dynamics of agents $3$ and $5$ by all agents under the condition that the input $u$ is known to all agents, where $^x{e}_j^{(i)}$ denotes the estimation error of the abscissa of the position of agent $j$ by agent $i$.
As depicted in Fig.~\ref{fig:singleknown}, all estimation error dynamics converge to zero, as expected from Theorem~\ref{pro:main}.
In the case where agents do not know the control inputs of others, Fig.~\ref{fig:singleunknown} demonstrates the true trajectory of agent $3$, the estimated trajectories of agent $3$ generated by all agents, and the corresponding position estimation error dynamics under bounded control inputs.
From Fig.~\ref{fig:singleunknown}, it is evident that all estimation error dynamics remain bounded, as expected from Subsection~\ref{sub:discussion_to_unknown_input}.

\begin{figure}[!ht]
\begin{minipage}{0.48\linewidth}
  \centerline{\includegraphics[width=4cm]{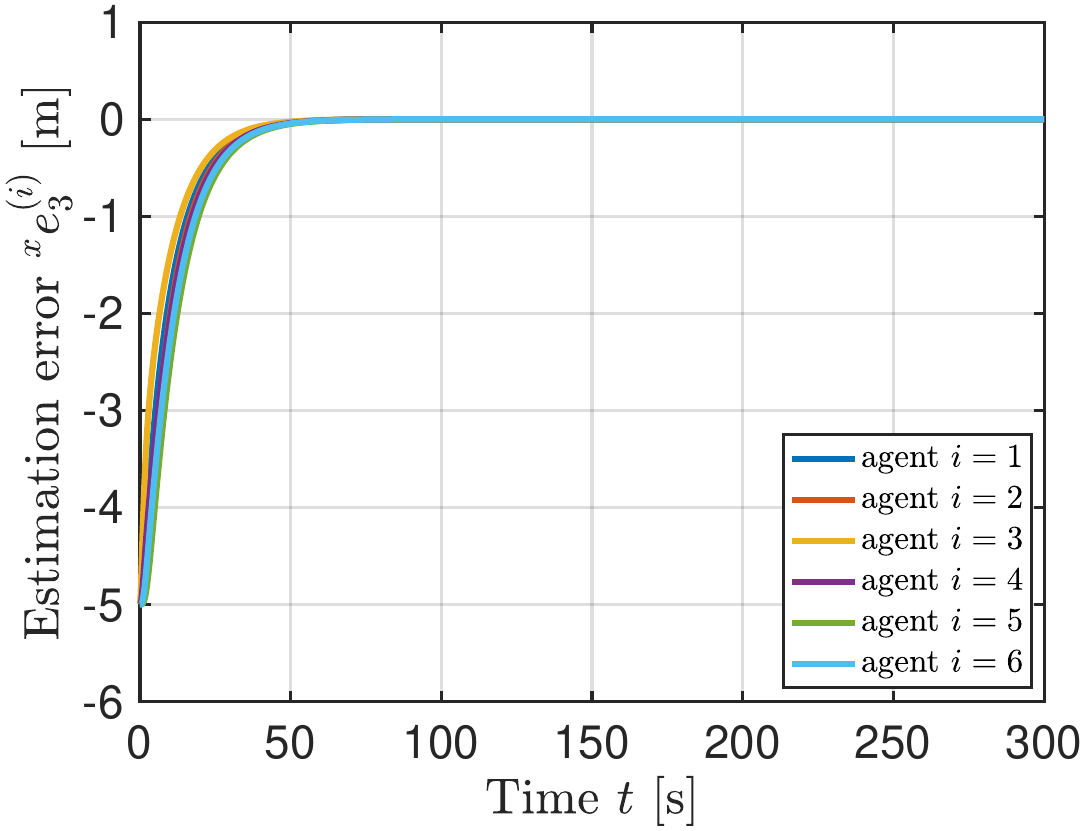}}
  \centerline{\scriptsize(a)}
\end{minipage}
\hfill
\begin{minipage}{0.48\linewidth}
  \centerline{\includegraphics[width=4cm]{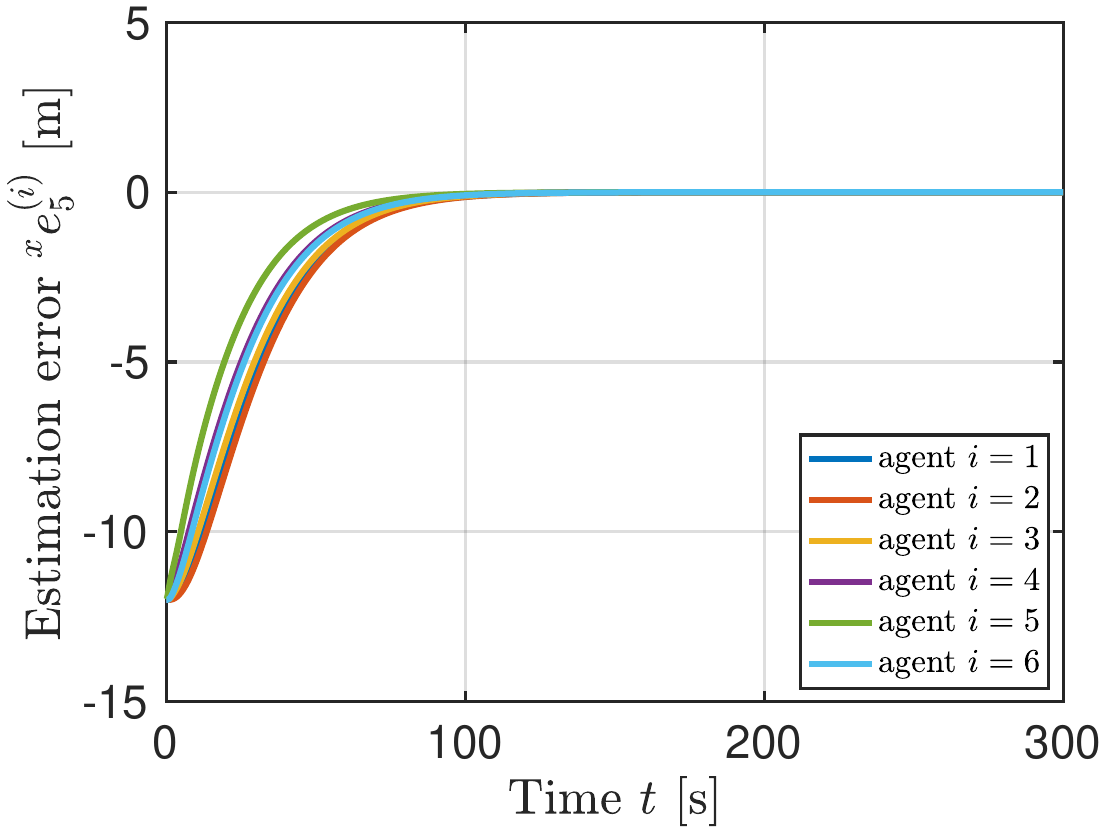}}
  \centerline{\scriptsize(b)}
\end{minipage}
\caption{Known input case: (a) the dynamics of $^x{e}_3^{(i)}$;
 (b) the dynamics of $^x{e}_5^{(i)}$.}
\label{fig:singleknown}
\end{figure}

\begin{figure}[!ht]
\begin{minipage}{0.48\linewidth}
  \centerline{\includegraphics[width=4cm]{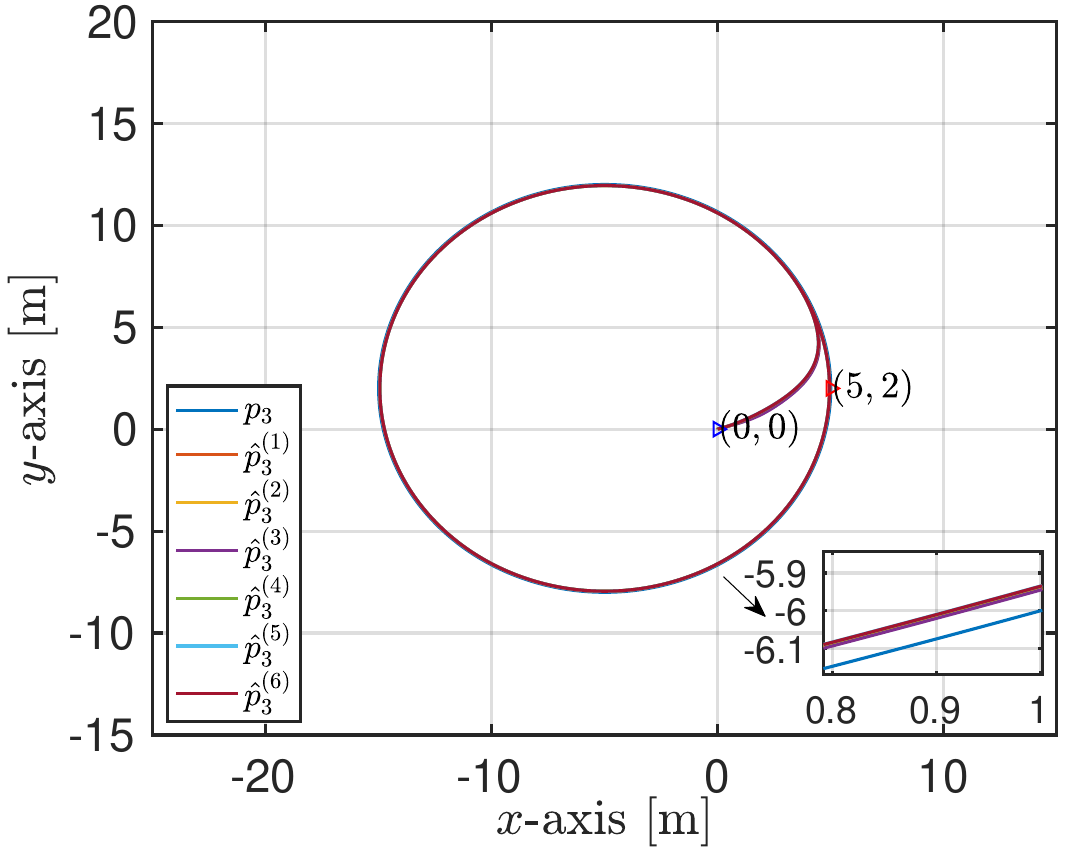}}
  \centerline{\scriptsize(a)}
\end{minipage}
\hfill
\begin{minipage}{0.48\linewidth}
  \centerline{\includegraphics[width=4.1cm]{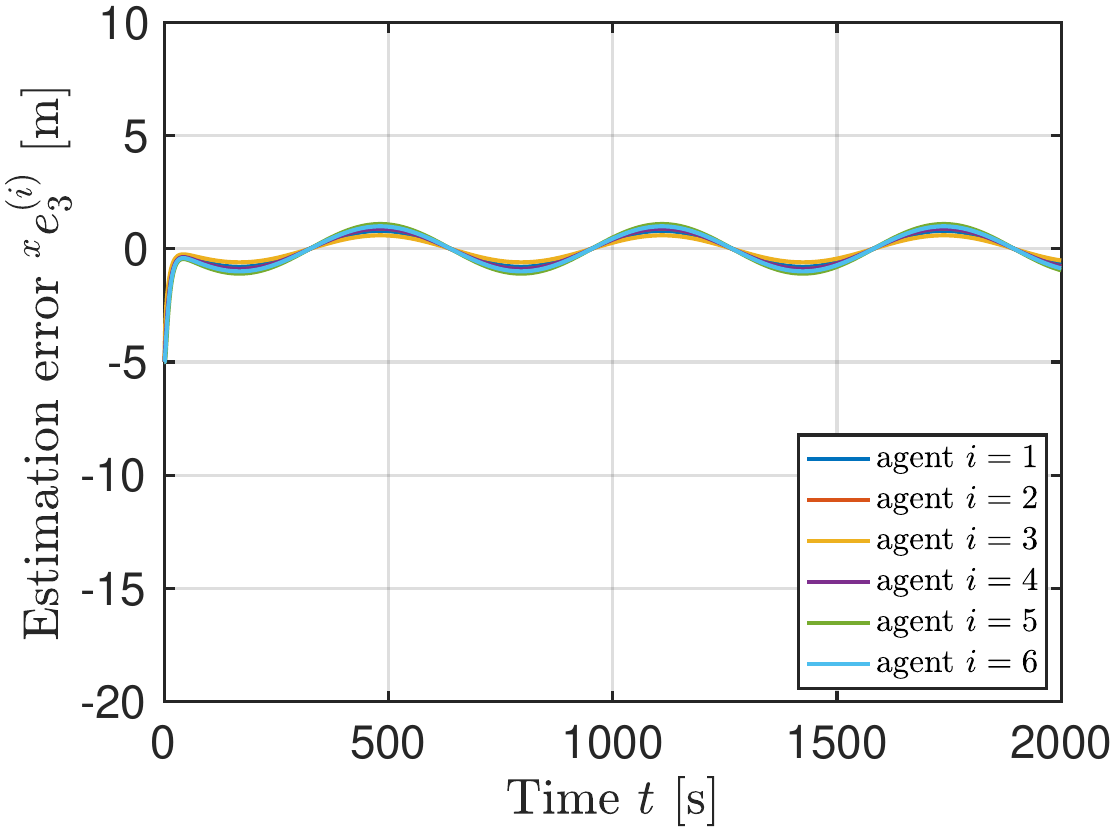}}
  \centerline{\scriptsize(b)}
\end{minipage}
\caption{Unknown input case: (a) real trajectory of agent $3$ and the estimated trajectories generated by all six agents;
(b) the dynamics of $^x{e}_3^{(i)}$.}
\label{fig:singleunknown}
\end{figure}

% \vspace{-5pt}
\section{Conclusions}\label{sec6}
In this paper, we presented a fully distributed observer for continuous-time linear MASs, with a focus on its application to cooperative localization.
We proved that the state estimate of the observer with a sufficiently large gain asymptotically converges to the state of the MAS.  
For a certain subclass of MASs, the proposed observer can be designed in a fully distributed manner, and it exhibits resilience to node changes of these MASs.
Building on our distributed estimation framework, we developed two fully distributed localization algorithms for single- and double-integrator MASs, respectively.
Different from the previous works, this paper considers the dynamics coupling among agents and the resilience of distributed estimation algorithms to node changes in MASs.
As future work, we aim to explore the design of a secure fully distributed state estimation algorithm for multi-agent systems in an adversarial environment, where some agents may behave maliciously.

\bibliographystyle{IEEEtran}      
\bibliography{IEEEabrv,distributed_observer}

% Generated by IEEEtran.bst, version: 1.14 (2015/08/26)
\begin{thebibliography}{10}
\providecommand{\url}[1]{#1}
\csname url@samestyle\endcsname
\providecommand{\newblock}{\relax}
\providecommand{\bibinfo}[2]{#2}
\providecommand{\BIBentrySTDinterwordspacing}{\spaceskip=0pt\relax}
\providecommand{\BIBentryALTinterwordstretchfactor}{4}
\providecommand{\BIBentryALTinterwordspacing}{\spaceskip=\fontdimen2\font plus
\BIBentryALTinterwordstretchfactor\fontdimen3\font minus
  \fontdimen4\font\relax}
\providecommand{\BIBforeignlanguage}[2]{{%
\expandafter\ifx\csname l@#1\endcsname\relax
\typeout{** WARNING: IEEEtran.bst: No hyphenation pattern has been}%
\typeout{** loaded for the language `#1'. Using the pattern for}%
\typeout{** the default language instead.}%
\else
\language=\csname l@#1\endcsname
\fi
#2}}
\providecommand{\BIBdecl}{\relax}
\BIBdecl

\bibitem{monticelli2000electric}
A.~Monticelli, ``Electric power system state estimation,'' \emph{Proceedings of
  the IEEE}, vol.~88, no.~2, pp. 262--282, 2000.

\bibitem{ghabcheloo2009coordinated}
R.~Ghabcheloo, A.~P. Aguiar, A.~Pascoal, C.~Silvestre, I.~Kaminer, and
  J.~Hespanha, ``Coordinated path-following in the presence of communication
  losses and time delays,'' \emph{SIAM Journal on Control and Optimization},
  vol.~48, no.~1, pp. 234--265, 2009.

\bibitem{soares2015simple}
C.~Soares, J.~Xavier, and J.~Gomes, ``Simple and fast convex relaxation method
  for cooperative localization in sensor networks using range measurements,''
  \emph{IEEE Transactions on Signal Processing}, vol.~63, no.~17, pp.
  4532--4543, 2015.

\bibitem{huang2022distributed}
S.~Huang, Y.~Li, and J.~Wu, ``Distributed state estimation for linear
  time-invariant dynamical systems: A review of theories and algorithms,''
  \emph{Chinese Journal of Aeronautics}, vol.~35, no.~6, pp. 1--17, 2022.

\bibitem{rao1991fully}
B.~S. Rao and H.~F. Durrant-Whyte, ``Fully decentralised algorithm for
  multisensor kalman filtering,'' in \emph{IEE Proceedings D-Control Theory and
  Applications}, vol. 138, no.~5.\hskip 1em plus 0.5em minus 0.4em\relax IET,
  1991, pp. 413--420.

\bibitem{olfati2005distributed}
R.~Olfati-Saber, ``Distributed kalman filter with embedded consensus filters,''
  in \emph{Proceedings of the 44th IEEE Conference on Decision and
  Control}.\hskip 1em plus 0.5em minus 0.4em\relax IEEE, 2005, pp. 8179--8184.

\bibitem{khan2011stability}
U.~A. Khan and A.~Jadbabaie, ``On the stability and optimality of distributed
  kalman filters with finite-time data fusion,'' in \emph{Proceedings of the
  2011 American Control Conference}.\hskip 1em plus 0.5em minus 0.4em\relax
  IEEE, 2011, pp. 3405--3410.

\bibitem{battistelli2014consensus}
G.~Battistelli, L.~Chisci, G.~Mugnai, A.~Farina, and A.~Graziano,
  ``Consensus-based linear and nonlinear filtering,'' \emph{IEEE Transactions
  on Automatic Control}, vol.~60, no.~5, pp. 1410--1415, 2014.

\bibitem{marelli2021distributed}
D.~Marelli, T.~Sui, and M.~Fu, ``Distributed kalman estimation with decoupled
  local filters,'' \emph{Automatica}, vol. 130, p. 109724, 2021.

\bibitem{hao2022distributed}
X.~Hao, Y.~Liang, and T.~Li, ``Distributed estimation for multi-subsystem with
  coupled constraints,'' \emph{IEEE Transactions on Signal Processing},
  vol.~70, pp. 1548--1559, 2022.

\bibitem{rego2016design}
F.~F. Rego, Y.~Pu, A.~Alessandretti, A.~P. Aguiar, A.~M. Pascoal, and C.~N.
  Jones, ``Design of a distributed quantized luenberger filter for bounded
  noise,'' in \emph{2016 American Control Conference (ACC)}.\hskip 1em plus
  0.5em minus 0.4em\relax IEEE, 2016, pp. 6393--6398.

\bibitem{park2016design}
S.~Park and N.~C. Martins, ``Design of distributed {LTI} observers for state
  omniscience,'' \emph{IEEE Transactions on Automatic Control}, vol.~62, no.~2,
  pp. 561--576, 2016.

\bibitem{ortega2020distributed}
R.~Ortega, E.~Nu{\~n}o, and A.~Bobtsov, ``Distributed observers for {LTI}
  systems with finite convergence time: A parameter-estimation-based
  approach,'' \emph{IEEE Transactions on Automatic Control}, vol.~66, no.~10,
  pp. 4967--4974, 2020.

\bibitem{mitra2018distributed}
A.~Mitra and S.~Sundaram, ``Distributed observers for {LTI} systems,''
  \emph{IEEE Transactions on Automatic Control}, vol.~63, no.~11, pp.
  3689--3704, 2018.

\bibitem{kim2019completely}
T.~Kim, C.~Lee, and H.~Shim, ``Completely decentralized design of distributed
  observer for linear systems,'' \emph{IEEE Transactions on Automatic Control},
  vol.~65, no.~11, pp. 4664--4678, 2019.

\bibitem{del2019distributed}
{\'A}.~R. del Nozal, P.~Mill{\'a}n, L.~Orihuela, A.~Seuret, and L.~Zaccarian,
  ``Distributed estimation based on multi-hop subspace decomposition,''
  \emph{Automatica}, vol.~99, pp. 213--220, 2019.

\bibitem{riverso2013plug}
S.~Riverso, M.~Farina, R.~Scattolini, and G.~Ferrari-Trecate, ``Plug-and-play
  distributed state estimation for linear systems,'' in \emph{52nd IEEE
  Conference on Decision and Control}.\hskip 1em plus 0.5em minus 0.4em\relax
  IEEE, 2013, pp. 4889--4894.

\bibitem{siljak2013decentralized}
D.~D. Siljak, \emph{Decentralized control of complex systems}.\hskip 1em plus
  0.5em minus 0.4em\relax Courier Corporation, 2013.

\bibitem{yang2023plug}
G.~Yang, A.~Barboni, H.~Rezaee, A.~Serrani, and T.~Parisini, ``Plug-and-play
  design for linear distributed observers,'' \emph{IFAC-PapersOnLine}, vol.~56,
  no.~2, pp. 10\,811--10\,816, 2023.

\bibitem{oh2012formation}
K.-K. Oh and H.-S. Ahn, ``Formation control of mobile agents based on
  distributed position estimation,'' \emph{IEEE Transactions on Automatic
  Control}, vol.~58, no.~3, pp. 737--742, 2012.

\bibitem{oh2013formation}
------, ``Formation control and network localization via orientation
  alignment,'' \emph{IEEE Transactions on Automatic Control}, vol.~59, no.~2,
  pp. 540--545, 2013.

\bibitem{zhu2019multi}
P.~Zhu and W.~Ren, ``Multi-robot joint localization and target tracking with
  local sensing and communication,'' in \emph{Proceedings of the 2019 American
  Control Conference (ACC)}.\hskip 1em plus 0.5em minus 0.4em\relax IEEE, 2019,
  pp. 3261--3266.

\bibitem{roumeliotis2002distributed}
S.~I. Roumeliotis and G.~A. Bekey, ``Distributed multirobot localization,''
  \emph{IEEE Transactions on Robotics and Automation}, vol.~18, no.~5, pp.
  781--795, 2002.

\bibitem{chang2021resilient}
T.-K. Chang, K.~Chen, and A.~Mehta, ``Resilient and consistent multirobot
  cooperative localization with covariance intersection,'' \emph{IEEE
  Transactions on Robotics}, vol.~38, no.~1, pp. 197--208, 2021.

\bibitem{ren2005consensus}
W.~Ren and R.~W. Beard, ``Consensus seeking in multiagent systems under
  dynamically changing interaction topologies,'' \emph{IEEE Transactions on
  Automatic Control}, vol.~50, no.~5, pp. 655--661, 2005.

\bibitem{al2011suboptimal}
A.~Al~Alam, A.~Gattami, and K.~H. Johansson, ``Suboptimal decentralized
  controller design for chain structures: Applications to vehicle formations,''
  in \emph{2011 50th IEEE Conference on Decision and Control and European
  Control Conference}.\hskip 1em plus 0.5em minus 0.4em\relax IEEE, 2011, pp.
  6894--6900.

\bibitem{huang2023plug}
S.~Huang, C.~Zhao, L.~Huang, P.~Cheng, J.~Wu, and L.~Cai, ``Plug-and-play
  distributed estimation of driving states in an open vehicle platoon,''
  \emph{IEEE Transactions on Industrial Informatics}, vol.~20, no.~4, pp.
  5328--5338, 2024.

\bibitem{wang2017distributed}
L.~Wang and A.~S. Morse, ``A distributed observer for a time-invariant linear
  system,'' \emph{IEEE Transactions on Automatic Control}, vol.~63, no.~7, pp.
  2123--2130, 2017.

\bibitem{cao2015leader}
W.~Cao, J.~Zhang, and W.~Ren, ``Leader--follower consensus of linear
  multi-agent systems with unknown external disturbances,'' \emph{Systems \&
  Control Letters}, vol.~82, pp. 64--70, 2015.

\bibitem{su2011cooperative}
Y.~Su and J.~Huang, ``Cooperative output regulation of linear multi-agent
  systems,'' \emph{IEEE Transactions on Automatic Control}, vol.~57, no.~4, pp.
  1062--1066, 2011.

\bibitem{pu2020push}
S.~Pu, W.~Shi, J.~Xu, and A.~Nedi{\'c}, ``Push--pull gradient methods for
  distributed optimization in networks,'' \emph{IEEE Transactions on Automatic
  Control}, vol.~66, no.~1, pp. 1--16, 2020.

\bibitem{biggs1993algebraic}
N.~Biggs, N.~L. Biggs, and B.~Norman, \emph{Algebraic graph theory}.\hskip 1em
  plus 0.5em minus 0.4em\relax Cambridge university press, 1993, no.~67.

\bibitem{horn2012matrix}
R.~A. Horn and C.~R. Johnson, \emph{Matrix analysis}.\hskip 1em plus 0.5em
  minus 0.4em\relax Cambridge university press, 2012.

\bibitem{mohar1991eigenvalues}
B.~Mohar, ``Eigenvalues, diameter, and mean distance in graphs,'' \emph{Graphs
  and combinatorics}, vol.~7, no.~1, pp. 53--64, 1991.

\bibitem{hespanha1999stability}
J.~P. Hespanha and A.~S. Morse, ``Stability of switched systems with average
  dwell-time,'' in \emph{Proceedings of the 38th IEEE conference on decision
  and control (Cat. No. 99CH36304)}, vol.~3.\hskip 1em plus 0.5em minus
  0.4em\relax IEEE, 1999, pp. 2655--2660.

\bibitem{bai2009instability}
H.~Bai and M.~Arcak, ``Instability mechanisms in cooperative control,''
  \emph{IEEE transactions on automatic control}, vol.~55, no.~1, pp. 258--263,
  2009.

\bibitem{chao2012lower}
H.~Chao and Y.~Xudong, ``Lower bound for the second smallest eigenvalue of
  directed rooted graph laplacian,'' in \emph{Proceedings of the 31st Chinese
  Control Conference}.\hskip 1em plus 0.5em minus 0.4em\relax IEEE, 2012, pp.
  5994--5997.

\bibitem{oh2015survey}
K.~K. Oh, M.~C. Park, and H.~S. Ahn, ``A survey of multi-agent formation
  control,'' \emph{Automatica}, vol.~53, pp. 424--440, 2015.

\end{thebibliography}

\end{document}